\newtheorem{theorem}{Theorem}[section]
\newtheorem{lemma}[theorem]{Lemma}
\newtheorem{corollary}[theorem]{Corollary}
\newtheorem{definition}{Definition}[section]
\newtheorem{proposition}[theorem]{Proposition}
\newtheorem{observation}[theorem]{Observation}
\newtheorem{claim}[theorem]{Claim}
\newtheorem{assumption}[theorem]{Assumption}
\newtheorem{reduction}{Reduction}
\newcommand{\defcal}[1]{\expandafter\newcommand\csname c#1\endcsname{{\mathcal{#1}}}}
\newcommand{\defbb}[1]{\expandafter\newcommand\csname b#1\endcsname{{\mathbb{#1}}}}
\newcounter{calBbCounter}
    \edef\letter{\Alph{calBbCounter}}
\newcommand{\ie}{{\it i.e.}}
\newcommand{\eg}{{\it e.g.}}
\newcommand{\DSS}[2]{{\cD_{#1}^+(#2)}}
\newcommand{\DS}[1]{{\cD^+(#1)}}
\newcommand{\DO}[1]{{D^*(#1)}}
\newcommand{\DSF}[1]{{\cD^+_{#1}}}
\newcommand{\maxm}{{\mathsf{m}_{\max}}}
\newcommand{\maxs}{{\mathsf{S}_{\max}}}
\newcommand{\opt}{{\mathsf{opt}}}
\DeclareMathOperator{\Var}{Var}
\DeclareMathOperator{\Cov}{Cov}
\newcommand{\poly}{{\mathtt{Poly}}}
\begin{document}

\newcommand {\ignore} [1] {}
\newcommand \xor{\oplus}
\newcommand \NC{{\cal NC}}
\newcommand \NP{{\cal NP}}
\newcommand \SSE{{SSE}}
\newcommand \APX{{\cal APX}}
\newcommand{\noticeable}{\mathop{\mathrm{noticeable}}\nolimits}
\newcommand{\negligible}{\mathop{\mathrm{negligible}}\nolimits}
\newcommand{\size}{\mathop{\mathrm{size}}\nolimits}
\newcommand{\argmax}{\mathop{\mathrm{argmax}}\nolimits}
\newcommand{\prob}{\mathop{\mathrm{Pr}}\nolimits}
\newcommand{\e}{\varepsilon}
\newcommand{\II}{{\cal I}}
\newcommand{\PPP}{{\cal P}}
\newcommand{\AAA}{{\cal A}}
\newcommand{\CCC}{{\cal C}}
\newcommand{\eqdef}{\stackrel{\mbox{\tiny{def}}}{=}}

\newcommand{\NN}{\mathbb{N}}
\newcommand{\RR}{\mathbb{R}}
\newcommand{\LL}{{\cal L}}
\newcommand{\Poly}{{\mathtt{Poly}}}
\newcommand{\nonnegR}{\mathbb{R}^+}

\title{\textbf{Building a Good Team: Secretary Problems and the Supermodular Degree}}

\author{
Moran Feldman\thanks{School of Computer and Communication Sciences, EPFL, Switzerland.
Email: \texttt{moran.feldman@epfl.ch}.}
\and
Rani Izsak\thanks{Department of Computer Science and Applied Mathematics, Weizmann Institute of Science, Israel.
Email: \texttt{ran.izsak@weizmann.ac.il}.}
}

\date{}

\maketitle

\begin{abstract}
In the (classical) \textsc{Secretary Problem}, one has to hire the best among $n$ candidates. The candidates are interviewed, one at a time, at a uniformly random order, 
and one has to decide on the spot, whether to hire a candidate or continue interviewing. It is well known that the best candidate can be hired with a probability of $1/e$ (Dynkin, 1963). 
Recent works extend this problem to settings in which multiple candidates can be hired, subject to some constraint. Here, one wishes to hire a set of candidates maximizing a given objective set function. 

Almost all extensions considered in the literature assume the objective set function is either linear or submodular. 
Unfortunately, real world functions might not have either of these properties. Consider, for example, a scenario where one hires researchers for a project. 
Indeed, it can be that some researchers can substitute others for that matter.
However, it can also be that some combinations of researchers result in synergy 
(see, \eg, Woolley et al., Science 2010, for a research about collective intelligence).
The first phenomenon can be modeled by a submoudlar set function, while the latter cannot.

In this work, we study the secretary problem with an {\em arbitrary} non-negative monotone valuation function, subject to a general matroid constraint. 
It is not difficult to prove that, generally, only very poor results can be obtained for this class of objective functions. 
We tackle this hardness by combining the following: 
(1) Parametrizing our algorithms by the {\em supermodular degree} of the objective function (defined by Feige and Izsak, ITCS 2013), which, roughly speaking, measures the distance of a function from being submodular. 
(2) Suggesting an (arguably) natural model that permits approximation guarantees that are {\em polynomial} in the supermodular degree (as opposed to the standard model which allows only {\em exponential} guarantees). 
Our algorithms learn the input by running a non-trivial estimation algorithm on a portion of it whose size depends on the supermodular degree.

We also provide better approximation guarantees for the special case of a uniform matroid constraint.
To the best of our knowledge, our results represent the first algorithms for a secretary problem handling arbitrary non-negative monotone valuation functions.
\end{abstract}

\pagenumbering{gobble}
\newpage
\pagenumbering{arabic}

\vspace{-0.1in}
\section{Introduction} \label{sec:introduction}
\vspace{-0.1in}
In the (classical) \textsc{Secretary Problem}, one has to hire a worker from a pool of $n$ candidates. The candidates arrive to an interview at a uniformly random order, and the algorithm must decide immediately and irrevocably, after interviewing a candidate, whether to hire him or continue interviewing. The objective is to hire the best candidate. It is well-known that the best candidate can be hired with a probability of $1/e$, and that this is asymptotically optimal~\cite{Dynkin}.

Recently, there has been an increased interest in variants of the secretary problem where more than a single candidate can be selected, subject to some constraint (\eg, a matroid constraint). Such variants have important applications in mechanism design 
(see, \eg,~\cite{AKW14,BIKK07,BIKK08,K05} and the references therein). 
When more than one candidate can be selected, there is a meaning to the values of {\em subsets} of candidates.
If one allows these values to be determined by an arbitrary non-negative monotone set function, then only exponentially competitive ratios (in the number of candidates) can be achieved, even subject to a simple cardinality constraint.%
\footnote{Intuitively, the bad example consists of a cardinality constraint allowing us to select only $k$ candidates, and an objective function assigning a strictly positive value only to sets containing $k$ specific candidates or more than $k$ candidates. In this case the algorithm has no room for mistakes, which leads to a very poor performance.}

In light of the above hardness, previous works have concentrated on restricted families of objective functions, such as linear and submodular functions
(see, \eg, \cite{BIK07,BHZ13,CL12,DP12,FSZ15,GRST10,L14} and a more thorough discussion in Section~\ref{sec:relatedWork}).
However, for many applications, the desired set function might admit complements, \ie, a group of candidates might exhibit synergy and contribute more as a group than the sum of the candidates' personal contributions (see also Woolley et al.~\cite{WCPHM10} for a research about collective intelligence). 
Such complements cannot be modeled by submodular (or linear) objectives. Dealing with complements in general results in unacceptable guarantees, as discussed above. 
However, what if one has a function which is submodular, except for a pair of candidates which are better to hire together? Can we guarantee anything for this case?

In this paper, we give a strong affirmative answer to this question.
Specifically, we give algorithms for secretary problems with {\em arbitrary} non-negative monotone objective functions, whose guarantees are proportional 
to the distance of the objective function from being submodular, as measured by the {\em supermodular degree} (defined by~\cite{FI13}).
Back to the above example, the pair of synergistic candidates results in an objective function with a supermodular degree of~1, and for such an objective our algorithms provide a constant competitive ratio for the problem of hiring a team of a given size.
Intuitively, the supermodular degree can be seen as measuring the number of candidates that any single candidate can have synergy with.
Our algorithms handle both the case of a cardinality constraint (demonstrated above) and the more general case of a matroid constraint. 
For a cardinality constraint, we obtain a constant competitive ratio when the supermodular degree is constant. 
For a (general) matroid constraint, our competitive ratios depend logarithmically on the rank of the matroid.%
\footnote{Note that, till a very short while ago, this was the case even for the state of the art algorithm for a {\em submodular} objective function (which corresponds to a supermodular degree of~0)~\cite{GRST10}. An improved algorithm with a competitive ratio of $O(\log \log k)$ (where $k$ is the rank of the matroid) has been recently given by~\cite{FZ15}.}
To the best of our knowledge, these are the first algorithms for the secretary problem with an arbitrary non-negative monotone objective set function.

\vspace{-0.1in}
\section{Preliminaries} \label{sec:preliminaries}
\vspace{-0.1in}

For completeness of the presentation, we give in this section a few relevant definitions from the literature (see, \eg, \cite{FeldmanIzsak14}). 
All set functions in this work are non-negative and (non-decreasing) monotone\footnote{A set function $f\colon 2^\cN \to \nonnegR$ is monotone if and only if $f(S) \leq f(T)$ whenever $S \subseteq T \subseteq \cN$.}. For readability, given a set $S \subseteq \cN$ and an element $u \in \cN$ we use $S + u$ to denote $S \cup \{u\}$ and $S - u$ to denote $S \setminus \{u\}$. 

\subsection{Matroids}
Given a ground set $\cN$, a pair $(\cN, \cI)$ is called a \textsf{matroid} if $\cI \subseteq 2^\cN$ obeys three properties:
\begin{compactenum}[(i)]
	\item $\cI$ is non-empty.
	\item $\cI$ is hereditary, \ie, $S \subseteq T \subseteq \cN$ and $T \in \cI$ imply $S \in \cI$.
	\item For every two sets $S, T \in \cI$ such that $|S| > |T|$, there exists an element $u \in S \setminus T$, such that $T + u \in \cI$. This property is called the \emph{augmentation property} of matroids.
\end{compactenum}

We say that a set $S \subseteq \cN$ is \emph{independent} if $S \in \cI$. The rank of a matroid is the size of the largest independent set in $\cI$. One important class of matroids, which is central to our work, is the class of \emph{uniform} matroids. A uniform matroid of rank $k$ is simply a cardinality constraint, \ie, a set $S \subseteq \cN$ is independent in such a matroid if and only if its cardinality is at most $k$.

\subsection{Supermodular degree}
The following standard definition is very handy.

\begin{definition} [Marginal set function] \label{def:marginal}
Let $f:2^{\cN} \to \nonnegR$ be a set function and let $u \in \cN$.
The \textsf{marginal set function} of $f$ with respect to $u$, denoted by $f(u \mid \cdot)$
is defined as $f(u \mid S) \eqdef f(S + u) - f(S)$.
When the underlying set function $f$ is clear from the context, we sometimes call $f(u \mid S)$ the \textsf{marginal contribution} of $u$ to the set $S$.
Similarly, for subsets $S, T \subseteq \cN$,
we use the notation $f(T \mid S) \eqdef f(S \cup T) - f(S)$.
\end{definition}

We can now give the definition of the supermodular degree (originally defined by~\cite{FI13}), which is used to parameterize our results.

\begin{definition} [Supermodular (dependency) degree] \label{def:supermodular}
The \textsf{supermodular degree of an element} $u \in \cN$ with respect to $f$ is defined as the cardinality of the set $\DSS{f}{u} = \{v \in \cN \mid \exists_{S \subseteq \cN} f(u \mid S + v) > f(u \mid S)\}$, containing all elements whose existence in a set might \emph{increase} the marginal contribution of $u$. $\DSS{f}{u}$ is called the \textsf{supermodular dependency set} of $u$ with respect to $f$, and we sometimes refer to the elements of $\DSS{f}{u}$ as \textsf{supermodular dependencies}.
The \textsf{supermodular degree of a function} $f$, denoted by $\DSF{f}$, is simply the maximum supermodular degree of any element $u \in \cN$. Formally, $\DSF{f} = \max_{u \in \cN} | \DSS{f}{u} |$.
When the underlying set function is clear from the context, we sometimes omit it from the notations.
\end{definition}

Note that $0 \leq \DSF{f} \leq n-1$ for any set function $f$. More specifically, $\DSF{f} = 0$ when $f$ is {\em submodular}, and becomes larger as $f$ deviates from submodularity. When the function $f$ is clear from the context, we use $d$ to denote $\DSF{f}$.

\subsection{Input representation}

In general, a set function might assign $2^n$ different values for the subsets of a ground set of size $n$.
Thus, not every set function has a succinct (\ie, polynomial in $n$) representation.
Therefore, it is a common practice to assume access to a set function via an oracle.
That is, an algorithm handling a set function often gets access to an oracle that answers queries about the function, instead of getting an explicit representation of the function. Arguably, the most basic type of an oracle is the \textsf{value oracle}, which, given any subset of the ground set, returns the value assigned to it by the set function. 
Formally:
\begin{definition} 
\textsf{Value oracle} of a set function $f:2^{\cN} \to \nonnegR$ is the following:
\\\textsf{Input:} A subset $S \subseteq \cN$.
\\\textsf{Output:} $f(S)$.
\end{definition}

Similarly, since in a given matroid the number of independent subsets might be, in general, exponential in the size of the ground set,
it is common to assume access to the following type of oracle.
\begin{definition} 
\textsf{Independence oracle} of a matroid $(\cN, \cI)$ is the following:
\\\textsf{Input:} A subset $S \subseteq \cN$.
\\\textsf{Output:} A Boolean value indicating whether $S \in \cI$.
\end{definition}

Additionally, in order to manipulate a function with respect to the supermodular degree,
one needs a way to determine the supermodular dependencies of a given element of the ground set.
An oracle for that purpose was introduced by~\cite{FI13}, and was later used also by~\cite{FeldmanIzsak14}.
Formally:
\begin{definition}
\textsf{Supermodular oracle} of a set function $f:2^{\cN} \to \nonnegR$ is the following:
\\\textsf{Input:} An element $u \in \cN$.
\\\textsf{Output:} The set $\DS{u}$ of the supermodular dependencies of $u$ with respect to $f$.
\end{definition}

The above oracles definitions are acceptable for offline algorithms. In online settings, these oracles have to be weakened and limited to return only information that we ``expect'' the algorithm to have. Some possible weakened versions can be found in previous work. Still, finding a set of weakened oracles that ``makes sense'' in the context of the supermodular degree is not trivial. Our model, including the weakened oracles that we use, appears in Section~\ref{sec:model}.

\subsection{Online algorithms}
Like standard {\em online} algorithms, the performance of a secretary algorithm is measured by the \textsf{competitive ratio}, which is the worst case ratio between the expected performance of the algorithm and the performance of an offline optimal algorithm. More formally, let $\cP$ be the set of possible instances, $OPT(P)$ be the value of the optimal solution for an instance $P \in \cP$ and $ALG(P)$ be the value of the algorithm's solution given the instance $P$. Then, the competitive ratio (for maximization problems) of the algorithm is given by:
\[
	\sup_{P \in \cP} \frac{OPT(P)}{\bE[ALG(P)]}
	\enspace,
\]
where the expectation is over the randomness of the algorithm and the arrival order of the input.

\subsection{Techniques}

Most algorithms for secretary problems start with a learning phase in which they reject all elements, and later, after accumulating some information about the input, they move to a phase in which they may accept elements. When the value of an element might positively depend on other $d$ elements, there might be a set of $d + 1$ elements such that every reasonable solution must contain this set. In this case, any reasonably good algorithm must terminate the learning phase, with a significant probability, before any element of this set arrives.

This means that the learning phase of our algorithms consists of only about $1/d$ of the input (where $d$ is the supermodular degree of the objective function), and thus, they rarely see all the dependencies of an element in the learning phase. Hence, by the end of the learning phase, our algorithms cannot calculate an optimal solution for the sub-problem represented by the part of the input seen thus far. However, we show that it is possible to {\em estimate} the value of the optimal solution based on the learning phase, and this estimation is crucial for the performance of our algorithms.

\vspace{-0.1in}
\section{Model and results} \label{sec:model}
\vspace{-0.1in}

Consider the following scenario.
A client enters a store, and wants to buy herself a new phone.
However, the client's main motive to buy this phone is a novel accessory which is not supported by her old phone.
Thus, the client asks the salesman to buy the phone bundled with the accessory.
Unfortunately, the accessory is not available at that time, because supply does not meet the overwhelming demand.
If the client insists on buying the phone only bundled with the accessory, then the salesman can offer her to buy a phone now and get the accessory next week when a new supply shipment arrives.

On the other hand, consider a slightly different scenario.
In this scenario, the client tells the salesman that she wants the phone together with some accessory, but does not tell him which accessory it is. The client then offers the following deal: the salesman will give her the phone now, and the client will pay when the unspecified accessory becomes available. Clearly no salesman can accept such an offer.

Our model (below) assumes the more realistic first scenario. That is, in case of complementarity, the ``bidder'' is required to announce the future elements she needs in order to get the maximum value from the current ``product''.

Formally, an instance of the monotone matroid secretary problem consists of a ground set $\cN$ of size $n$, a non-negative monotone set function $f: 2^{\cN} \to \nonnegR$ and a matroid $M = (\cN, \cI)$. The execution of an algorithm for this problem consists of $n$ \textsf{steps} (also referred to as \textsf{times}). In each step the following occurs:
\begin{compactitem}
\item One element of $\cN$ is \textsf{revealed} (arrives), at a uniformly random order (without repetitions).
\item The algorithm must decide whether to include the element in its output (irrevocably).
\end{compactitem}
The objective of the algorithm is to select an independent subset maximizing $f$. When making decisions, the algorithm has access to the value of $n$, the supermodular degree of $f$ and the following oracles. The first oracle is the independence oracle given above, which gives information about the constraint. The second oracle gives information about the objective function. This oracle is the counterpart of the value oracle defined above.


\begin{definition}
\textsf{Online marginal oracle} of a set function $f:2^{\cN} \to \nonnegR$ is the following:
\\\textsf{Input:} An element $u \in \cN$ {\em that has already been revealed} and a subset $S \subseteq \cN$.
\\\textsf{Output:} $f(u \mid S)$.
\end{definition}
Note that $S$ does not have to be fully (or even partially) revealed, but $u$ does have to. That is, we can only ask marginal queries for elements that have been revealed, but we can ask for their marginal value with respect to any subset. Our algorithms use the online marginal oracle only for the purpose of finding the best marginal that an element $u$ can have with respect to already accepted elements and subsets of $\DS{u}$. This use is consistent with the above motivation of our model.

The last oracle of our model returns the dependency sets of already revealed elements.

\begin{definition}
\textsf{Online supermodular oracle} of a set function $f:2^{\cN} \to \nonnegR$ is the following:
\\\textsf{Input:} An element $u \in \cN$ {\em that has already been revealed}.
\\\textsf{Output:} $\DSF{f}(u)$.
\end{definition}

Observe that the last oracle can, in fact, be implemented using the online marginal oracle, albeit using an exponential time complexity in $n$. However, this oracle is a natural online variant of the supermodular oracle, and thus, it emphasizes the relation between our model and previous work on the supermodular oracle. If time complexity is not a priority, as is often the case when analyzing online algorithms, then every use of this oracle can be replaced by an appropriate procedure using the online marginal oracle.

\paragraph{Additional Notation.} In the rest of this paper we use the notation $\cN_u$ to denote the set of elements revealed up to the point in which a given element $u \in \cN$ is revealed (\ie, $\cN_u$ contains $u$ and every other element revealed before $u$).

\subsection{Our results}
In this section we formally state our results for the model introduced above.
Our first result is for instances where the matroid $M$ is of rank $k \leq \DSF{f} + 1$. This setting is interesting for two reasons: it is closely related to the classical secretary problem, and our algorithm for it is used as a building block in our algorithms for other settings. 

\begin{theorem} \label{thm:smallRankMatroid}
There exists an $O(k \log k)=O(\DSF{f} \log \DSF{f})$-competitive algorithm for the monotone matroid secretary problem when the rank of the matroid constraint $M$ is $k \leq \DSF{f} + 1$.
\end{theorem}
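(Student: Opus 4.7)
The plan is to reduce the problem to capturing, with good probability, a single \emph{bundle}---an element together with some of its supermodular dependencies, all lying in one independent set---whose value is at least $f(OPT)/k$. The structural lemma is as follows. Fix any ordering $OPT = \{o_1, \dots, o_k\}$ and apply the telescoping identity $f(OPT) = \sum_{i=1}^{k} f(o_i \mid \{o_1, \dots, o_{i-1}\})$. Some summand is at least $f(OPT)/k$, and by reordering we may assume it corresponds to adding a specific $u^* \in OPT$ last, so $f(u^* \mid OPT - u^*) \geq f(OPT)/k$. Let $T^* = (OPT - u^*) \cap \DS{u^*}$. By the defining property of supermodular dependencies, elements outside $\DS{u^*}$ never strictly increase $u^*$'s marginal, so their removal from a set can only weakly increase it; hence $f(u^* \mid T^*) \geq f(u^* \mid OPT - u^*) \geq f(OPT)/k$. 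Non-negativity then yields $f(\{u^*\} \cup T^*) \geq f(OPT)/k$, while $\{u^*\} \cup T^* \subseteq OPT \in \cI$ and $|T^*| \leq d \leq k-1$.

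For the online algorithm, since $k \leq d+1$ the target bundle fits entirely inside a single independent set. I would sample a constant fraction of the input and compute, via the online marginal oracle, an estimator $V^*$ for the value of the best bundle formed by already revealed elements. In the remainder of the stream, accept the first revealed $u$ for which some $T \subseteq \DS{u} \cap \cN_u$ gives $\{u\} \cup T \in \cI$ and $f(\{u\} \cup T) \geq \tau$, where $\tau$ is scaled from $V^*$. A standard classical-secretary analysis centred on the ``pivot'' element $u^*$ shows that with constant probability the sample identifies a threshold comparable to $f(\{u^*\} \cup T^*) \geq f(OPT)/k$ while $u^*$ itself arrives in the selection phase. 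The factor $k$ in the competitive ratio is paid by the reduction to a single bundle; the additional $\log k$ arises from not knowing the correct scaling between $V^*$ and $f(OPT)/k$ exactly, and can be absorbed by randomising $\tau$ uniformly over a geometric sequence of $O(\log k)$ candidate values, so that some choice lies within a factor of two of the correct scaling.

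The main obstacle will be synchronising the acceptance of $u^*$ with its dependencies $T^*$: at the moment $u^*$ arrives, the elements of $T^*$ must already be \emph{accepted}, not merely revealed. Because the matroid rank is only $k$, the natural fix is to accept, throughout the sample and waiting phases, every revealed element whose inclusion maintains independence in $M$; this costs at most the rank budget $k$. The remaining argument is that, conditioned on the good arrival event, either the accepted set contains $T^*$ itself, or (via a matroid exchange argument together with the fact that non-dependencies of $u^*$ do not hurt $u^*$'s marginal) it contains an independent substitute yielding comparable bundle value when combined with $u^*$. Verifying that this speculative acceptance strategy composes cleanly with the $O(\log k)$ threshold randomisation, and that every oracle call made along the way respects the online marginal oracle's restriction to already-revealed elements, is where I expect the bulk of the technical work.
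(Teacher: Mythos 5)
Your structural step contains an error. From the telescoping identity you obtain some $o_i$ with $f(o_i \mid \{o_1,\dots,o_{i-1}\}) \geq f(OPT)/k$, but you cannot ``reorder so that $u^*$ comes last'': the marginal of an element with respect to all of $OPT - u^*$ can be far smaller than its marginal with respect to a prefix (take $f(S)=\min\{|S|,k-1\}$ and $|OPT|=k$; then $f(u \mid OPT-u)=0$ for every $u$ while $f(OPT)=k-1$), so the claimed inequality $f(u^*\mid OPT-u^*)\ge f(OPT)/k$ is false in general. What you actually need, and what is true, is the weaker statement that some bundle $\{u^*\}\cup T^*\subseteq OPT$ with $T^*\subseteq\DS{u^*}$ and $|T^*|\le d$ has value at least $f(OPT)/k$; this follows by intersecting the \emph{prefix} with $\DS{u^*}$. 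The paper's counterpart (Lemma~\ref{lem:there_is_good_prob}) telescopes $OPT$ in arrival order and lower bounds $\maxm(u,i_u)$, the best marginal of $u$ over independent sets of \emph{future} elements.

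The decisive gap is your capture mechanism. You require the dependencies to be already \emph{accepted} when $u^*$ arrives, and propose to achieve this by speculatively accepting every revealed element that preserves independence. With rank $k\le d+1$ this is fatal: the speculative phase typically exhausts the entire rank budget with worthless elements (in a uniform matroid the first $k$ arrivals fill it), acceptances are irrevocable, so no room remains for $u^*$ or for members of $T^*$ arriving later; and the ``independent substitute of comparable value'' you invoke does not exist in general, since matroid exchange preserves independence but says nothing about $f$ --- an exchanged element need not contribute anything to $u^*$'s marginal. The model is built precisely so that you never need dependencies accepted (or even revealed) in advance: the online marginal oracle evaluates $f(u\mid S)$ for arbitrary, possibly unrevealed $S$, and an algorithm may, upon taking $u$, commit to a declared set of \emph{future} dependencies and accept them as they arrive. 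The paper's algorithm does exactly this: it tracks $\maxm(u,i)$, the best marginal of $u$ with respect to independent sets of not-yet-arrived elements, accepts the first post-learning element that is ``top'' with respect to these quantities \emph{recomputed at the current time} (this recomputation neutralizes the bias that earlier elements have more future dependencies --- your revealed-dependency variant suffers the mirror-image bias), and then collects the declared future set on arrival; the $O(\log k)$ loss is paid by randomizing the learning-phase length over a geometric scale, analogous to your threshold randomization. Without reversing the acceptance order along these lines, your scheme does not yield the theorem.
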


The time complexity of the above algorithm (and all our other algorithms) is $\poly(n, 2^{\DSF{f}})$. The exponential dependence of the time complexity on $\DSF{f}$ is unavoidable even for {\em offline} algorithms and a uniform matroid constraint (see~\cite{FeldmanIzsak14} for more details). Our main result is given by the next theorem.

\begin{theorem} \label{t:general_nonpoly}
There exists an $O(\DSF{f}^3 \log \DSF{f} + \DSF{f}^2 \log k)$-competitive algorithm for the monotone matroid secretary problem.
\end{theorem}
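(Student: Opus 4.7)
The plan is to reduce the general monotone matroid secretary problem to the small-rank case already handled by Theorem~\ref{thm:smallRankMatroid}, via a value-based bucketing that costs $O(\log k)$ and a $\poly(d)$ overhead for handling supermodular dependencies that straddle the learning/decision boundary. Let $S^*$ denote an optimal independent set, and write $OPT = f(S^*)$. I would reserve an initial prefix of constant length as a pure learning phase, rejecting every element it contains, and use the online marginal and online supermodular oracles on the prefix to run an offline bundled-greedy estimator that produces a value $\hat V$ satisfying $\hat V = \Theta(OPT)/\poly(d)$ with constant probability. The subtlety is that a prefix element $u$ whose marginal contribution in $S^*$ depends on elements of $\DS{u}$ not yet revealed can be badly undervalued; the analysis couples the random arrival order on each $(d+1)$-element set $\{u\}\cup(\DS{u}\cap S^*)$ to show that with probability $\Omega(1/d)$ the whole set lands in the prefix whenever $u$ does, which is what makes bundled estimation possible.

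After the learning phase, I would partition $[\hat V/k,\hat V]$ into $O(\log k)$ geometrically spaced value buckets, classifying each revealed element $u$ by the best marginal it can achieve with respect to some $T\subseteq \DS{u}\cap\cN_u$ and the already-selected set. An averaging argument shows that some bucket $B_{i^*}$ carries an $\Omega(1/\log k)$ fraction of $OPT$; drawing $i^*$ uniformly at random in advance and restricting attention to $B_{i^*}$ costs an $O(\log k)$ factor. Inside the chosen bucket the objective behaves near-linearly up to an $O(d)$ bundling overhead, so the task reduces to selecting as many $B_{i^*}$-members as possible subject to the matroid $M$. For this residual problem I would apply a random coloring of $B_{i^*}$ with $O(d)$ colors and focus on a uniformly chosen color, tuned so that a constant fraction of an optimal $B_{i^*}$-independent subset survives inside that color while the matroid restricted to the color has rank at most $O(d+1)$. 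Theorem~\ref{thm:smallRankMatroid} then finishes the argument at an extra $O(d\log d)$ cost, and the product of the losses collapses to the stated $O(d^3\log d + d^2\log k)$.

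The main obstacle is the estimation step. A naive OPT-estimator computed on the prefix alone can be off by factors exponential in $d$, since the value of a single element may hinge on up to $d$ partners, each of which is, with constant probability, invisible during the learning phase. The technical heart of the proof is therefore to show that the bundled-greedy estimator recovers $OPT$ within a $\poly(d)$ factor, and to propagate this guarantee robustly through the bucket and random-color reductions so that the elements the online algorithm actually commits to realize $\hat V$ up to constant losses. This is also where the cubic dependence on $d$ in the final competitive ratio enters: one factor of $d$ pays for bundling inside the estimator, one for bundling inside the surviving bucket, and one for invoking the rank-$(d+1)$ subroutine on the colored sub-problem.
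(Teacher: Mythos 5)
Your estimation step, as described, would fail. You claim that, conditioned on $u$ landing in the learning prefix, the whole set $\{u\}\cup(\DS{u}\cap S^*)$ lands in the prefix with probability $\Omega(1/d)$. For a prefix that is a $c$-fraction of the input, this conditional probability is of order $c^{|\DS{u}\cap S^*|}$, i.e., exponentially small in $d$, and no choice of prefix length repairs this: a long prefix makes bundles visible but destroys the value left for the decision phase (the best value achievable among post-prefix elements can degrade like $(1-c)^{d+1}\cdot f(OPT)$, which is exactly why the paper samples a prefix of expected size only $n/(d+2)$, so that $(1-\frac{1}{d+2})^{d+1}=\Omega(1)$). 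The paper's estimator never requires the dependencies of a sampled element to be revealed: it runs an offline greedy over the sampled set $T$ in which the bundle $\cD_u\subseteq\DS{u}$ may consist of unrevealed elements (precisely what the online marginal oracle permits), and it proves that the resulting estimate $W$ is a $\poly(d)$-estimation with constant probability not by a bundle-into-prefix coupling, but by recasting the greedy as a stochastic process (the offline $W$-calculation), bounding its variance via Chebyshev, and combining this with the two structural lemmata $(d+1)\cdot L_{\ell}\ge f(A_{\ell})$ and $f(A_{\ell})+(d+1)\cdot L_{\ell}\ge f(OPT)$. Without a replacement for this idea your estimator has no $\poly(d)$ guarantee, and this is the technical heart of the theorem.

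The second step that fails is your rank reduction. Randomly coloring the chosen bucket with $O(d)$ colors and restricting the matroid to one color class does not yield a matroid of rank $O(d+1)$: already for a uniform matroid of rank $k$ on $n\gg dk$ elements, every color class still has rank $k$, so Theorem~\ref{thm:smallRankMatroid} cannot be invoked there. The paper does not reduce the general-rank case to the small-rank case at all; Theorem~\ref{thm:smallRankMatroid} is used only in the separate branch where a single element together with a subset of its dependencies already carries an $\Omega(1/d^2)$ fraction of $f(OPT)$ (so truncating the matroid to rank $d+1$ loses nothing essential), while the general-rank case is handled by the aided Algorithm~\ref{alg:general_aided}: a random geometric threshold over $O(\log(\alpha k))$ values together with a bundle-greedy, analyzed through the weight buckets $OPT_p$ (Lemmata~\ref{lem:sum_weights}--\ref{lem:generalMatroid_s_smaller}). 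This two-branch structure is also what produces the additive bound $O(d^3\log d + d^2\log k)$; your multiplicative cascade of losses (estimation times bucket selection times coloring times the rank-$(d+1)$ subroutine) would instead generate cross terms of the form $\poly(d)\cdot\log d\cdot\log k$ and does not collapse to the claimed ratio.
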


Interestingly, the above competitive ratio matches the, till recently, state of the art ratio of $O(\log k)$ by Gupta et al.~\cite{GRST10} for the case where $f$ is a submodular function, and extends it to any constant supermodular degree. For uniform matroids we have the following improved guarantee.

\begin{theorem} \label{t:uniform_nonpoly}
There exists an $O(\DSF{f}^3 \log \DSF{f})$-competitive algorithm for the monotone matroid secretary problem when the matroid $M$ is uniform.
\end{theorem}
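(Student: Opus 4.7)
The plan is to distinguish two cases according to the relation between $k$ and $d := \DSF{f}$. When $k \leq d + 1$, Theorem~\ref{thm:smallRankMatroid} already supplies an $O(k \log k) = O(d \log d) \subseteq O(d^3 \log d)$-competitive algorithm, and there is nothing more to do. I focus on the regime $k > d + 1$, for which I propose a single-threshold algorithm tailored to uniform matroids. The algorithm has a \emph{learning phase} consisting of the first $\Theta(n/d)$ arrivals, during which no element is accepted and the revealed elements are used to compute an estimate $\tilde{\opt}$ of $\opt$. It then enters an \emph{acceptance phase} with threshold $T := c \tilde{\opt} / k$ for a suitable constant $c$: when $u$ arrives, compute $g(u) := \max_{S \subseteq \DS{u} \cap \cN_u} f(u \mid A \cup S)$ using the online marginal oracle (exactly the query type the model permits), and accept $u$ into $A$ whenever $g(u) \geq T$ and $|A| < k$. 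The learning phase length $\Theta(n/d)$ is forced: a larger fraction would lose a constant fraction of the optimum $O$ to it, whereas a smaller one would preclude a reliable estimate.

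Producing $\tilde{\opt}$ is the delicate step. A random sample of density $\Theta(1/d)$ contains, in expectation, $\Theta(k/d)$ elements of $O$, and a bucketing lemma in the spirit of~\cite{FI13,FeldmanIzsak14} shows that any random $\Theta(1/d)$-fraction of $O$ realizes value $\Omega(\opt / d)$. Hence the optimum of the sub-instance induced by the learning sample, against a uniform matroid of rank $d+1$, is $\Omega(\opt/d)$ with constant probability. Applying the algorithm of Theorem~\ref{thm:smallRankMatroid} offline to this sub-instance then returns a set of value $\tilde{\opt} = \Omega(\opt / (d^2 \log d))$, which is plugged into the definition of $T$ above.

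The analysis of the acceptance phase combines three ingredients: (i) a constant fraction of the optimum elements $v \in O$ processed in the acceptance phase satisfy $g(v) \geq T$, via a standard averaging over $\sum_{v \in O} f(v \mid O_{<v}) = \opt$ together with the fact that $g(v)$ dominates single-subset marginals of $v$ over revealed dependencies; (ii) a secretary-style argument showing that with constant probability the algorithm either saturates its $k$ slots or accepts every remaining above-threshold element; and (iii) a charging lemma translating the virtual marginals $g(u)$ of accepted elements into a lower bound on $f(A)$, losing only an $O(d)$ factor. Combining these yields $\bE[f(A)] = \Omega(kT/d) = \Omega(\tilde{\opt}/d) = \Omega(\opt / (d^3 \log d))$. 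The main obstacle I anticipate is step (iii): unlike the submodular case in which $g(u) = f(u \mid A)$, here $g(u)$ can strictly exceed the true marginal realized inside $A$, and closing the gap requires a careful two-sided charging that exploits the bounded supermodular degree of each accepted element and of the elements on which it depends. This step is precisely what lets the uniform case avoid the $\log k$ factor of Theorem~\ref{t:general_nonpoly}, whose analogous analysis must handle rank bucketing.
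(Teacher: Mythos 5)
Your proposal diverges from the paper's proof in two places, and both are genuine gaps rather than just different routes.

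First, the estimation step. You assert that ``any random $\Theta(1/d)$-fraction of $O$ realizes value $\Omega(\opt/d)$'' and therefore that the optimum of the induced sub-instance is $\Omega(\opt/d)$. This is false for functions of supermodular degree $d$. Take $O$ partitioned into groups of size $d+1$, each group contributing equally to $\opt$ but only when it is \emph{entirely} present (so each element depends on the other $d$ in its group); this has supermodular degree $d$. A random $1/d$-sample contains a fixed group entirely with probability about $d^{-(d+1)}$, so for, say, $k = d^2$, with overwhelming probability \emph{no} group is complete in the sample and the optimum of the sub-instance is $0$. This is precisely why the paper's Lemma~\ref{lem:sampling} gives $p^{d+1}f(S)$ rather than $p\cdot f(S)$, and why the paper does \emph{not} estimate by solving the sub-instance. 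Instead it (i) splits cases on $m^*$, the largest single-element marginal over a dependency subset (if $m^*$ is a constant fraction of $\opt/(d+1)^2$, the truncated rank-$(d+1)$ algorithm of Theorem~\ref{thm:smallRankMatroid} already wins), and (ii) in the complementary case runs a greedy estimator whose iterations pick $u \in T$ but let $\cD_u \subseteq \DS{u}$ range over the \emph{whole} ground set, then bounds the resulting $W$ via a tailored concentration argument (Lemmata~\ref{lem:alg_gain}, \ref{lem:opt_loss}, Corollary~\ref{cor:bound_process}). Your split on $k \lessgtr d+1$ does not serve the same function, and your bucketing claim has no analogue in \cite{FI13,FeldmanIzsak14}.

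Second, the acceptance phase. You define $g(u) = \max_{S \subseteq \DS{u} \cap \cN_u} f(u\mid A \cup S)$ (dependencies already revealed) and accept only $u$. This breaks in the same hard-complementarity example: if $g(u) \geq T$ only because of a past dependency $v \notin A$ that is now gone, then $f(u\mid A)$ can be $0$, and no charging ``losing only an $O(d)$ factor'' can rescue this. The paper's model and Algorithm~\ref{alg:cardinality_aided_2} look \emph{forward}, at $\DS{u} \setminus \cN_u$, and accept the whole bundle $\DO{u}+u$; this is the load-bearing feature of the model (the ``announce the future elements you need'' scenario), and without it the anticipated step~(iii) of your analysis, which you flag as the obstacle, in fact does not close.
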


Note that the guarantee of Theorem~\ref{t:uniform_nonpoly} has no dependence on $k$, and thus, it yields a constant competitive ratio for a constant supermodular degree.

It is handy to assume that $f$ is normalized (\ie, $f(\varnothing) = 0$). 
Reduction~\ref{r:normalized} in Appendix~\ref{app:normalizedReduction} shows that this assumption is without loss of generality, and thus, we implicitly assume it in all our proofs.

\paragraph{Lower Bounds.}
Note that even the {\em offline} version of maximizing a function $f$ with respect to matroid constraint is $\NP$-hard to approximate within a guarantee of $\Omega(\ln {\DSF{f}} / \DSF{f})$
(see, \eg,~\cite{FeldmanIzsak14, HSS06}).
Moreover, for a uniform matroid constraint, it is $\SSE$-hard to achieve any constant\footnote{That is, a guarantee that does not depend on $\DSF{f}$.} approximation guarantee (even) in the offline setting~\cite{FeldmanIzsak14}.

\subsection{Related results} \label{sec:relatedWork}

\paragraph{Secretary problem.}
Many variants of the secretary problem have been considered throughout the years, and we mention here only those most relevant to this work. Under a cardinality constraint of $k$, Babaioff, Immorlica, Kempe and Kleinberg~\cite{BIKK07} and Kleinberg~\cite{K05} achieve two incomparable competitive ratios of $e$ and $1/[1-O(1/\sqrt{k})]$, respectively, for linear objective functions. For submodular objective functions, the best algorithms have a competitive ratio of $8e^2 \approx 59$ for the general case~\cite{BHZ13} and a competitive ratio of  $(e^2 + e)/(e - 1) \approx 5.88$ when the objective is also monotone~\cite{FNS11c}.

The matroid secretary problem considers a linear objective and a general matroid constraint. This variant was introduced by Babaioff, Immorlica and Kleinberg~\cite{BIK07}, who described an $O(\log k)$-competitive algorithm for it (where $k$ is the rank of the matroid) and conjectured the existence of an $O(1)$-competitive algorithm. Motivated by this conjecture, $O(1)$-competitive algorithms have been obtained for a wide variety of special classes of matroids including graphic matroids~\cite{BIK07,KP09}, transversal matroids~\cite{BIK07,DP12,KP09}, co-graphic matroids~\cite{S13b}, linear matroids with at most $k$ non-zero entries per column~\cite{S13b}, laminar matroids~\cite{IW11,JSZ13,MTW13}, regular matroids~\cite{DK14}, and some types of decomposable matroids, including max-flow min-cut matroids~\cite{DK14}. However, progress on the general case has been much slower. An $O(\sqrt{\log k})$-competitive algorithm was described by Chakraborty and Lachish~\cite{CL12}, and very recently two $O(\log \log k)$-competitive algorithms were given by Lachish~\cite{L14} and Feldman, Svensson and Zenklusen~\cite{FSZ15}.

The submodular variant of the matroid secretary problem was also considered. For general matroids~\cite{GRST10} gave an $O(\log k)$-competitive algorithm, and $O(1)$-competitive algorithms were described for special classes of matroids including partition matriods~\cite{BHZ13,FNS11c,GRST10} and transversal and laminar matroids~\cite{MTW13}. A recent work~\cite{FZ15} shows that any algorithm for the linear variant can be translated, with a limited loss in the competitive ratio, into an algorithm for the submodular variant. This implies an $O(1)$-competitive algorithm for the submodular variant under any class of matroids admiting such an algorithm for linear objectives, and an $O(\log \log k)$-competitive algorithm for general matroids. A secretary problem with an even more general family of objective functions was considerd by Bateni, Hajiaghayi and Zadimoghaddam~\cite{BHZ13} who proved an hardness result for the family of subadditive objective functions. Finally, variants of the matroid secretary problem which use a different arrival process or a non-adversarial assignment of element values were also considered~\cite{OvV13,JSZ13,S13b}.

\paragraph{Complexity measures of set functions.}
Complexity measures of set functions have been previously studied.
Abraham, Babaioff, Dughmi and Roughgarden \cite{ABDR12} studied the welfare maximization problem with respect to a complexity measure that gives greater values to set functions (\ie, mark them as more ``complex'') as their complementarity increases, in some sense.
This complexity measure is applicable only to a restricted class of set functions.
The notion of supermodular degree, which we use in this work, was introduced by~\cite{FI13}, again, with an application to the welfare maximization problem, and was later used for a more general appication by~\cite{FeldmanIzsak14}.
A stronger complexity measure for set functions was studied by \cite{MPH}. However, their results assume access to a demand oracle (see Blumrosen and Nisan~\cite{DemandQueries09}),
which is common in the context of combinatorial auctions, but, to the best of our knowledge, has not been used outside this world. 

\paragraph{Complements in online settings.}
A different online setting exhibiting complements can be found in the work of Emek, Halld\'{o}rsson, Mansour, Patt-Shamir, Radhakrishnan and Rawitz on online set packing~\cite{OSP2012}.

\vspace{-0.1in}
\section{Small rank matroids (Theorem~\ref{thm:smallRankMatroid})} \label{sec:small_rank}
\vspace{-0.1in}

In this section, we describe the main intuitive ideas behind the proof of Theorem~\ref{thm:smallRankMatroid}. 
The proof itself is deferred to Appendix~\ref{app:small_rank}. 
For simplicity, we assume in this section that $M$ is a uniform matroid of rank $d + 1$. 
The extension to general matroids and smaller ranks is quite straightforward.

A natural generalization of the algorithm for the classical secretary problem is the following algorithm. First, during the learning phase, reject the first $O(n/d)$ elements. From the remaining elements, take the first one whose marginal contribution with respect to some of its future supermodular dependencies (\ie, the elements that may increase its marginal contribution and the algorithm can still choose to take) is better than any such contribution inspected thus far.

It is not difficult to argue that the best marginal contribution seen by the above algorithm is always at least $f(OPT) / (d + 1)$. However, to get a competitive ratio guarantee, we need to show that the algorithm manages to pick this best contribution with a significant probability. One approach to proving this claim is by generalizing the analysis of the classical secretary algorithm to this more general algorithm. Such a generalization requires lower bounding the probability that the following two events occur (at the same time).

\begin{compactitem}
\item The element with best marginal contribution arrives after the learning phase.
\item The second best marginal contribution, up to the point where we see the best contribution, is seen during the learning phase.
\end{compactitem}

Unfortunately, it is difficult to bound the above probability due to the following phenomenon. The earlier an element arrives, the more future supermodular dependencies it has, and thus, the higher its corresponding marginal contribution. Hence, elements in the learning phase tend to have larger marginal contributions in comparison to elements appearing after the learning phase.

To overcome this issue, we modify the algorithm. Specifically, instead of comparing the marginal contribution of the current element to the marginal contributions seen thus far, we compare it to the marginal contributions that the elements seen thus far could have if they would have arrived at this time (instead of the time in which they have really arrived). A similar idea has been previously used by a work on the case of a submodular objective function~\cite{FNS11c}.

Additionally, to get the exact approximation ratio guaranteed by Theorem~\ref{thm:smallRankMatroid}, the algorithm has to use a random threshold from a logarithmic scale. This allows the analysis to assume that (with a significant probability) the learning phase takes about half of the time up to the point when the best contribution is observed by the algorithm.

\section{Estimation aided algorithms} \label{sec:algorithmsWithEstimation}

We say that a value $\opt_\alpha$ is an $\alpha$-estimation of an optimum solution $OPT$ if it obeys
$
	f(OPT) / \alpha
	\leq
	\opt_\alpha
	\leq
	f(OPT)
$.
We say that an algorithm is $\alpha$-\emph{aided} if it assumes getting an $\alpha$-estimation of the optimum as part of its input. 
In this section we describe an aided algorithm for the case of a general matroid constraint. An improved aided algorithm for the special case of a uniform matroid constraint can be found in Appendix~\ref{app:Aided_UniformMatroid}. 
In the next section we explain how to convert our aided algorithms into non-aided ones. 
Note that our aided algorithms work even under a model where the arrival order is determined by an adversary.
However, the randomness of the input is required for converting them into non-aided algorithms.

\begin{theorem} \label{thm:general_aided}
For every $\alpha \geq 1$, there exists an $\alpha$-aided $O(d^2 \log (\alpha k))$-competitive algorithm for the monotone matroid secretary problem.
\end{theorem}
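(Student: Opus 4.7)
My plan is the following. Given the $\alpha$-estimate $\opt_\alpha$, I would first discretize the useful range of marginal values. Since $\opt_\alpha \leq f(OPT) \leq \alpha \opt_\alpha$, the geometric thresholds $\tau_i = \opt_\alpha \cdot 2^{-i}/k$ for $i = 0, 1, \ldots, L$ with $L = \Theta(\log(\alpha k))$ cover the scale from $\opt_\alpha$ down to $O(\opt_\alpha/(\alpha k))$. The algorithm samples one threshold $\tau$ uniformly from this set at the start, and then processes the stream with that $\tau$ fixed. This randomization is the sole source of the $O(\log(\alpha k))$ factor in the competitive ratio.

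For each revealed element $u$, the algorithm uses the online supermodular and online marginal oracles to compute
\[
  M_u \;=\; \max_{T} f(u \mid S \cup T),
\]
where $S$ is the current set of already-accepted-or-reserved elements and $T$ ranges over subsets of $\DS{u} \setminus \cN_u$ such that $S \cup \{u\} \cup T \in \cI$. If $M_u \geq \tau$ and $S + u \in \cI$, the algorithm accepts $u$ and \emph{reserves} the maximizing $T$; a reserved element is accepted automatically upon arrival. This is the natural bundled version of the standard thresholding scheme used, for instance, by Gupta et~al.\ in the submodular case.

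For the analysis, fix any canonical ordering $o_1,\ldots,o_k$ of an optimum $OPT$ and set $\Delta_j = f(o_j \mid o_1,\ldots,o_{j-1})$, so that $\sum_j \Delta_j = f(OPT)$. Bucketing the $\Delta_j$'s according to the scale $\tau_i$ gives $L+1$ buckets and hence one bucket $i^\star$ of total mass $\Omega(f(OPT)/L)$; the random draw hits $\tau_{i^\star}$ with probability $1/(L+1)$, which is the $\log$ factor. Conditioned on this event I then need to recover $\Omega(1/d^2)$ of the mass in bucket $i^\star$. I plan to do this via a charging argument from elements of $OPT$ in bucket $i^\star$ to bundles accepted by the algorithm, invoking Theorem~\ref{thm:smallRankMatroid} as a black-box on rank-$(d+1)$ subproblems to arbitrate between competing bundles (contributing a factor of $O(d)$, with the $O(\log d)$ subsumed by $O(\log(\alpha k))$). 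A second factor of $O(d)$ arises because a single accepted bundle can reserve up to $d$ future elements, thereby potentially preempting $O(d)$ elements of $OPT$ that would otherwise have contributed.

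The main obstacle, I expect, is the second $d$-factor and the charging scheme it requires. Concretely, when an element $o_j$ of bucket $i^\star$ arrives and is \emph{not} accepted, its failure falls into one of three categories: (i) $o_j$ has been reserved by an earlier accepted bundle; (ii) the matroid test $S + o_j \in \cI$ fails; or (iii) the best bundle $M_{o_j}$ falls below $\tau$ because some element of the ``correct'' dependency set for $o_j$ has already arrived and is unavailable. I need to charge each such $o_j$ to an algorithm-accepted bundle of value $\Omega(\Delta_j/d)$, while ensuring each bundle is charged $O(d)$ times. Case~(i) is immediate since there are at most $d$ reserved elements per bundle, case~(ii) requires a matroid exchange argument against the bundles that saturate the rank, and case~(iii) uses the fact that each of the at most $d$ earlier-arriving dependencies either belongs to or was reserved by an accepted bundle. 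Getting these three cases to compose into a single $O(d)$ charging load, without introducing a $\log$-$d$ cross-term, is the calculation I would have to carry out carefully.
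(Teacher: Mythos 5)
Your algorithm is close to the paper's Algorithm~2 (greedy bundle acceptance against a geometrically sampled threshold), but the analysis you sketch departs from the paper's in ways that create real gaps.

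\textbf{Threshold range.} Your thresholds $\tau_i = \opt_\alpha \cdot 2^{-i}/k$ top out at $\opt_\alpha/k$, which in the worst case is only $f(OPT)/(\alpha k)$. The paper's thresholds go \emph{up to} about $\alpha\opt_\alpha/2 \approx f(OPT)/2$, i.e.\ its index $p$ ranges over $\{-\lceil\log_2 k\rceil - 3,\dotsc,\lceil\log_2\alpha\rceil\}$. Your entire grid is shifted down by a factor of roughly $\alpha$; you never test a threshold at the scale of the biggest optimal marginals, which is needed to prevent a flood of low-value bundles from saturating the matroid before a high-value element arrives.

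\textbf{Weight decomposition.} You decompose $f(OPT)$ via a fixed canonical order, $\Delta_j = f(o_j \mid o_1,\dotsc,o_{j-1})$. The paper instead uses the arrival-order-dependent weights $w(u) = f(u \mid OPT \setminus \cN_u)$, i.e.\ the marginal of $u$ with respect to the optimal elements arriving \emph{after} $u$. This choice is not cosmetic: the argument ultimately bounds $f(u \mid (O'\setminus\cN_u)\cup S) < \tau$ using the threshold test, and for this to chain into a lower bound $f(O') \ge \sum_{u} f(u\mid OPT\setminus\cN_u) = w(O'')$ one needs exactly the ``marginal against later-arriving $OPT$ elements'' decomposition. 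A canonical-order $\Delta_j$ has no such relationship to the threshold events the algorithm actually observes; this is the hole through which your case (iii) analysis falls. Nothing forces $M_{o_j}<\tau$ to be ``because a dependency already arrived''; it can fail simply because $f(o_j\mid S\cup T)$ is already small, and your three-way case split does not capture this.

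\textbf{No subroutine, no charging scheme.} The paper never invokes Theorem~\ref{thm:smallRankMatroid} inside this proof. Its argument is a clean dichotomy on $|S|$: if $|S|\ge |OPT_{p+3}|/(2(d+1))$, then $f(S)\ge |S|/(d+1)\cdot\tau \ge w(OPT_{p+3})/(32(d+1)^2)$ since the threshold is within a factor $16$ of the bucket weight; if $|S|$ is small, a single matroid-exchange argument extracts a set $O'\subseteq OPT_{p+3}\setminus S$ with $O'\cup S\in\cI$ containing all in-$OPT$ dependencies of at least half the bucket, and summing the threshold failures over $O'$ forces $f(S)\ge w(OPT_{p+3})/8$. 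The two $d$-factors come from ``each accepted bundle has size $\le d+1$'' and ``each element has $\le d$ dependencies,'' respectively. There is no per-element charging scheme, and no need to compose three failure modes. Using Theorem~\ref{thm:smallRankMatroid} as a black box would also smuggle in an extra $O(d\log d)$ factor, and the claim that the $\log d$ is subsumed by $\log(\alpha k)$ is unjustified: there is no relation forcing $d \le \poly(\alpha k)$.

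In short, your high-level plan (geometric thresholds, bundle acceptance, a $d^2$ loss split into two $d$-sources) is aimed in the right direction, but you need to (a) extend the threshold grid to include the range above $\opt_\alpha$, (b) switch to the arrival-order weights $w(u)=f(u\mid OPT\setminus\cN_u)$, and (c) replace the charging-plus-subroutine argument with the two-case bound on $f(S)$ described above. As written, the charging argument is acknowledged incomplete and depends on a premise (case iii characterization) that is false.
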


The algorithm we use to prove Theorem~\ref{thm:general_aided} is Algorithm~\ref{alg:general_aided}. A few of the ideas we use in Algorithm~\ref{alg:general_aided} and its analysis can be traced back to~\cite{BIK07}.

\begin{algorithm}[ht]
\caption{\textsf{$\alpha$-Aided Algorithm for General Matroids}} \label{alg:general_aided}
Let $p$ be a uniformly random integer from the set $\{-\lceil \log_2 k \rceil - 3, -\lceil \log_2 k \rceil - 2, \dotsc, \lceil \log_2 \alpha \rceil\}$.\\
Let $\tau \gets 2^p \cdot \frac{\opt_\alpha}{2}$.\\
Let $S \gets \varnothing$.\\
\For{every arriving element $u$}
{
	\If{there exits a set $\DO{u} \subseteq \DS{u} \setminus \cN_u$ such that $f(u \mid \DO{u} \cup S) \geq \tau$ and\\
	\hspace{2.5in} $S \cup \DO{u} + u \in \cI$}
	{
		Add $\DO{u} + u$ to $S$.
	}
}
\Return{$S$}.
\end{algorithm}

We define a weight $w(u)$ for every element $u \in OPT$ as follows: $w(u) = f(u \mid OPT \setminus \cN_u)$. For ease of notation, we extend $w$ to subsets of $OPT$ in the natural way. Let us denote $p_1 = -\lceil \log_2 k \rceil$ and $p_2 = \lceil \log_2 \alpha \rceil$. For every integer $p_1 \leq p \leq p_2$, we define a set (bucket) $OPT_p = \{u \in OPT \mid 2^p \cdot \frac{\opt_\alpha}{2} \leq w(u) \leq 2^p \cdot \opt_\alpha\}$.
Intuitively speaking, the following lemma shows that there is sufficient value in all the buckets together. The lemma holds since every element that does not get into any bucket must have a very low weight.

\begin{lemma} \label{lem:sum_weights}
$w\left(\bigcup_{p = p_1}^{p_2} OPT_p\right) \geq \frac{f(OPT)}{2}$.
\end{lemma}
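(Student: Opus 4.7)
My plan is to combine a telescoping identity for the weights $w$ with a simple bucketing argument. The key observation is that when one orders $OPT = \{u_1, u_2, \dotsc, u_{|OPT|}\}$ by arrival time, then $OPT \setminus \cN_{u_i} = \{u_{i+1}, \dotsc, u_{|OPT|}\}$, so
\[
w(u_i) = f(\{u_i, u_{i+1}, \dotsc, u_{|OPT|}\}) - f(\{u_{i+1}, \dotsc, u_{|OPT|}\}).
\]
Summing over $i$, successive terms cancel and (invoking the normalization $f(\varnothing) = 0$) we obtain the identity $\sum_{u \in OPT} w(u) = f(OPT)$. Hence it suffices to bound the weight of $OPT \setminus \bigcup_{p = p_1}^{p_2} OPT_p$ by $f(OPT)/2$.

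An element $u \in OPT$ lies outside every bucket exactly when either $w(u) < 2^{p_1 - 1} \cdot \opt_\alpha$ (too light) or $w(u) > 2^{p_2} \cdot \opt_\alpha$ (too heavy). The too-heavy case is vacuous: setting $T = OPT \setminus \cN_u$, monotonicity gives $w(u) = f(T + u) - f(T) \leq f(T + u) \leq f(OPT)$, and the choice $p_2 = \lceil \log_2 \alpha \rceil$ combined with $\opt_\alpha \geq f(OPT)/\alpha$ yields $2^{p_2} \cdot \opt_\alpha \geq \alpha \cdot \opt_\alpha \geq f(OPT)$, so no such element exists. For the too-light case, $p_1 = -\lceil \log_2 k \rceil$ implies $2^{p_1 - 1} \cdot \opt_\alpha \leq \opt_\alpha/(2k) \leq f(OPT)/(2k)$, and because $|OPT| \leq k$ (since $OPT$ is independent in a matroid of rank $k$), the total weight contributed by too-light elements is at most $k \cdot f(OPT)/(2k) = f(OPT)/2$.

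Putting the two pieces together gives
\[
w\Bigl(\bigcup_{p = p_1}^{p_2} OPT_p\Bigr) = \sum_{u \in OPT} w(u) - w(\text{too light}) - w(\text{too heavy}) \geq f(OPT) - \tfrac{f(OPT)}{2} - 0 = \tfrac{f(OPT)}{2}.
\]
I do not expect a genuine obstacle here; the argument is essentially a clean pigeonhole on the weight scale. The only subtlety is recognizing that the weights obtained from the arrival-time suffix of $OPT$ telescope to $f(OPT)$, and that the bucket endpoints $p_1, p_2$ have been calibrated precisely so that the lightest weights collectively miss at most half of $f(OPT)$ while the heaviest range is unreachable by any $w(u)$.
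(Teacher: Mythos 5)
Your proof is correct and follows essentially the same route as the paper's: both rest on the identity $w(OPT) = f(OPT)$ (which you justify via the arrival-order telescoping) and on bounding the weight of elements below the lowest bucket threshold by $k \cdot 2^{p_1}\opt_\alpha/2 \leq f(OPT)/2$. The only difference is presentational: you make explicit the telescoping and the observation that no element can exceed the top bucket (since $w(u) \leq f(OPT) \leq 2^{p_2}\opt_\alpha$), both of which the paper leaves implicit.
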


We defer the proof of Lemma~\ref{lem:sum_weights} and the other lemmata of this section to Appendix~\ref{app:missing_proofs}. Our next objective is to show that if Algorithm~\ref{alg:general_aided} selects a value $p$, then its gain is proportional to $w(OPT_{p + 3})$. Whenever $S$ appears below it denotes the output of the algorithm.

\begin{lemma} \label{lem:generalMatroid_s_larger}
If Algorithm~\ref{alg:general_aided} selects a value $p$ and $|S| \geq |OPT_{p + 3}| / [2(d + 1)]$, then $f(S) \geq w(OPT_{p + 3}) / [32(d + 1)^2]$.
\end{lemma}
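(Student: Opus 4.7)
My plan is to chain three estimates: a counting bound relating $|S|$ to the number of \emph{trigger} steps of the algorithm, a telescoping lower bound on $f(S)$ in terms of that count and the threshold $\tau$, and the observation that every weight in $OPT_{p+3}$ is at most $16\tau$. Combined with the hypothesis $|S| \geq |OPT_{p+3}|/[2(d+1)]$, these will yield the claimed bound.

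First I would enumerate the trigger elements---those whose arrival actually invokes the update ``add $\DO{u}+u$ to $S$''---as $u_1,\dotsc,u_m$ in arrival order, and set $S_0 = \varnothing$ and $S_i = S_{i-1}\cup \DO{u_i} \cup \{u_i\}$, so that $S_m = S$. Because $|\DO{u_i}| \leq d$ and $u_i \notin \DO{u_i}$, each trigger enlarges $S$ by at most $d+1$ elements, so $m \geq |S|/(d+1)$. I would also note in passing that a reserved (non-trigger) element $v$ cannot itself trigger later: when $v$ arrives it is already in $S$, making $f(v\mid \DO{v}\cup S) = 0 < \tau$, which keeps the enumeration clean.

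Next, I would bound each increment $f(S_i) - f(S_{i-1}) = f(\DO{u_i}+u_i \mid S_{i-1})$ using the standard marginal decomposition
\[
f(\DO{u_i}+u_i \mid S_{i-1}) \;=\; f(\DO{u_i}\mid S_{i-1}) \;+\; f(u_i \mid \DO{u_i}\cup S_{i-1}),
\]
discarding the first (non-negative) summand by monotonicity and bounding the second by $\tau$ via the trigger condition---noting that the current $S$ when $u_i$ is processed equals exactly $S_{i-1}$. Telescoping then gives $f(S) \geq m\tau \geq |S|\tau/(d+1)$. By definition of the bucket $OPT_{p+3}$ and of $\tau$, every $u \in OPT_{p+3}$ satisfies $w(u) \leq 2^{p+3}\opt_\alpha = 16\tau$, whence $w(OPT_{p+3}) \leq 16\tau\cdot|OPT_{p+3}|$; together with the hypothesis on $|S|$ this chains directly to $f(S) \geq w(OPT_{p+3})/[32(d+1)^2]$.

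The one step that will call for care is the marginal decomposition itself: the algorithm's acceptance criterion only bounds the marginal of $u_i$ \emph{given} that $\DO{u_i}$ is already present, whereas the physical increment in $f(S)$ at trigger $u_i$ is the joint marginal of $\DO{u_i}+u_i$ over $S_{i-1}$. The decomposition above is what lets monotonicity (the only property of $f$ beyond non-negativity that is available) bridge the two; everything else reduces to bookkeeping.
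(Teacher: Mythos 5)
Your proof is correct and follows essentially the same route as the paper: count at least $|S|/(d+1)$ trigger iterations, lower bound each increment by $\tau$ via monotonicity (the paper writes $f(\DO{u}+u\mid S_u)\geq f(u\mid \DO{u}\cup S_u)$, which is exactly your decomposition with the $f(\DO{u_i}\mid S_{i-1})$ term discarded), and then relate $\tau$ to $w(OPT_{p+3})$ through the bucket bound $w(u)\leq 16\tau$. The extra remark that an already-reserved element cannot retrigger is a nice bit of bookkeeping but is not needed for the count, which only uses $|S|\leq m(d+1)$.
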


Intuitively, the last lemma holds since a large $|S|$ means that the algorithm adds elements to $S$ in many iterations, and each iteration increases $f(S)$ by at least $\tau$.

\begin{lemma} \label{lem:generalMatroid_s_smaller}
If Algorithm~\ref{alg:general_aided} selects a value $p$ and $|S| < |OPT_{p + 3}| / [2(d + 1)]$, then $f(S) \geq w(OPT_{p + 3}) / 8$.
\end{lemma}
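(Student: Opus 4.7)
The plan is to show that when $|S|$ is small, most elements of $OPT_{p+3}$ must have been rejected by the algorithm, and for each rejected element the failure of the value condition forces enough value into $f(S)$ to reach the desired bound. I would start by splitting $OPT_{p+3} = A \sqcup B$, where $A = OPT_{p+3} \cap S$ and $B = OPT_{p+3} \setminus S$; the size hypothesis gives $|A| \leq |S| < |OPT_{p+3}|/[2(d+1)]$, hence $|B| \geq |OPT_{p+3}|(1 - 1/[2(d+1)])$, which is in particular at least $|OPT_{p+3}|/2$. Since each $u \in OPT_{p+3}$ has $w(u) \leq 16 \tau$, it suffices to produce a lower bound on $f(S)$ that is proportional to $|B| \cdot \tau$.

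For each $u \in B$, let $S_u \subseteq S$ be the state of $S$ right before $u$ arrives. Since $u$ was rejected, every $\DO{u} \subseteq \DS{u} \setminus \cN_u$ with $S_u + \DO{u} + u \in \cI$ satisfies $f(u \mid \DO{u} \cup S_u) < \tau$. I would then work with the canonical choice $\DO{u}^* = (OPT \cap \DS{u}) \setminus \cN_u$. Using that $v \notin \DS{u}$ implies $f(u \mid T + v) \leq f(u \mid T)$, removing the non-dependencies from $OPT \setminus \cN_u$ can only increase $u$'s marginal, yielding $f(u \mid \DO{u}^*) \geq w(u) \geq 8\tau$. The first technical step is to argue that $S_u + \DO{u}^* + u \in \cI$ (either always, or for enough elements of $B$): since $\DO{u}^* + u \subseteq OPT \in \cI$, $S_u \in \cI$, and $|S_u| \leq |S|$ is small, I would invoke a matroid-exchange argument, possibly replacing $\DO{u}^*$ by a suitable $OPT$-subset selected via augmentation, so that the rejection condition forces $f(u \mid \DO{u}^* \cup S_u) < \tau$.

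The main step is then to assemble these per-element bounds into $f(S) \geq w(OPT_{p+3})/8$. The gap $f(u \mid \DO{u}^*) - f(u \mid \DO{u}^* \cup S_u) \geq 7 \tau$ for each $u \in B$ measures how much of $u$'s would-be marginal has already been ``absorbed'' by $S_u$. My plan is to fix an ordering of $B$ by arrival time and telescope: $f(S \cup B') - f(S) = \sum_{u \in B} f(u \mid S \cup \text{(later } B \text{ elements)})$, bound each term using the per-element inequalities together with monotonicity of $f$ and the supermodular-dependency structure, and combine with the identity $\sum_{u \in OPT} w(u) = f(OPT)$ (which comes from telescoping $w$ along arrival order) to obtain $f(S) \geq |B|\cdot (\text{constant})\cdot \tau \geq w(OPT_{p+3})/8$.

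The hardest part, I expect, is step two: guaranteeing the matroid-feasibility of the chosen $\DO{u}^*$ uniformly over $u \in B$ (a careless choice may inflate $S$ in a way that breaks independence for later elements), and turning the per-element ``gap'' inequalities into a clean aggregate bound when the supermodular dependencies of different elements of $B$ may overlap with each other and with $S$. In particular, the argument must be careful enough to avoid losing a $d$-dependent factor that would ruin the constant $1/8$ in the statement.
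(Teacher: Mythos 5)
Your ingredients (rejection forces a threshold bound, weights in the bucket are comparable to $\tau$, telescoping over the unselected optimal elements) are the right ones, but the step you yourself flag as hardest is a genuine gap, and the per-element route you propose cannot be repaired locally. The canonical set $\DO{u}^* = (\DS{u}\cap OPT)\setminus \cN_u$ need not satisfy $S_u \cup \DO{u}^* + u \in \cI$ (the accepted set $S_u$ contains dependency elements of earlier accepted elements, which may clash in the matroid with $u$'s $OPT$-dependencies), and when feasibility fails, the fact that $u$ was rejected tells you nothing about $f(u \mid \DO{u}^*\cup S_u)$: the algorithm only certifies that \emph{feasible} pairs fall below $\tau$. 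The natural repair you suggest---shrinking $\DO{u}^*$ by a matroid-exchange argument until it is feasible with $S_u$---destroys the other half of your per-element inequality, because removing supermodular dependencies can strictly decrease the marginal, so the bound $f(u\mid \DO{u}) \geq w(u) \geq 8\tau$ is lost. Feasibility and the $8\tau$ lower bound pull in opposite directions, and in general no single per-element choice of $\DO{u}$ achieves both.

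The paper resolves this by decoupling the two uses of the dependencies, and this is the missing idea. It applies the augmentation property once, against the final output $S$, to obtain one global set $O' \subseteq OPT_{p+3}\setminus S$ with $|O'|\geq |OPT_{p+3}|-|S|$ and $O'\cup S\in\cI$. For the \emph{upper} bound it uses, for each $u\in O'$, the restricted dependency set $[\DS{u}\cap (O'\cup S)]\setminus\cN_u$, which is feasible together with $S_u$ by construction; rejection then gives $f(u\mid (O'\setminus\cN_u)\cup S)<\tau$, and telescoping yields $f(O'\cup S) < f(S) + |OPT_{p+3}|\cdot\tau \leq f(S)+w(OPT_{p+3})/8$. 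For the \emph{lower} bound it does not credit every rejected element with its weight; a counting argument (each of the at most $|S|$ elements of $OPT_{p+3}\setminus O'$ can, by symmetry of supermodular dependence, spoil at most $d$ other elements) shows that at least $|OPT_{p+3}|/2$ elements $u$ satisfy $\DS{u}\cap OPT + u\subseteq O'$, and only these are charged, giving $f(O')\geq w(OPT_{p+3})/4$. Note that this is precisely where the hypothesis $|S| < |OPT_{p+3}|/[2(d+1)]$ and the factor $d+1$ enter, which your sketch never uses explicitly; without this decoupling (or an equivalent mechanism), the telescoping in your final step has no valid per-element threshold inequalities to sum.
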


The main idea behind the proof of Lemma~\ref{lem:generalMatroid_s_smaller} is as follows. Since $|S|$ is small, many elements of $OPT_{p + 3} \setminus S$ could be added to it, together with their dependencies, without violating independence. The reason these elements were not added must be that they did not pass the threshold, which can only happen when $f(S)$ is large enough.

\begin{corollary} \label{cor:value_if_p}
If Algorithm~\ref{alg:general_aided} selects a value $p$, then $f(S) \geq w(OPT_{p + 3}) / [32(d + 1)^2]$.
\end{corollary}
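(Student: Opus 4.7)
The proof is a straightforward case analysis that combines Lemma~\ref{lem:generalMatroid_s_larger} and Lemma~\ref{lem:generalMatroid_s_smaller}, which between them cover all possibilities for $|S|$ once $p$ has been fixed. My plan is to condition on whether $|S| \geq |OPT_{p+3}| / [2(d+1)]$ or $|S| < |OPT_{p+3}| / [2(d+1)]$, apply the matching lemma in each case, and then verify that the weaker of the two resulting bounds is the desired one.

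In the first case, Lemma~\ref{lem:generalMatroid_s_larger} directly yields $f(S) \geq w(OPT_{p+3}) / [32(d+1)^2]$, which is exactly what we need. In the second case, Lemma~\ref{lem:generalMatroid_s_smaller} yields the stronger bound $f(S) \geq w(OPT_{p+3}) / 8$. To conclude, I only need to observe that since $d \geq 0$, we have $32(d+1)^2 \geq 32 > 8$, so $w(OPT_{p+3}) / 8 \geq w(OPT_{p+3}) / [32(d+1)^2]$; hence the target bound holds in the second case as well.

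There is no real obstacle here; the content is entirely in the two lemmata, and the corollary is simply their union, taking the weaker (and uniform) bound of the two. The only thing to be careful about is that both lemmata are conditional on the same event (the algorithm selecting the particular value $p$), so conditioning on that event and then splitting on the size of $|S|$ is legitimate without any additional probabilistic argument.
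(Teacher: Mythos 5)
Your proposal is correct and is exactly the argument the paper intends: the corollary is the immediate combination of Lemmata~\ref{lem:generalMatroid_s_larger} and~\ref{lem:generalMatroid_s_smaller}, with the second case's bound of $w(OPT_{p+3})/8$ subsumed by the weaker $w(OPT_{p+3})/[32(d+1)^2]$ since $w$ is non-negative and $32(d+1)^2 \geq 8$. Nothing further is needed.
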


We are now ready to prove Theorem~\ref{thm:general_aided}.

\begin{proof}[Proof of Theorem~\ref{thm:general_aided}]
Recall that every value $p$ is selected by Algorithm~\ref{alg:general_aided} with probability at least $(\log_2 \alpha + \log_2 k + 6)^{-1} = (\log_2 (\alpha k) + 6)^{-1}$. Hence, by Corollary~\ref{cor:value_if_p}, the expected value of the output of Algorithm~\ref{alg:general_aided} is at least:
\[
	\frac{1}{\log_2(\alpha k) + 6} \cdot \sum_{p = p_1}^{p_2} \frac{w(OPT_{p + 3})}{32(d + 1)^2}
	=
	\frac{w\left(\bigcup_{p = p_1}^{p_2} OPT_p\right)}{32(d + 1)^2 \cdot [\log_2(\alpha k) + 6]}
	\geq
	\frac{f(OPT)}{64(d + 1)^2 \cdot [\log_2(\alpha k) + 6]}
	\enspace,
\]
where the last inequality is due to Lemma~\ref{lem:sum_weights}.
\end{proof}

\section{Estimating the optimum: from aided to non-aided algorithms} \label{sec:estimation}
In this section, we show how to convert aided algorithms into non-aided ones. Together with our aided algorithms, the following theorem implies the results stated in Theorems~\ref{t:general_nonpoly} and~\ref{t:uniform_nonpoly}.
\begin{theorem} \label{thm:multiple_elements_approximation}
If there exists a $(80(d + 2)^2)$-aided $\beta$-competitive algorithm $ALG$ for the monotone matroid secretary problem with supermodular degree $d$, under a class $\cC$ of matroid constraints closed under restriction, then there also exists a non-aided $O(d^3\log d + \beta)$-competitive algorithm for the same problem.
\end{theorem}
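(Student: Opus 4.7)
My plan is to construct the non-aided algorithm by coupling two sub-algorithms via a fair coin flip. The first prong handles inputs in which a small independent set already carries much of the optimum's value; the second prong uses a learning phase to produce the required $\alpha$-estimate and then invokes $ALG$ on the remaining arrivals.

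For the first prong, with probability $1/2$ I would invoke the $O(d\log d)$-competitive algorithm of Theorem~\ref{thm:smallRankMatroid} applied to the truncation of $M$ to rank $d+1$. (Theorem~\ref{thm:smallRankMatroid} is stated for matroids of rank at most $d+1$, and its analysis depends only on this rank bound, not on membership in $\cC$.) This yields an independent set in $M$ of expected value $\Omega(f(T^*)/(d\log d))$, where $T^*$ denotes the maximum-value independent set of cardinality at most $d+1$.

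For the second prong, with probability $1/2$ I would declare the first $\lfloor n/2\rfloor$ arrivals to be a learning phase, observing them through the online oracles but selecting nothing. On the resulting random-ground-set instance I would run an offline $O(d)$-approximation algorithm for monotone $d$-supermodular matroid maximization (for instance, the greedy $d$-aware procedure) under $M$ restricted to the learning phase, which lies in $\cC$ by hypothesis. Letting $v_L$ be the value produced, I set $\opt_\alpha := v_L$ and then invoke $ALG$ on $M$ restricted to the complementary (execution-phase) elements, with estimate $\opt_\alpha$.

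To analyse the combination, write $\rho := f(OPT)/f(T^*)$. If $\rho \leq 80(d+2)^2$, the first prong alone already produces a solution of value $\Omega(f(OPT)/(d^3\log d))$, which accounts for the additive $O(d^3\log d)$ term. Otherwise $\rho$ is large, so $OPT$ is ``spread out'', and I plan to show that in this regime, with constant probability, the matroid restricted to the random $n/2$-sample supports an independent set of value $\Omega(f(OPT)/(d+2)^2)$; the offline $O(d)$-approximation run on the learning phase then produces $v_L$ within factor $O(d\,(d+2)^2) \leq 80(d+2)^2$ of the optimum of the restricted execution-phase instance, which is exactly the aided hypothesis needed by $ALG$. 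Invoking $ALG$ therefore yields expected value $\Omega(f(OPT)/\beta)$, and combining both prongs gives the claimed $O(d^3\log d + \beta)$-competitive bound. The main obstacle I anticipate is the ``spread-out'' sampling lemma: namely, that whenever $f(T^*) \leq f(OPT)/80(d+2)^2$, a uniformly random half of the ground set contains, with constant probability, an independent set of value $\Omega(f(OPT)/(d+2)^2)$. I expect to prove this via a bucket decomposition of $OPT$ in the spirit of Lemma~\ref{lem:sum_weights}, isolating a single bucket whose aggregated weight is $\Omega(f(OPT)/(d+2)^2)$, combined with a concentration bound showing that a constant fraction of this bucket's contribution survives the random halving, and finally a matroid augmentation argument that assembles the surviving elements into an independent set.
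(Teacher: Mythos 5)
Your first prong coincides with the paper's Algorithm~\ref{alg:multiple_elements_approximation} (apply Theorem~\ref{thm:smallRankMatroid} to the truncation), and your case distinction by $\rho = f(OPT)/f(T^*)$ is essentially equivalent to the paper's threshold on $m^*$, so that part is fine. The critical gap is in the second prong, and in particular in the ``spread-out sampling lemma'' that you flag as your main obstacle: that lemma is in fact \emph{false}, and the reason is precisely why the paper does \emph{not} use a half-and-half split. With a learning phase consisting of a uniformly random half of the arrivals, an element of $OPT$ together with its up to $d$ supermodular dependencies falls entirely into the learning half only with probability $2^{-(d+1)}$. Lemma~\ref{lem:sampling} gives $\bE[f(OPT \cap L)] \geq 2^{-(d+1)} f(OPT)$, and this is tight: take a cardinality constraint of rank $c(d+1)$ and an $f$ concentrated on $c$ disjoint ``clusters'' of $d+1$ synergistic elements, each cluster worth $f(OPT)/c$ and every proper subset of a cluster worth zero. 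Then $f(T^*) = f(OPT)/c$, so $c$ slightly above $80(d+2)^2$ puts you in your spread-out regime, yet $\bE[f(OPT(L))] = 2^{-(d+1)} f(OPT)$, exponentially smaller than the $\Omega(f(OPT)/(d+2)^2)$ you need. The same attenuation hits the execution-phase optimum $f(OPT(R))$, which is the quantity $ALG$ must actually compete against. This is exactly why the paper's second prong samples a $1/(d+2)$-fraction rather than a $1/2$-fraction of the arrivals as its ``learning'' set: each cluster then survives into the execution phase with probability $(1 - 1/(d+2))^{d+1} \geq e^{-1}$, a constant.

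There is a second, related gap in how you connect the estimate to what $ALG$ needs. You set $\opt_\alpha := v_L$ where $v_L$ is the output value of an offline $O(d)$-approximation on the learning half, but $ALG$ runs on the execution half and requires $\opt_\alpha$ to be sandwiched between $f(OPT(R))/(80(d+2)^2)$ and $f(OPT(R))$. Even granting concentration, $v_L$ approximates $f(OPT(L))$, and you never relate $f(OPT(L))$ to $f(OPT(R))$. The paper sidesteps this by computing an estimate $W$ via Algorithm~\ref{alg:offline_W}, a greedy pass that examines \emph{all} elements and accepts each with probability $1/(d+2)$; it bounds $W$ above and below in terms of $f(OPT)$ (Lemmata~\ref{lem:W_upper_bound} and~\ref{lem:W_lower_bound}, via the variance machinery of Section~\ref{sec:process}) and separately shows $f(OPT(R)) \in [f(OPT)/10, f(OPT)]$ with constant probability, so the single estimate is simultaneously valid for $f(OPT(R))$. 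Establishing this concentration is the real technical work of Appendix~\ref{app:FullProofSmallCaseNewEstimation}, and it cannot be replaced by simply running an off-the-shelf offline approximation on a random half of the input.
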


Recall that the truncation of a matroid $M = (\cN, \cI)$ to rank $k'$ is a matroid $M' = (\cN, \cI')$ where a set $S \subseteq \cN$ is independent in $M'$ if and only if $S \in \cI$ and $|S| \leq k'$.
The algorithm we use to prove Theorem~\ref{thm:multiple_elements_approximation} is Algorithm~\ref{alg:multiple_elements_approximation}.

\SetKwIF{With}{OtherwiseWith}{Otherwise}{with}{do}{otherwise with}{otherwise}{}
\SetKwFor{InfiniteRepeat}{repeat}{}{end}
\begin{algorithm}[H]
\DontPrintSemicolon
\caption{\textsf{Multiple Elements Estimation}} \label{alg:multiple_elements_approximation}
\With{probability $1/2$}
{
	Apply the algorithm guaranteed by Theorem~\ref{thm:smallRankMatroid} to the problem after truncating the matroid to rank $\min\{k, d + 1\}$ (where $k$ is the rank of the original matroid).\\
}
\Otherwise
{
	Choose $X$ according to the binomial distribution $B(n, (d + 2)^{-1})$, and let $T$ the set of the first $X$ elements revealed.\\
	Let $A \gets \varnothing$ and $W \gets 0$.\\
	\While{there exist $u \in T \setminus A$ and $\cD_u \subseteq \DS{u}$ s.t. $A \cup \cD_u + u \in \cI$}
	{ \label{line:loop}
		Find such a pair maximizing $f(u \mid A \cup \cD_u)$.\\	
			Increase $W \gets W + f(u \mid A \cup \cD_u)$.\\
			Update $A \gets A \cup \cD_u + u$.
	}
	Apply $ALG$ to the remaining elements with $\opt_{80(d + 2)^2} = W / 10$.\\
}
\end{algorithm}

Algorithm~\ref{alg:multiple_elements_approximation} consists of two parts, each executed with probability $1/2$. In order to prove Theorem~\ref{thm:multiple_elements_approximation} we show that for any instance of the monotone matroid secretary problem, one of the following cases is true:
\begin{itemize} \setlength{\itemsep}{0mm}
\item The algorithm guaranteed by Theorem~\ref{thm:smallRankMatroid} produces an $O(d^3 \log d)$-competitive solution for the non-truncated problem.
\item The second part of Algorithm~\ref{alg:multiple_elements_approximation} is $O(\beta)$-competitive.
\end{itemize}

To determine which of the above cases is true for every given instance we need some notation. Let $u^* \in \cN$ be an element maximizing
\[	\max_{\substack{S \subseteq \DS{u^*} \\ S + u^* \in \cI}} f(u^* \mid S)
	\enspace,
\]
and let $m^*$ and $S^*$ denote the value of this maximum and an arbitrary corresponding set $S$, respectively. It can be shown quite easily that the first case above holds when $m^* \geq f(OPT)/(256(d + 1)^2)$ (see Lemma~\ref{lem:big_multiple} for details). Thus, we sketch here only the more interesting part of the analysis, which is to show that the second above case holds when $m^* \leq f(OPT)/(256(d + 1)^2)$ (a full proof can be found in Appendix~\ref{app:FullProofSmallCaseNewEstimation}).
The main thing that we need to show is that $\opt_{80(d + 2)^2}$ is, with constant probability, a $80(d + 1)^2$-estimation for $f(OPT)$. For that purpose, we relate the expected value of the estimate $\opt_{80(d + 2)^2}$ to $f(OPT)$, and then bound the variance of this estimate to show that it is close enough to its expected value with constant probability.

In the rest of this section we assume Algorithm~\ref{alg:multiple_elements_approximation} executes its second part. Let $W_{\ell}$ and $A_{\ell}$ be $W$ and $A$, respectively, when Algorithm~\ref{alg:multiple_elements_approximation} exits its loop.
We bound $W_{\ell}$, which immediately implies bounds for $\opt_{80(d + 2)^2}$.
In order to achieve that, we switch our attention from Algorithm~\ref{alg:multiple_elements_approximation} to an offline algorithm (Algorithm~\ref{alg:offline_W}) producing exactly the same distribution for $W_{\ell}$.
We explain intuitively why the distributions of $W_{\ell}$ in both algorithms are the same.
Let us choose a set $T_{\text{pre}}$ ahead, exactly as $T$ is chosen by the online algorithm. 
Next, we modify the offline algorithm so that whenever it chooses an element from its set $T$, instead of randomly deciding whether to keep it in $T$, it queries membership in $T_{\text{pre}}$.
Clearly, since there are no repetitions in these queries, this does not affect the behavior of the offline algorithm. However, one can verify that the output of the offline algorithm is now identical to the output of the online algorithm when it selects $T = T_{\text{pre}}$. 

\begin{algorithm}[ht]
\DontPrintSemicolon
\caption{\textsf{Offline W Calculation}} \label{alg:offline_W}
Let $A \gets \varnothing$, $W \gets 0$ and $T \gets \cN$.\\
\While{there exist $u \in T \setminus A$ and $\cD_u \subseteq \DS{u}$ s.t. $A \cup \cD_u + u \in \cI$}
{
	Find such a pair maximizing $f(u \mid A \cup \cD_u)$.\\	
	
	\With{probability $(d + 2)^{-1}$}
	{
		Increase $W \gets W + f(u \mid A \cup \cD_u)$.\\
		Update $A \gets A \cup \cD_u + u$.
	}
	\lOtherwise{Update $T \gets T - u$.}
	
}
\end{algorithm}

Let $L_{\ell}$ be the sum of $f(u \mid A \cup \cD_u)$ for all the iterations done by Algorithm~\ref{alg:offline_W}, regardless of the random choice made by the algorithm.
We show how to lower bound the expectation of $L_{\ell}$ with respect to $f(OPT)$, and then use a bound on the variance of $W_{\ell}$ to get a concentration result for $W_{\ell}$.

\begin{lemma}
$(d + 1) \cdot L_{\ell} \geq f(A_{\ell})$.
\end{lemma}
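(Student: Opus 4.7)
The plan is to telescope $f(A_\ell)$ along a carefully chosen ordering of $A_\ell$ and charge each term to the max-marginal $M_j := f(u_j \mid A_j \cup \cD_{u_j})$ of some iteration~$j$ (successful or failed), so that no iteration is charged more than $d+1$ times. Let $j_1 < \dots < j_m$ be the successful iterations and $B_i := A_{j_i+1} \setminus A_{j_i} \subseteq \cD_{u_{j_i}} + u_{j_i}$, so $|B_i| \le d+1$. I will order $A_\ell$ by iteration, placing $u_{j_i}$ last within $B_i$ and the new dependencies in arbitrary order before it. Telescoping then reads $f(A_\ell) = \sum_{v \in A_\ell} f(v \mid X_v)$, where for $v \in B_i$ the predecessor set is $X_v = A_{j_i} \cup Y_{i,v}$ with $Y_{i,v} \subseteq \cD_{u_{j_i}} \setminus A_{j_i}$ the dependencies of $u_{j_i}$ that precede $v$ within $B_i$.

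I will next assign to each $v$ a witness iteration $j^*(v)$. If either $v = u_{j_i}$ or $v \in T_{j_i}$ I set $j^*(v) = j_i$ (``Case A''); otherwise $v$ was removed from $T$ at a unique earlier iteration $k(v) < j_i$ where $v = u_{k(v)}$ was the failed maximizer, and I set $j^*(v) = k(v)$ (``Case B''). Each successful iteration $j_i$ collects at most $|B_i| \le d+1$ Case A charges, while each failed iteration collects at most one Case B charge (its removed element $u_k$). Therefore, once I prove the per-element bound $f(v \mid X_v) \le M_{j^*(v)}$, I will have
$f(A_\ell) \le \sum_v M_{j^*(v)} \le (d+1) \sum_i M_{j_i} + \sum_{k \text{ failed}} M_k \le (d+1) L_\ell$.

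The single workhorse is the observation, immediate from the definition of $\DS{v}$, that if $y \notin \DS{v}$ then $f(v \mid S + y) \le f(v \mid S)$ for every $S$, so adding any non-dependency of $v$ to the conditioning set weakly decreases $v$'s marginal. In Case A1 ($v = u_{j_i}$), the ordering gives $X_v = A_{j_i} \cup \cD_{u_{j_i}}$ and thus $f(v \mid X_v) = M_{j_i}$ by definition. In Case A2, the pair $(v, Y_{i,v} \cap \DS{v})$ is feasible at iteration $j_i$ (independence follows from $A_{j_i} \cup (Y_{i,v} \cap \DS{v}) + v \subseteq A_{j_i} \cup \cD_{u_{j_i}} + u_{j_i} \in \cI$ via matroid heredity), so by maximality $f(v \mid A_{j_i} \cup (Y_{i,v} \cap \DS{v})) \le M_{j_i}$; absorbing the non-dependencies $Y_{i,v} \setminus \DS{v}$ then yields $f(v \mid X_v) \le M_{j_i}$.

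The hard part is Case B, since at iteration $j_i$ the element $v$ is no longer in $T$ and maximality at $j_i$ cannot be invoked for~$v$. The remedy is to apply maximality at iteration $k(v)$ instead, using the pair $(v, \cD^*)$ with $\cD^* := (A_{j_i} \cup Y_{i,v}) \cap \DS{v}$; feasibility follows from $A_{k(v)} \subseteq A_{j_i}$ and $\cD^* \cup \{v\} \subseteq A_{j_i} \cup \cD_{u_{j_i}}$, giving $A_{k(v)} \cup \cD^* + v \subseteq A_{j_i} \cup \cD_{u_{j_i}} + u_{j_i} \in \cI$. Maximality yields $f(v \mid A_{k(v)} \cup \cD^*) \le M_{k(v)}$. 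The crucial final check is that every element of $(A_{j_i} \cup Y_{i,v}) \setminus (A_{k(v)} \cup \cD^*)$ lies outside $\DS{v}$: any element of $A_{j_i} \cup Y_{i,v}$ that is in $\DS{v}$ already lies in $\cD^*$ by definition, and $A_{k(v)} \cap \DS{v} \subseteq A_{j_i} \cap \DS{v} \subseteq \cD^*$. Applying the non-dependency observation to all these elements gives $f(v \mid X_v) \le f(v \mid A_{k(v)} \cup \cD^*) \le M_{k(v)}$, completing the proof.
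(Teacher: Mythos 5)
Your proof is correct and takes essentially the same route as the paper's full proof of this claim (Lemma~\ref{lem:alg_gain}): both telescope $f(A_\ell)$ along the iteration order and charge each newly added element either to the max-marginal of the iteration that added it (if it was still in $T$ at that point) or to the max-marginal of the earlier iteration that removed it from $T$, in each case using matroid heredity for feasibility and the observation that adjoining non-dependencies cannot increase a marginal. The only presentational difference is that the paper packages this as an induction maintaining the slightly stronger invariant $f(\hat A_i) \le (d+1)\sum_{\hat u_j \in \hat A_i} f(\hat u_j \mid \hat A_{j-1}\cup \cD_{\hat u_j})$, whereas you carry out the telescope and the witness-charging in one global pass.
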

\begin{proof} [Brief sketch of proof]
The proof is by induction on the number of iterations. Assume the lemma is true for $i-1$ iterations, and let us prove it for iteration $i$.
Trivially, if Algorithm~\ref{alg:offline_W} randomly chooses not to add elements to $A$, then the lemma is true for iteration $i$ as well.
Now, assume the random choice made is to add the elements to $A$.
Note that only up to $d+1$ elements are added to $A$, and therefore, it is sufficient to show that the marginal value of each is upper bounded by the increase in the value of $L$.
Let $u$ be the element chosen by the algorithm. Any dependency of $u$ that appears in $T$ could also be chosen instead of $u$, so its marginal is upper bounded by the increase in the value of $L$ (since the algorithm uses a greedy choice).
On the other hand, any dependency $u' \not \in T$ must have been removed when chosen by the algorithm in a previous iteration. Therefore, its marginal value, computed with respect to its optimal dependencies in this previous iteration, is already counted by $L_{\ell}$. Note that the last marginal value must be at least as large as the marginal value of $u'$ with respect to the optimal dependencies in the current iteration (by definition of supermodular dependencies).
\end{proof}

\begin{lemma}
$f(A_{\ell}) + (d + 1) \cdot L_{\ell} \geq f(OPT)$.
\end{lemma}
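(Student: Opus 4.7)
By monotonicity, $f(OPT) \leq f(OPT \cup A_{\ell}) = f(A_{\ell}) + f(OPT \setminus A_{\ell} \mid A_{\ell})$, so the goal reduces to showing $f(OPT \setminus A_{\ell} \mid A_{\ell}) \leq (d + 1) L_{\ell}$. I will bound this marginal by expanding it into a telescoping sum and charging each term to a specific iteration of Algorithm~\ref{alg:offline_W}, building on the machinery of the previous lemma.

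\textbf{Telescoping and case split.} Enumerate $OPT \setminus A_{\ell} = \{v_1, \ldots, v_m\}$ in an order to be chosen based on the algorithm's execution, yielding the identity $f(OPT \cup A_{\ell}) - f(A_{\ell}) = \sum_j f(v_j \mid A_{\ell} \cup \{v_1, \ldots, v_{j-1}\})$. By the definition of supermodular degree, each summand is bounded by $f(v_j \mid X_j)$ for some $X_j \subseteq \DS{v_j}$ of size at most $d$. Each $v_j$ falls into one of two classes: \textbf{(R)} \emph{removed}, meaning $v_j$ was the greedy choice at some iteration $i(j)$ but the ``otherwise'' branch triggered and $v_j$ was deleted from $T$; or \textbf{(T)} \emph{blocked}, meaning $v_j \in T_{\ell}$ at termination, so the loop-exit condition forces $A_{\ell} + v_j \notin \cI$.

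\textbf{Charging.} For a class (R) element $v_j$, the greedy value $\Delta_{i(j)}$ was added to $L$ at iteration $i(j)$; I plan to bound $f(v_j \mid X_j) \leq \Delta_{i(j)}$ by invoking the monotonicity-of-greedy-marginals argument used in the previous lemma (as $A$ grows and further non-dependencies of $v_j$ are added, the best marginal of $v_j$ over subsets of $\DS{v_j}$ can only shrink), and charge $v_j$ to iteration $i(j)$. For a class (T) element $v_j$, since $A_{\ell} + v_j \notin \cI$ and both $A_{\ell}$ and $OPT$ are independent, a Hall-type matroid-exchange argument applied to circuits $C(v_j, A_{\ell})$ yields an injection $\phi$ from the class (T) elements into $A_{\ell}$; I charge $v_j$ to the iteration $i$ at which $\phi(v_j)$ was added to $A$, bounding the charge by $\Delta_i$ via the fact that $v_j$ was still a valid greedy candidate at iteration $i$ (since $v_j$ remained in $T$ throughout) combined with the supermodular property.

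\textbf{Main obstacle.} The principal difficulty is verifying the per-iteration budget: each iteration $i$ contributes $\Delta_i$ to $L_{\ell}$ and hence $(d+1)\Delta_i$ to the total budget $(d+1)L_\ell$, and I need the total charge to iteration $i$ to stay within this budget. The ``otherwise'' iterations receive at most one (R)-charge of magnitude $\Delta_i$, and the ``add'' iterations receive at most $d+1$ (T)-charges (since $\phi$ is injective and at most $d+1$ elements are added to $A$ per iteration), each of magnitude at most $\Delta_i$. The delicate part will be confirming the bound on individual (T)-charges despite potential non-dependency ``killers'' that $A_{i-1}$ may contain for $v_j$ (which would let $f(v_j \mid X_j)$ exceed $f(v_j \mid A_{i-1} \cup \cD)$ for valid greedy dependency sets $\cD$), and simultaneously establishing the matroid injection $\phi$ so that the two charging rules do not collide. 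I expect the resolution to mirror and generalize the argument for the previous lemma, using the fact that $X_j$ can always be chosen to lie in a set whose union with $A_{i-1}$ is independent (because $OPT$ itself is independent).
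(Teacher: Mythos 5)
Your overall plan—telescoping $f(OPT\cup A_\ell)-f(A_\ell)$ over $OPT\setminus A_\ell$ and charging each term to an iteration of Algorithm~\ref{alg:offline_W}—is genuinely different from the paper's, and it has a real gap that the paper's argument is specifically structured to avoid. The paper (full proof in Appendix~\ref{app:FullProofSmallCaseNewEstimation}, Lemma~\ref{lem:opt_loss} combined with Observation~\ref{obs:equality}) proves the inequality by induction on the iteration $i$, maintaining a \emph{hybrid} set $OPT_i$ that is at every step independent, contains $\hat{A}_i$, and differs from it only in $T$-elements. Every marginal it compares against the greedy value $\Delta_i = f(\hat u_i\mid \hat A_{i-1}\cup\cD_{\hat u_i})$ is computed with a base set that is a subset of $OPT_{i-1}$; since $OPT_{i-1}\in\cI$ and $\hat A_{i-1}\subseteq OPT_{i-1}$, independence of $\hat A_{i-1}\cup\cD'+v$ is automatic for the candidate pairs it uses, so the greedy bound applies legitimately.

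In your version the base set of each telescoping term is $A_\ell\cup\{v_1,\ldots,v_{j-1}\}\subseteq A_\ell\cup OPT$, which is not independent. The only set $X_j\subseteq\DS{v_j}$ for which the supermodular cut-down $f(v_j\mid A_\ell\cup\{v_1,\ldots,v_{j-1}\})\le f(v_j\mid A_{i(j)-1}\cup X_j)$ holds is (essentially) $X_j = \big((A_\ell\cup\{v_1,\ldots,v_{j-1}\})\setminus A_{i(j)-1}\big)\cap\DS{v_j}$; but this set can contain elements of $A_\ell\setminus OPT$ lying in the span of $A_{i(j)-1}+v_j$, so there is no reason $A_{i(j)-1}\cup X_j + v_j$ is independent, hence no reason the greedy maximization at iteration $i(j)$ ever considered it. You cannot shrink $X_j$ to repair independence either, since for $v\in\DS{v_j}$ removing $v$ from the base set may \emph{decrease} the marginal ($\DS{v_j}$ only records that some $S$ exists with $f(v_j\mid S+v)>f(v_j\mid S)$, not that this holds for all $S$), so the first inequality is lost. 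The same obstruction hits the (T)-charges: ``$v_j$ remained in $T$'' does not give $A_{i-1}\cup X_j + v_j\in\cI$ for the $X_j$ you actually need to bound, and the existence/properties of the matroid injection $\phi$ are asserted rather than established. Finally, the ordering of $\{v_1,\ldots,v_m\}$ is left unspecified even though the whole charging scheme depends on it. The fix is precisely the hybrid-solution induction the paper uses: by never leaving the ambient independent set $OPT_{i-1}$, it guarantees that every candidate pair fed to the greedy comparison is feasible, and the per-step loss is bounded by $(d+1)\Delta_i$.
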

\begin{proof} [Brief sketch of proof]
We consider a hybrid solution starting as $OPT$ and ending as $A_{\ell}$. We use the matroid augmentation property to observe that, when a new element of $A$ is added to this hybrid solution, no more than $d+1$ non $A$ elements have to be removed to restore independence. Then, we bound the ``damage'' resulting from the removal of these elements using the greedy choice of the algorithm and the definition of supermodular dependencies.
Finally, we observe that the value of $A_{\ell}$ cannot be increased by adding elements that are still in $T$, since, otherwise, the algorithm would have done that.
This means that $A_{\ell}$ itself is as good as the final hybrid (which contains it).
\end{proof}
An immediate corollary of the last two lemmata is a lower bound of $f(OPT) / (2(d + 1))$ on $L_{\ell}$.
This bound, together with a concentration result we prove in Appendix~\ref{app:FullProofSmallCaseNewEstimation}, shows that $W_{\ell} \geq  f(OPT) / O(d^2)$ with constant probability.
The last inequality implies with constant probability, when $m^* \leq f(OPT)/(256(d + 1)^2)$, that $\opt_{80(d + 2)^2} = W / 10$ is indeed a $(80(d + 2)^2)$-estimation for the optimum of the part of the input that was not read by the estimation algorithm (\ie, the input for the aided algorithm). The competitive ratio of the second part of Algorithm~\ref{alg:multiple_elements_approximation} then follows from the competitive ratio of the aided algorithm.

\paragraph{Acknowledgment.}
We are grateful to Uri Feige for valuable discussions.

\apptocmd{\sloppy}{\hbadness 10000\relax}{}{}
\bibliographystyle{plain}
\bibliography{supermodular}

\appendix

\vspace{-0.1in}
\section{Assuming our set functions are normalized is without loss of generality} \label{app:normalizedReduction}
\vspace{-0.1in}

\begin{reduction} \label{r:normalized}
If $ALG$ is an $\alpha$-competitive algorithm for the monotone matroid secretary problem under the assumption that $f$ is normalized, then $ALG$ is an $\alpha$-competitive algorithm also without this assumption.
\end{reduction}
\begin{proof}
Let $g\colon 2^\cN \to \nonnegR$ be the function $g(S) = f(S) - f(\varnothing)$. Notice that $g$ is a non-negative monotone function and $\DSF{f} = \DSF{g}$. Moreover, all the oracles that an algorithm for the monotone matroid secretary problem has access to return the same answers for both $f$ and $g$, and thus, the algorithm produces a random set $S$ with the same distribution when given either $f$ or $g$ as input. Since $g$ is normalized, by the definition of $ALG$:
\[
	\bE[g(S)]
	\geq
	\frac{g(OPT)}{\alpha}
	\enspace,
\]
where $OPT$ is a set maximizing $g$ (and $f$). Thus:
\[
	\bE[f(S)]
	=
	\bE[g(S)] + f(\varnothing)
	\geq
	\frac{g(OPT)}{\alpha} + f(\varnothing)
	\geq
	\frac{f(OPT)}{\alpha}
	\enspace.
	\qedhere
\]
\end{proof}

\vspace{-0.1in}
\section{Small rank matroids (Proof of Theorem~\ref{thm:smallRankMatroid})} \label{app:small_rank}
\vspace{-0.1in}

In this proof we need some additional notation. The max-marginal of an element $u$ at time $i$ is the largest marginal value that $u$ can contribute to a subset of the elements that arrive after time $i$ (while keeping the subset independent). More formally, let $\cN_i$ be the (random) set of the first $i$ elements that arrived, then the max-marginal of an element $u$ at time $i$ is:
\[
	\maxm(u, i) = \max_{\substack{S \subseteq \cN \setminus \cN_i \\ S + u \in \cI}} f(u \mid S)
	\enspace.
\]
We also use $\maxs(u, i)$ to denote an arbitrary set for which the maximum is obtained. Note that one can calculate both $\maxm(u, i)$ and $\maxs(u, i)$ in $O(2^d)$ time. We begin the proof with the following simple claim.

\begin{claim} \label{c:n_good}
We may assume that $n$ is dividable by any quantity $h$ whose value is polynomial in $k$.
\end{claim}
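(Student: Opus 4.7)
The plan is a standard padding reduction: augment the instance with dummy elements so that the ground set size becomes divisible by $h$, while preserving the optimum, the supermodular degree, and the algorithm's behavior. Concretely, I would set $r := (-n) \bmod h \in \{0, 1, \dotsc, h - 1\}$, introduce $r$ fresh dummies $z_1, \dotsc, z_r$, and form $\cN' := \cN \cup \{z_1, \dotsc, z_r\}$ so that $|\cN'| = n + r$ is divisible by $h$. The objective is extended to $f' \colon 2^{\cN'} \to \nonnegR$ by $f'(S) := f(S \cap \cN)$, and the matroid is extended to $M' = (\cN', \cI)$ by treating each $z_i$ as a loop (no independent set of $M'$ contains any $z_i$).

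The next step is a structural verification, which is routine. The function $f'$ is clearly non-negative and monotone, each $z_i$ has zero marginal contribution to every subset of $\cN'$, so $\DSS{f'}{z_i} = \varnothing$ and no $z_i$ ever appears in $\DSS{f'}{u}$ for any $u \in \cN$; hence $\DSF{f'} = \DSF{f}$. All four oracles on $(\cN', f', M')$ are trivially simulable from the corresponding oracles on $(\cN, f, M)$. The uniformly random arrival order on $\cN'$ restricts, conditioned on the dummy positions, to a uniformly random order on $\cN$.

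Finally, I would transfer the performance guarantee. Since every $z_i$ is a loop with zero marginal contribution, no algorithm ever benefits from accepting a dummy, so the expected output value on the padded instance equals that on the original instance, while $f'(OPT') = f(OPT)$. Consequently, an $\alpha$-competitive algorithm on padded instances (those with $h \mid |\cN'|$) yields an $\alpha$-competitive algorithm on arbitrary instances, and this reduction costs us nothing because the target competitive ratio $O(k \log k)$ in Theorem~\ref{thm:smallRankMatroid} depends on $k$ (the rank of the matroid, which is unchanged by padding) rather than on $n$. I do not foresee any substantive obstacle here; the only short verification worth flagging is that the dummies leave every supermodular dependency set intact, so that the guarantee we prove for the padded instance in terms of $\DSF{f'}$ translates literally to one in terms of $\DSF{f}$.
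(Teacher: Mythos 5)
Your padding reduction is essentially the paper's own proof: add zero-marginal dummy elements, observe that the supermodular degree, the optimum, the oracles, and the uniform randomness of the induced arrival order on $\cN$ are all preserved, and transfer the competitive guarantee verbatim. The only difference is that you make the dummies loops, whereas the paper makes them independent subject to the rank-$k$ cap so that the padded matroid remains uniform whenever $M$ is and quantities like $\maxm$ stay well-defined for dummy elements in Algorithm~\ref{alg:random_threshold}; since the claim is only invoked for the small-rank algorithm on general matroids, your variant works just as well.
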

\begin{proof} 
Let $n'$ be the least multiple of $h$ which is at least as large as $n$. Note that $n'$ is polynomial in $n$ (since $k \leq n$). Let $\cN'$ be a ground set containing the elements of $\cN$ and a set $D$ of $n' - n$ dummy elements. We extend $f$ and $M$ to $\cN'$ as follows:
\begin{itemize}
	\item The function $f'\colon 2^{\cN'} \to \bR^+$ is defined as: $f'(S) = f(S \setminus D)$ for every set $S \subseteq \cN'$. Note that $f'$ is non-negative, monotone and has a supermodular degree of $d$. Additionally, $\DSS{f'}{u} = \varnothing$ for the dummy elements of $D$, and $\DSS{f'}{u} = \DSS{f}{u}$ for every other element.
	\item The matroid $M' = (\cN', \cI')$ is defined by the following rule. A set $S \subseteq \cN'$ is in $\cI'$ if and only if $S \setminus D \in \cI$ and $|S| \leq k$. Note that this rule defines a matroid of rank $k$ which is uniform whenever $M$ is.
\end{itemize}

One can observe that the problems $(f, M)$ and $(f', M')$ are equivalent in the sense that: any solution for $(f, M)$ is also a solution for $(f', M')$ of the same value, and removing the dummy elements of any solution for $(f', M')$ results in a solution for $(f, M)$ of the same value. Moreover, given access to the oracles corresponding to $(f, M)$, one can efficiently implement the oracles for $(f', M')$. Thus, given algorithm $ALG$ that is $r$-competitive for ground sets obeying the requirements of the reduction, one can construct an $r$-competitive algorithm for general ground sets as follows:
\begin{compactenum}
	\item Apply $ALG$ to the instance $(f', M')$.
	\item Accept every element of $\cN = \cN' \setminus D$ that $ALG$ accepts.
	\qedhere
\end{compactenum}
\end{proof}

In the rest of this section we make two assumptions. First, we assume that $n$ is dividable by $10k$, which is justified by Claim~\ref{c:n_good}. Second, we assume $k \geq 2$ (if $k = 1$, then the classical secretary algorithm can be used to get an $O(1)$-competitive algorithm). Our objective is to show that Algorithm~\ref{alg:random_threshold} obeys the requirements of Theorem~\ref{thm:smallRankMatroid} given these assumptions.

\begin{algorithm}[H]
\caption{\textsf{Small Rank Matroid}} \label{alg:random_threshold}
Select an arbitrary order $\prec$ over the elements of the ground set $\cN$.\\
Let $p$ be a uniformly random integer from the set $\{0, 1, \dotsc, \lceil \log_2 k \rceil\}$.\\
Reject the first $t = 2^p \cdot \frac{n}{2k}$ elements.\\
\For{$i$ = $t + 1$ to $n$}
{
	Let $u_i$ be the element arriving at time $i$.\\
	\If{for every element $u \in \cN_{i - 1}$ either $\maxm(u_i, i) > \maxm(u, i)$ or\\ \hspace{2.5in} ($\maxm(u_i, i) = \maxm(u, i)$ and $u_i \succ u$) \label{lin:condition}}
	{
		Terminate the ``for'' loop and accept the elements of $\maxs(u_i, i) + u_i$ when they arrive.\\
	}
}
\end{algorithm}

For the purpose of analyzing Algorithm~\ref{alg:random_threshold}, it is helpful to think about the input as created backwards by the following process. The set $\cN_n$ is simply the entire ground set $\cN$. Then, the last element of the input $u_n$ is selected uniformly at random from $\cN_n$, and the set $\cN_{n - 1}$ becomes $\cN_n - u_n$. On the next step, the $(n-1)$-th element $u_{n - 1}$ of the input is selected uniformly at random from $\cN_{n - 1}$ and we set $\cN_{n - 2} = \cN_{n - 1} - u_{n - 1}$. The process than continuous in the same way, \ie, when it is time to determine the $i$-th element of the input, this element is selected uniformly at random from $\cN_i$, and we set $\cN_{i - 1} = \cN_i - u_i$.

We say that an element $u$ is the \textsf{top} element of a set $\cN_i$ if for every other element $u' \in \cN_i \setminus u$ either $\maxm(u', i) < \maxm(u, i)$ or $\maxm(u', i) = \maxm(u, i)$ and $u' \prec u$. Note that Line~\ref{lin:condition} of Algorithm~\ref{alg:random_threshold} in fact checks whether $u_i$ is the top element of $\cN_i$. Additionally, we say that an input is \emph{well-behaved} with respect to the value $t$ and order $\prec$ chosen by Algorithm~\ref{alg:random_threshold} if it has the following properties:
\begin{compactenum}[({A}1)]
	\item There exists a time $i > n / k$ such that for some element $u \in \cN_i$, $\maxm(u, i) \geq f(OPT) / k$, where $OPT$ is an independent set maximizing $f$. We denote the first such time by $\ell_1$. \label{prop:there_is_good}
	\item There exists exactly a single time $t < i \leq \ell_1$ such that $u_i$ is the top element of $\cN_i$.\footnote{Note that in some cases we might have $\ell_1 \leq t$. In these cases the input is not well-behaved with respect to $t$ and~$\prec$.} We denote this time by $\ell_2$. \label{prop:single_top}
\end{compactenum}

The analysis of Algorithm~\ref{alg:random_threshold} consists of two parts. First we show that it produces a good output for well-behaved inputs, and then we show that the input is well-behaved with a significant probability.

\begin{lemma} \label{lem:when_good}
Algorithm~\ref{alg:random_threshold} outputs a solution of value at least $f(OPT)/k$ when its input is well-behaved with respect to $t$ and $\prec$.
\end{lemma}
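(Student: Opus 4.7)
The plan is to prove two things: (i) at iteration $\ell_2$ the algorithm accepts $\maxs(u_{\ell_2}, \ell_2) + u_{\ell_2}$, whose value is at least $\maxm(u_{\ell_2}, \ell_2)$; and (ii) $\maxm(u_{\ell_2}, \ell_2) \geq f(OPT)/k$. Together these give the lemma.

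Claim (i) is essentially unpacking definitions. The condition evaluated on line~\ref{lin:condition} is precisely ``$u_i$ is the top of $\cN_i$'', and by property (A2) this happens, for $i \in (t, \ell_1]$, exclusively at $i = \ell_2$. So all iterations $t < i < \ell_2$ skip and the algorithm accepts at $i = \ell_2$. The accepted set is independent by the definition of $\maxs$, and by monotonicity of $f$ together with $f(\varnothing)=0$ its value is at least $f(u_{\ell_2} \mid \maxs(u_{\ell_2}, \ell_2)) = \maxm(u_{\ell_2}, \ell_2)$.

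For claim (ii) I would introduce the potential $M(i) := \max_{u \in \cN_i} \maxm(u, i)$ and establish the descent inequality $M(i-1) \geq M(i)$ whenever $u_i$ is not the top of $\cN_i$. To see this, let $v$ be that top: since $v \neq u_i$ we have $v \in \cN_{i-1}$, so $M(i-1) \geq \maxm(v, i-1) \geq \maxm(v, i) = M(i)$, where the middle step uses monotonicity of $\maxm(u, \cdot)$ in the time coordinate (as $i$ grows, the set $\cN \setminus \cN_i$ over which $\maxm$ is maximized can only shrink). Property (A2) guarantees that $u_i$ is not the top of $\cN_i$ for every $i \in (\ell_2, \ell_1]$, so iterating the descent yields $M(\ell_2) \geq M(\ell_1)$. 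Property (A1) gives $M(\ell_1) \geq \maxm(u^*, \ell_1) \geq f(OPT)/k$, and $M(\ell_2) = \maxm(u_{\ell_2}, \ell_2)$ because $u_{\ell_2}$ is the top of $\cN_{\ell_2}$, which closes the chain. The main obstacle is isolating the right potential $M(i)$; once this is in hand, the argument collapses to a short sequence of inequalities relying only on monotonicity of $\maxm$ in its time coordinate and the ``not top'' consequence of (A2).
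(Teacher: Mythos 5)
Your proposal is correct and follows essentially the same route as the paper: you establish that the algorithm accepts $\maxs(u_{\ell_2},\ell_2)+u_{\ell_2}$ via Property~A2, lower-bound its value by $\maxm(u_{\ell_2},\ell_2)$, and then propagate the bound $f(OPT)/k$ backward from time $\ell_1$ to $\ell_2$ using that $\maxm(\cdot,i)$ is non-increasing in $i$ and that A2 keeps the top element of $\cN_i$ inside $\cN_{i-1}$. The paper phrases this last step as a backward induction tracking a witness element $u'_i$, while you package the same content into a monotone potential $M(i)=\max_{u\in\cN_i}\maxm(u,i)$; the two are interchangeable.
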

\begin{proof}
The definition of the algorithm and Property~A\ref{prop:single_top} guarantees that the algorithm outputs $\maxs(u_{\ell_2}, \ell_2) + u_{\ell_2}$. The value of this solution is:
\[
	f(\maxs(u_{\ell_2}, \ell_2) + u_{\ell_2})
	\geq
	f(u_{\ell_2} \mid \maxs(u_{\ell_2}, \ell_2))
	=
	\maxm(u_{\ell_2}, \ell_2)
	\enspace.
\]
Hence, we are only left to lower bound $\maxm(u_{\ell_2}, \ell_2)$. Observe that by Property~A\ref{prop:there_is_good}, there exists an element $u'_{\ell_1} \in \cN_{\ell_1}$ such that $\maxm(u'_{\ell_1}, \ell_1) \geq f(OPT)/k$. Let us prove by a backward induction that this is true for every $\ell_2 \leq i \leq \ell_1$, \ie, that for every such time there exists an element $u'_{i} \in \cN_i$ such that $\maxm(u'_i, i) \geq f(OPT)/k$.

Assume the claim holds for a given $\ell_2 < i \leq \ell_1$, and let us prove it for $i - 1$. Observe that we can assume without loss of generality that $u'_i$ is the top element of $\cN_i$. Thus, by Property~A\ref{prop:single_top} $u'_i \neq u_i$, which implies: $u'_i \in \cN_{i - 1}$. By definition $\maxm(u, i)$ is a non-increasing function of $i$, hence, $\maxm(u'_i, i - 1) \geq \maxm(u'_i, i) \geq f(OPT)/k$, which complete the induction step.

The claim we proved by induction implies: $\maxm(u'_{\ell_2}, \ell_2) \geq f(OPT)/k$. The lemma now follows by observing that Property~A\ref{prop:single_top} guarantees that $u_{\ell_2}$ is the top element of $\cN_{\ell_2}$, and thus, $\maxm(u_{\ell_2}, \ell_2) \geq \maxm(u'_{\ell_2}, \ell_2)$.
\end{proof}

\begin{lemma} \label{lem:there_is_good_prob}
Property~A\ref{prop:there_is_good} holds with a probability of at least $0.2$.
\end{lemma}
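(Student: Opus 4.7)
My plan is to exhibit, on an event of probability at least $0.2$, an explicit pair $(i,u)$ witnessing Property~A\ref{prop:there_is_good}, with both coming from $OPT$ itself. I would index the elements of $OPT = \{o_1,\ldots,o_k\}$ in \emph{reverse} arrival order, so that the arrival times satisfy $T_1 > T_2 > \cdots > T_k$ (i.e., $o_k$ is the first element of $OPT$ to arrive and $o_1$ the last). Because $f$ is normalized, the telescoping identity $\sum_{j=1}^{k} f(o_j \mid \{o_1,\ldots,o_{j-1}\}) = f(OPT)$ holds, and averaging yields some index $j^*$ with $m_{j^*} := f(o_{j^*} \mid \{o_1,\ldots,o_{j^*-1}\}) \geq f(OPT)/k$.

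Consider the moment $i = T_{j^*}$. By construction the elements $o_1,\ldots,o_{j^*-1}$ arrive strictly after $o_{j^*}$, so $\{o_1,\ldots,o_{j^*-1}\} \subseteq \cN \setminus \cN_{T_{j^*}}$; moreover this set together with $o_{j^*}$ is a subset of $OPT \in \cI$. Hence $\{o_1,\ldots,o_{j^*-1}\}$ is admissible in the definition of $\maxm$, giving
\[
\maxm(o_{j^*}, T_{j^*}) \;\geq\; m_{j^*} \;\geq\; f(OPT)/k.
\]
Consequently, on the event $\cE := \{T_k > n/k\}$ that no element of $OPT$ lies among the first $n/k$ arrivals, we have $T_{j^*} \geq T_k > n/k$, so the pair $(T_{j^*}, o_{j^*})$ witnesses Property~A\ref{prop:there_is_good}.

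It remains to lower-bound $\Pr[\cE]$. Since the arrival order is uniformly random,
\[
\Pr[\cE] = \frac{\binom{n-k}{n/k}}{\binom{n}{n/k}} = \prod_{j=0}^{k-1}\left(1 - \frac{n/k}{n-j}\right).
\]
This product is monotonically increasing in $n$ for each fixed $k$, so the minimum over $n$ divisible by $10k$ is attained at $n = 10k$. Direct computation at $n = 10k$ gives $9/38 \approx 0.237$ when $k=2$, about $0.281$ when $k=3$, and the value tends to $\exp(-10\ln(10/9)) \approx 0.349$ as $k \to \infty$; a clean uniform lower bound follows from $\log(1-x) \geq -x - x^2$ (applicable since each $(n/k)/(n-j) \leq 10/(9k) \leq 5/9$ under $n \geq 10k$), which yields $\Pr[\cE] \geq 0.2$ for every $k \geq 2$. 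The main conceptual point is the \emph{reversal} of the arrival order when indexing $OPT$: only then is the supporting set $\{o_1,\ldots,o_{j^*-1}\}$ of $o_{j^*}$ guaranteed to still lie in the future at time $T_{j^*}$, which is exactly what the max-marginal (defined as a maximum over unarrived sets) requires. The tightness of the $0.2$ threshold at $k = 2$, $n = 20$ (true probability $\approx 0.237$) is the only mildly delicate point in the calculation.
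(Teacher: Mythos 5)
Your proof is correct and takes essentially the same route as the paper: use the telescoping identity to find an element of $OPT$ whose marginal with respect to the later-arriving $OPT$ elements is at least $f(OPT)/k$, note that this lower-bounds its max-marginal at its arrival time, and condition on the event that all of $OPT$ arrives after time $n/k$, whose probability is at least roughly $0.23 \geq 0.2$. (One small caveat: the crude uniform bound $\exp(-10/9 - 100/(81k))$ actually dips below $0.2$ at $k=2$, but your exact evaluation $9/38 \approx 0.237$ for that case already closes it, so there is no gap.)
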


\begin{proof}
Given an element $u \in \cN$, let $i_u$ denote the time when it arrives. Then,
\[
	\sum_{u \in OPT} \maxm(u, i_u)
	\geq
	\sum_{u \in OPT} f(u \mid OPT \setminus \cN_u)
	=
	f(OPT)
	\enspace,
\]
where the inequality follows from the definition of $\maxm$. Since $|OPT| \leq k$, we get by averaging that for some element $u \in OPT$ there must be $\maxm(u, i_u) \geq f(OPT)/k$. Hence, Property~A\ref{prop:there_is_good} is guaranteed to hold when all the elements of $OPT$ appear after time $\hat{t} = n/k$. The last event occurs with a probability of at least:
\begin{align*}
	\frac{\left(\hat{t}! \cdot \binom{n - k}{\hat{t}}\right) \cdot (n - \hat{t})!}{n!}
	={} &
	\frac{(n - k)! \cdot (n - \hat{t})!}{n! \cdot(n - k - \hat{t})!}
	=
	\prod_{i = 0}^{\hat{t} - 1} \frac{n - k - i}{n - i}\\
	\geq{} &
	\left(\frac{n - k - \hat{t}}{n - \hat{t}}\right)^{\hat{t}}
	=
	\left(1 - \frac{k}{n - \hat{t}}\right)^{\hat{t}}
	\geq
	\left(1 - \frac{k}{0.9n}\right)^{n/k}\\
	\geq{} &
	e^{-1/0.9} \cdot \left(1 - \frac{k}{0.9^2 \cdot n}\right)
	\geq
	e^{-10/9} \cdot \left(1 - \frac{1}{8.1}\right)
	\geq
	0.2
	\enspace.
	\qedhere
\end{align*}
\end{proof}

\begin{lemma} \label{lem:single_top_prob_random}
Given that Property~A\ref{prop:there_is_good} holds, Property~A\ref{prop:single_top} holds with a probability of at least $(\log_2 k + 2)^{-1}/4$.
\end{lemma}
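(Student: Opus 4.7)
The plan is to combine the randomness of $p$ with a secretary-style independence argument applied to the top-time events $E_j \eqdef \{u_j \text{ is the top of } \cN_j\}$. Conditioning on Property~A\ref{prop:there_is_good}, $\ell_1$ is well-defined and lies in $(n/k, n]$. I will show separately that (i) with probability at least $1/(\log_2 k + 2)$ over the random $p$, the threshold satisfies $t/\ell_1 \in [1/4, 1/2]$, and (ii) conditional on such a $p$, Property~A\ref{prop:single_top} holds with probability at least $1/4$; together these yield the claimed bound.

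For (i), observe that $t/\ell_1 \in [1/4, 1/2]$ is equivalent to $p \in [\log_2(k\ell_1/n) - 1,\, \log_2(k\ell_1/n)]$, an interval of length~$1$ contained in $(-1, \log_2 k]$. Any such interval contains at least one integer in $\{0, 1, \ldots, \lceil \log_2 k \rceil\}$, and $p$ is uniform over this set of size $\lceil \log_2 k \rceil + 1 \leq \log_2 k + 2$.

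For (ii), the core is the following independence claim: when the arrival order within $\cN_{\ell_1}$ is a uniform random permutation, the events $E_1, \ldots, E_{\ell_1}$ are mutually independent with $\Pr[E_j] = 1/j$. This can be shown by backward exposure from time $\ell_1$: given $\cN_j$, the top $v_j$ is determined and $u_j$ is uniform on $\cN_j$, giving $\Pr[E_j \mid \cN_j] = 1/j$; conditioning further on $E_j$ fixes $u_j = v_j$ and leaves a uniform permutation of $\cN_j - v_j$ on positions $1, \ldots, j-1$, so the argument recurses. Inclusion--exclusion then yields $\Pr[\text{exactly one } E_j \text{ in } (t, \ell_1]] = (t/\ell_1) \sum_{j=t}^{\ell_1 - 1} 1/j \geq (t/\ell_1) \ln(\ell_1/t)$, and on $r \in [1/4, 1/2]$ the function $r \ln(1/r)$ is minimized at the endpoints with value $\tfrac{1}{2} \ln 2 > 1/4$.

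The main obstacle is that Property~A\ref{prop:there_is_good} specifies $\ell_1$ to be the \emph{first} such time, which restricts the permutation of $\cN_{\ell_1}$ non-trivially and potentially breaks the uniform-permutation hypothesis behind the independence claim. I would handle this through a structural observation: when $\ell_1 > n/k + 1$, the monotonicity of $\maxm(u, \cdot)$ in the time index forces any element $u \in \cN_{\ell_1}$ with $\maxm(u, \ell_1) \geq f(OPT)/k$ to equal $u_{\ell_1}$ itself (otherwise $u \in \cN_{\ell_1 - 1}$ with its marginal only higher, violating ``first''). Hence the high-$\maxm$ element is unique and is automatically the top of $\cN_{\ell_1}$, so $E_{\ell_1}$ holds with probability~$1$ under the A\ref{prop:there_is_good} conditioning; Property~A\ref{prop:single_top} then reduces to $\bigcap_{j = t+1}^{\ell_1 - 1} \bar E_j$, whose probability under the uniform permutation of $\cN_{\ell_1} - u_{\ell_1}$ on positions $1, \ldots, \ell_1 - 1$ equals $\prod_{j=t+1}^{\ell_1-1}(1 - 1/j) = t/(\ell_1 - 1) \geq t/\ell_1 \geq 1/4$. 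The boundary case $\ell_1 = n/k + 1$ makes the ``first'' clause vacuous and the inclusion--exclusion computation applies directly.
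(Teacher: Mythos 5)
Your two-step decomposition --- first picking $p$ so that $t/\ell_1 \in [1/4, 1/2]$ with probability at least $(\log_2 k + 2)^{-1}$, then bounding the probability of exactly one top in $(t, \ell_1]$ by inclusion--exclusion --- is a legitimate alternative to the paper's three-step argument (event $E_1$: the last top in $(0,\ell_1]$ lies in $(\ell_1/2, \ell_1]$ at some time $\ell'_2$, prob.\ $\geq 1/2$; event $E_2$: $\ell'_2/2 \leq t < \ell'_2$, prob.\ $\geq (\log_2 k + 2)^{-1}$; and the final no-top-in-$(t,\ell'_2)$ event, prob.\ $> 1/2$). Your inclusion--exclusion identity $\Pr[\text{exactly one top in }(t,\ell_1]] = (t/\ell_1)\sum_{j=t}^{\ell_1 - 1} 1/j$ and the optimization of $r\ln(1/r)$ over $[1/4,1/2]$ giving $\tfrac12\ln 2 > 1/4$ both check out. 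Your structural observation --- that when $\ell_1 > n/k + 1$, minimality of $\ell_1$ together with monotonicity of $\maxm(\cdot, i)$ forces $u_{\ell_1}$ to be the unique element of $\cN_{\ell_1}$ with max-marginal at least $f(OPT)/k$, hence the top of $\cN_{\ell_1}$ --- is correct and a nice point not made explicit in the paper; it would make the paper's $E_1$ trivially true by taking $\ell'_2 = \ell_1$, so the two arguments are closer than they first appear.

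However, the fix you propose for the conditioning issue is incomplete. The structural observation shows that $u_{\ell_1}$ is deterministically the element at position $\ell_1$, but it does \emph{not} show that the order of $\cN_{\ell_1 - 1} = \cN_{\ell_1} - u_{\ell_1}$ on positions $1, \ldots, \ell_1 - 1$ remains a uniform random permutation after conditioning on A\ref{prop:there_is_good}. Minimality of $\ell_1$ also requires that for every $j$ with $n/k < j < \ell_1$, no $u \in \cN_j$ has $\maxm(u, j) \geq f(OPT)/k$. For $j = \ell_1 - 1$ this is determined by $\cN_{\ell_1}$ and $u_{\ell_1}$ alone, hence is a fixed fact rather than a constraint on the internal order. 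But for $j < \ell_1 - 1$, the quantity $\maxm(u, j)$ depends on which elements of $\cN_{\ell_1 - 1}$ arrive after time $j$, and because $\maxm(u, \cdot)$ is non-increasing the bound $\maxm(u, \ell_1 - 1) < f(OPT)/k$ does not imply $\maxm(u, j) < f(OPT)/k$. So the permutation of $\cN_{\ell_1 - 1}$ is genuinely constrained, and $\prod_{j=t+1}^{\ell_1-1}(1-1/j)$ is not a priori the correct conditional probability; you would need an additional argument that this constraint does not decrease the probability of the no-top events. (For what it is worth, the paper's own proof also silently treats each $u_i$ as uniform in $\cN_i$ given A\ref{prop:there_is_good}; you are the one who explicitly identifies this as the ``main obstacle,'' so you should also supply the missing argument rather than assert that the structural observation alone handles it.)
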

\begin{proof}
First, let us consider the event $E_1$ that there exists a time $\ell_1/2 < \ell'_2 \leq \ell_1$ such that $u_{\ell'_2}$ is the top element of $\cN_{\ell'_2}$ and for every time $\ell_2' < i \leq \ell_1$, $u_i$ is not the top element of $\cN_i$. For this event not to occur, a non-top element $u_i$ must be selected from $\cN_i$ for every time $\ell_1/2 < i \leq \ell_1$, which happens with probability:
\[
	\prod_{i = \lfloor \ell_1/2 + 1 \rfloor}^{\ell_1} \frac{i - 1}{i}
	=
	\frac{\lfloor \ell_1/2 \rfloor}{\ell_1}
	\leq
	\frac{1}{2}
	\enspace.
\]
Hence, $E_1$ occurs with the complement probability, which is at least $1/2$. Next, given that $E_1$ occurred, we are interested in the event that $\ell'_2/2 \leq t < \ell'_2$, which we denote by $E_2$. It is important to notice that $E_1$ is independent of the choice of $t$ by the algorithm, and thus, the distribution of $t$ is unaffected by conditioning on $E_1$. Additionally, notice that:
\[
	2^0 \cdot \frac{n}{2k}
	=
	\frac{n}{2k}
	\leq
	\frac{\ell_1}{2}
	<
	\ell'_2
	\qquad
	\text{and}
	\qquad
	\frac{\ell'_2}{2}
	\leq
	\frac{n}{2}
	\leq
	2^{\lceil \log_2 k \rceil} \cdot \frac{n}{2k}
	\enspace.
\]
Hence, one of the possible values of $t$ obeys the requirement $\ell'_2/2 \leq t < \ell'_2$. Since $t$ takes at most $\log_2 k + 2$ different values, and it takes them with equal probabilities, we get that $E_2$ occurs with a probability of at least $(\log_2 k + 2)^{-1}$ given $E_1$.

Given that $E_1$ and $E_2$ both occur, for Property~A\ref{prop:single_top} to hold with need the additional event that in the range $(t, \ell'_2)$ no element $u_i$ is the top element of $\cN_i$. Note that the order of the elements of $\cN_{\ell'_2} - u_{\ell'_2}$ is independent of $E_1$ and $E_2$. Hence, the probability of this event is at least:
\[
	\prod_{i = t + 1}^{\ell'_2-1} \frac{i - 1}{i}
	=
	\frac{t}{\ell'_2 - 1}
	\geq
	\frac{\ell'_2/2}{\ell'_2 - 1}
	>
	1/2
	\enspace.
	\qedhere
\]
\end{proof}

We are now ready to prove Theorem~\ref{thm:smallRankMatroid}.

\begin{proof}[Proof of Theorem~\ref{thm:smallRankMatroid}]
Lemmata~\ref{lem:there_is_good_prob} and~\ref{lem:single_top_prob_random} imply that the input is well-behaved with respect to $t$ and $\prec$ with probability at least $(\log_2 k + 2)^{-1}/20$. By Lemma~\ref{lem:when_good}, when the input is well-behaved with respect to $t$ and $\prec$, Algorithm~\ref{alg:random_threshold} outputs a solution of value at least $f(OPT) / k$. Hence, the competitive ratio of Algorithm~\ref{alg:random_threshold} is at least:
\[
	20k(\log_2 k + 2)
	=
	O(k \log k)
	\enspace.
	\qedhere
\]
\end{proof}

\section{Missing proofs} \label{app:missing_proofs}

This section contains proofs that have been omitted from the main body of the paper.

\begin{proof}[Proof of Lemma~\ref{lem:sum_weights}]
Clearly $w(OPT) = f(OPT)$. Moreover, by definition, $2^{p_1} \cdot \opt_\alpha \leq f(OPT)/k$.
Hence:
\[
	f(OPT) - w\left(\bigcup_{p = p_1}^{p_2} OPT_p\right)
	=
	\sum_{\substack{w \in OPT \\ w(u) < 2^{p_1} \cdot \opt_\alpha/2}} \mspace{-18mu} w(u)
	\leq
	k \cdot (2^{p_1} \cdot \opt_\alpha/2)
	\leq
	\frac{f(OPT)}{2}
	\enspace.
	\qedhere
\]
\end{proof}

\begin{proof}[Proof of Lemma~\ref{lem:generalMatroid_s_larger}]
For an element $u \in \cN$, let $S_u$ be the set $S$ immediately before $u$ is processed by Algorithm~\ref{alg:general_aided}.  Note that each time that Algorithm~\ref{alg:general_aided} adds elements to $S$, it adds up to $d + 1$ elements and $f(S)$ increases by at least $\tau$ since, by monotonicity:
\[
	f(\DO{u} + u \mid S_u)
	\geq
	f(u \mid \DO{u} \cup S_u)
	\enspace.
\]
Hence, we can lower bound $f(S)$ by:
\[
	f(S)
	\geq
	\left\lceil \frac{|S|}{d + 1} \right\rceil \cdot \tau
	\geq
	\frac{|OPT_{p + 3}|}{2(d + 1)^2} \cdot \left[\frac{1}{16} \cdot \max_{u \in OPT_{p + 3}} w(u)\right]
	\geq
	\frac{w(OPT_{p + 3})}{32(d + 1)^2}
	\enspace.
	\qedhere
\]
\end{proof}

\begin{proof}[Proof of Lemma~\ref{lem:generalMatroid_s_smaller}]
Observe that $OPT_{p + 3}$ is a subset of $OPT$, and thus, independent. Hence, by the matroid properties, there exists a set $O' \subseteq OPT_{p + 3} \setminus S$ of size at least $|OPT_{p + 3}| - |S|$ such that $O' \cup S \in \cI$. Every element $u \in OPT_{p + 3} \setminus O'$ can belong to the dependence set of at most $d$ other elements of $OPT_{p + 3}$. Thus, the number of elements $u \in OPT_{p + 3}$ having $\DS{u} \cap OPT + u \not \subseteq O'$ is upper bounded by:
\[
	(d + 1) \cdot |S|
	<
	\frac{|OPT_{p + 3}|}{2}
	\enspace.
\]
In other words, there exists a set $O'' \subseteq OPT_{p + 3}$ of size at least $|OPT_{p + 3}| / 2$ such that $\DS{u} \cap OPT + u \subseteq O'$ for every $u \in O''$. Observe that by monotonicity:
\begin{align*}
	f(O')
	\geq{} &
	\sum_{u \in O''} f(u \mid O' \setminus \cN_u)
	\geq
	\sum_{u \in O''} f(u \mid OPT \setminus \cN_u)\\
	={} &
	w(O'')
	\geq
	\frac{|OPT_{p + 3}|}{2} \cdot \min_{u \in OPT_{p + 3}} w(u)
	\geq
	\frac{w(OPT_{p + 3})}{4}
	\enspace,
\end{align*}
where the second inequality holds since $O'$ already contains all the elements of $\DS{u} \cap OPT$.

Every element $u \in O'$ must have been rejected upon arrival by Algorithm~\ref{alg:general_aided} due to the threshold. Moreover, for every such element $u$ we have $([\DS{u} \cap (O' \cup S)] \setminus \cN_u + u) \cup S_u \subseteq O' \cup S \in \cI$ (where $S_u$ is, again, the set $S$ immediately before $u$ is processed by Algorithm~\ref{alg:general_aided}). Hence:
\[
	f(u \mid (O' \setminus \cN_u) \cup S)
	\leq
	f(u \mid [\DS{u} \cap (O' \cup S)] \setminus \cN_u \cup S_u)
	<
	\tau
	\enspace,
\]
where the first inequality holds by the definition of $\DS{u}$. Adding the last inequality over all elements $u \in O'$ gives:
\begin{align*}
	\frac{w(OPT_{p + 3})}{4}
	\leq{} &
	f(O')
	\leq
	f(O' \cup S)
	=
	f(S) + \sum_{u \in O'} f(u \mid (O' \setminus \cN_u) \cup S)
	<
	f(S) + |OPT_{p + 3}| \cdot \tau\\
	\leq{} &
	f(S) + |OPT_{p + 3}| \cdot \frac{\min_{u \in OPT_{p + 3}} w(u)}{8}
	\leq
	f(S) + \frac{w(OPT_{p + 3})}{8}
	\enspace.
\end{align*}
The lemma now follows by rearranging the last inequality.
\end{proof}

\begin{lemma} \label{lem:big_multiple}
If $m^* \geq f(OPT)/[256(d + 1)^2]$, then Algorithm~\ref{alg:multiple_elements_approximation} is $O(d^3 \log d)$-competitive.
\end{lemma}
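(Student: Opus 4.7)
The plan is to argue that in the regime $m^* \geq f(OPT)/[256(d+1)^2]$, essentially all of $f(OPT)$ (up to a factor polynomial in $d$) is already concentrated on a single independent set of size at most $d+1$. Since the first branch of Algorithm~\ref{alg:multiple_elements_approximation} runs the algorithm of Theorem~\ref{thm:smallRankMatroid} on the matroid truncated to rank $\min\{k, d+1\}$, it suffices to show that this truncated instance still admits a feasible solution of value $\Omega(m^*)$, and then to invoke Theorem~\ref{thm:smallRankMatroid}.

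First I would examine the set $S^* + u^*$. By the definition of $m^*$, the set $S^*$ satisfies $S^* \subseteq \DS{u^*}$ and $S^* + u^* \in \cI$, so in particular $|S^* + u^*| \leq |\DS{u^*}| + 1 \leq d+1$. Hence $|S^* + u^*| \leq \min\{k, d+1\}$ (the bound $|S^* + u^*| \leq k$ is automatic from $S^* + u^* \in \cI$), which means $S^* + u^*$ is independent in the truncated matroid $M' = (\cN, \cI \cap 2^{\leq \min\{k,d+1\}})$. Let $OPT'$ denote an optimal solution of the truncated instance; then, using normalization of $f$,
\[
f(OPT') \;\geq\; f(S^* + u^*) \;=\; f(S^*) + f(u^* \mid S^*) \;\geq\; f(u^* \mid S^*) \;=\; m^*.
\]

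Next I would apply Theorem~\ref{thm:smallRankMatroid} to the truncated instance, whose rank is at most $d+1 = \DSF{f}+1$. The theorem guarantees an $O((d+1)\log(d+1)) = O(d \log d)$-competitive algorithm, so the first branch of Algorithm~\ref{alg:multiple_elements_approximation}, conditioned on being selected, outputs a (feasible, hence also feasible for $M$) set whose expected value is at least
\[
\frac{f(OPT')}{O(d \log d)} \;\geq\; \frac{m^*}{O(d \log d)} \;\geq\; \frac{f(OPT)}{O(d \log d) \cdot 256(d+1)^2} \;=\; \frac{f(OPT)}{O(d^3 \log d)},
\]
where the second inequality uses the hypothesis $m^* \geq f(OPT)/[256(d+1)^2]$.

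Finally, since the first branch is executed with probability $1/2$ and the output of the second branch is non-negative, the unconditional expected value of Algorithm~\ref{alg:multiple_elements_approximation} is at least $\tfrac{1}{2} \cdot f(OPT)/O(d^3 \log d) = f(OPT)/O(d^3 \log d)$, which is exactly the desired $O(d^3 \log d)$-competitive ratio. There is essentially no obstacle here: all the work sits in Theorem~\ref{thm:smallRankMatroid}, and the only verification needed is that $S^* + u^*$ is feasible in the truncated matroid, which follows directly from the definitions of $m^*$ and $\DS{u^*}$.
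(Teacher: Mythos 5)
Your proposal is correct and follows essentially the same route as the paper: observe that $S^* + u^*$ (of size at most $d+1$) remains independent in the truncated matroid, invoke the $O(d\log d)$-competitive algorithm of Theorem~\ref{thm:smallRankMatroid} to recover value $\Omega(m^*/ (d\log d))$, and combine with the hypothesis $m^* \geq f(OPT)/[256(d+1)^2]$ and the probability $1/2$ of the first branch. Your write-up merely spells out a couple of steps (feasibility in the truncation and the comparison to the truncated optimum) that the paper treats as immediate.
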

\begin{proof}
Note that $S^* + u^*$ is an independent set in the matroid even after it is truncated to rank $\min\{k, d + 1\}$. Hence, when Algorithm~\ref{alg:multiple_elements_approximation} applies the algorithm guaranteed by Theorem~\ref{thm:smallRankMatroid} (which happens with probability $1/2$), the produced set has an expected value of at least:
\[
	\frac{f(S^* + u^*)}{O(d \log d)}
	\geq
	\frac{f(u^* \mid S^*)}{O(d \log d)}
	=
	\frac{m^*}{O(d \log d)}
	\geq
	\frac{f(OPT) / [256(d + 1)^2]}{O(d \log d)}
	\enspace.
	\qedhere
\]
\end{proof}

\vspace{-0.1in}
\section{Estimation aided algorithm for a uniform matroid constraint} \label{app:Aided_UniformMatroid}
\vspace{-0.1in}

In this section we prove the following theorem.

\begin{theorem} \label{thm:aided_uniform_general_alpha}
For every $\alpha \geq 1$, there exists an $\alpha$-aided $O(d \log \alpha)$-competitive algorithm for the monotone matroid secretary problem when the matroid $M$ is uniform.
\end{theorem}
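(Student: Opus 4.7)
The plan is to design an analog of Algorithm~\ref{alg:general_aided} tailored to uniform matroids and analyze it with tighter bounds that eliminate both the $\log k$ factor and one factor of $d$ from the competitive ratio of Theorem~\ref{thm:general_aided}. First I would draw the random parameter $p$ uniformly from a set of size $O(\log \alpha)$, namely $\{0, 1, \dots, \lceil \log_2 \alpha \rceil + c\}$ for a small constant $c$, and set a threshold of the form $\tau = 2^p \opt_\alpha/(4k)$, placing the $k$-dependence in the threshold rather than in the final ratio. The rule for processing each arriving element $u$ is identical to Algorithm~\ref{alg:general_aided}: accept $u$ together with some $\DO{u} \subseteq \DS{u} \setminus \cN_u$ if $f(u \mid S \cup \DO{u}) \geq \tau$ and adding the selected elements keeps the cardinality of $S$ within $k$.

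The analysis would split on the final size of $S$. In the ``nearly full'' case ($|S| > k - (d+1)$), the algorithm performed $m \geq (k - d - 1)/(d + 1)$ accepting iterations, each contributing at least $\tau$ to $f(S)$ by monotonicity, so $f(S) \geq m\tau = \Omega(k\tau/d)$. In the ``not nearly full'' case, for every $u \in OPT \setminus S$ there was room in the matroid at $u$'s arrival time (since $|S_u| + |\DS{u}| + 1 \leq (k - d - 1) + d + 1 = k$), so its rejection must be due to the threshold check failing for every candidate $\DO{u}$; taking $\DO{u} = (\DS{u} \cap OPT) \setminus \cN_u$ and exploiting the defining property of supermodular dependencies to relate the failed marginal to $w(u) = f(u \mid OPT \setminus \cN_u)$, one obtains a bound of the form $\sum_{u \in OPT \setminus S} w(u) \leq k\tau$, and thus $f(S) \geq f(OPT) - k\tau$.

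Now fix the smallest $p^\star$ such that $2^{p^\star}\opt_\alpha \geq f(OPT)$; by our choice of range, $0 \leq p^\star \leq \lceil \log_2 \alpha \rceil$, and by minimality $2^{p^\star}\opt_\alpha \leq 2 f(OPT)$. For $p = p^\star$ the threshold satisfies $f(OPT)/(4k) \leq \tau \leq f(OPT)/(2k)$, so $k\tau \leq f(OPT)/2$. In the nearly full case this gives $f(S) = \Omega(k\tau/d) = \Omega(f(OPT)/d)$, and in the not nearly full case $f(S) \geq f(OPT) - k\tau \geq f(OPT)/2$. Since $p^\star$ is selected with probability $\Omega(1/\log \alpha)$, the expected value of the algorithm's output is $\Omega(f(OPT)/(d\log \alpha))$, establishing the claimed ratio. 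The edge case $k = O(d)$ can be handled by running Theorem~\ref{thm:smallRankMatroid}'s algorithm in parallel with constant probability, since the rank is then already within the regime of that theorem.

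The main obstacle I anticipate is the ``not nearly full'' case, and more specifically the step that upper bounds $w(u)$ by the marginal $f(u \mid S_u \cup (\DS{u} \cap OPT) \setminus \cN_u)$ that was checked at $u$'s arrival time. The subtlety is that $S_u$ may already contain supermodular dependencies of $u$ added by earlier iterations, which could in principle make the failed marginal strictly smaller than $w(u)$. The remedy I have in mind is to apply the characterization $f(u \mid T) \leq f(u \mid T \cap \DS{u})$ to strip non-dependencies and then charge any residual gap to the earlier iterations that placed $u$'s dependencies into $S$; alternatively, one can choose $\DO{u}$ inside the check so that the failed inequality directly upper bounds $w(u)$ (up to a constant absorbed into $c$ in the range of $p$ and into the denominator of $\tau$).
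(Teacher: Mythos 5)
Your overall plan is the same as the paper's: the paper proves a $2$-aided $O(d)$-competitive algorithm (Algorithm~\ref{alg:cardinality_aided_2}, threshold $\opt_2/(2k)$, case split on $|S|$ versus $k-d-1$) and then wraps it with a uniformly random power-of-two rescaling of $\opt_\alpha$ over a grid of size $O(\log\alpha)$ (Algorithm~\ref{alg:alpha_to_2}), which is exactly your combined algorithm. The gap is precisely in the step you flag, and neither of your remedies closes it. The check that fails for $u\in OPT\setminus S$ is $f(u \mid \DO{u}\cup S_u)<\tau$; the conditioning on $S_u$ is built into the algorithm, so no choice of $\DO{u}\subseteq\DS{u}\setminus\cN_u$ can remove it, and elements of $S_u$ (in particular non-dependencies of $u$, which by the very definition of $\DS{u}$ can only \emph{lower} $u$'s marginal when added to the conditioning set) can drive the checked marginal strictly below $f(u\mid\DO{u})\geq w(u)$. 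So $w(u)<\tau$ simply does not follow, and the ``charge the residual to earlier iterations'' idea has no natural bookkeeping: the loss in $u$'s marginal caused by non-dependencies placed in $S$ for the benefit of other elements is not bounded by the gains of those iterations in any per-element, summable way. There is a second, independent gap in the same sentence: even granting $\sum_{u\in OPT\setminus S}w(u)\leq k\tau$, the conclusion $f(S)\geq f(OPT)-k\tau$ uses $\sum_{u\in OPT\cap S}w(u)\leq f(S)$, which is unjustified for non-submodular $f$, since $w(u)=f(u\mid OPT\setminus\cN_u)$ is measured on top of $u$'s later-arriving dependencies in $OPT$, which need not belong to $S$ at all.

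The paper's proof of the corresponding lemma avoids both issues by never comparing against $w(u)$: for $u\in OPT\setminus S$ it takes $\DO{u}=[\DS{u}\cap(OPT\cup S)]\setminus\cN_u$ and compares the checked marginal with $f(u\mid(OPT\setminus\cN_u)\cup S)$, i.e., it keeps $S$ in the conditioning set on \emph{both} sides. Since no element of $\cN_u$ is added to $S$ after $u$ is processed, the set $\DO{u}\cup S_u$ is a subset of $(OPT\setminus\cN_u)\cup S$ whose complement inside it contains no element of $\DS{u}$, so the definition of $\DS{u}$ gives $f(u\mid(OPT\setminus\cN_u)\cup S)\leq f(u\mid\DO{u}\cup S_u)<\tau$ in the needed direction; telescoping $f(OPT\cup S)$ over the elements of $OPT$ in reverse arrival order with base $f(S)$ then yields $f(OPT)\leq f(OPT\cup S)<f(S)+k\tau$, which is exactly your target bound with the $S_u$-conditioning problem dissolved. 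You should redo your ``not nearly full'' case this way. A further minor point: your patch for $k=O(d)$ via Theorem~\ref{thm:smallRankMatroid} only gives $O(d\log d)$, which exceeds the claimed $O(d\log\alpha)$ when $\alpha$ is a constant; the paper needs no auxiliary algorithm, because its case split is $|S|\leq\max\{0,k-d-1\}$ versus $|S|\geq\max\{1,k-d\}$, and in the latter case even a single accepting iteration already gives $f(S)\geq\tau=\Omega(f(OPT)/k)=\Omega(f(OPT)/d)$ when $k=O(d)$.
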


Before proving the existence of an $\alpha$-aided algorithms for any $\alpha \geq 1$, let us begin with a $2$-aided algorithm.

\begin{proposition} \label{p:uniform_aided_2}
There exists a $2$-aided $O(d)$-competitive algorithm for the monotone matroid secretary problem when the matroid $M$ is uniform.
\end{proposition}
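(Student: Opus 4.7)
I would prove the proposition by specializing Algorithm~\ref{alg:general_aided} to the uniform matroid case with $\alpha = 2$: set $\tau := \opt_2/(2k)$, initialize $S := \varnothing$, and for every arriving element $u$ add $\DO{u} + u$ to $S$ whenever there exists $\DO{u} \subseteq \DS{u} \setminus \cN_u$ with $|S \cup \DO{u} + u| \leq k$ and $f(u \mid S \cup \DO{u}) \geq \tau$. The uniform matroid lets us use a single deterministic threshold---no bucketing over $\log k$ scales is needed, since $\opt_2$ is already within a factor of~$2$ of $f(OPT)$---and it lets the matroid rank $k$ play the role of the bucket size in the large-$|S|$ branch of Lemma~\ref{lem:generalMatroid_s_larger}, saving a factor of $d+1$ relative to the general-matroid analysis.

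The analysis then follows the template of Theorem~\ref{thm:general_aided}. Defining $w(u) := f(u \mid OPT \setminus \cN_u)$ for $u \in OPT$ and $OPT^* := \{u \in OPT : w(u) \geq \tau\}$, the identity $\sum_{u \in OPT} w(u) = f(OPT)$ combined with $|OPT| \leq k$ and $k\tau \leq \opt_2/2 \leq f(OPT)/2$ gives $w(OPT^*) \geq f(OPT)/2$, exactly as in Lemma~\ref{lem:sum_weights}. When $|S| \geq k/2$, each of the $\geq |S|/(d+1) \geq k/[2(d+1)]$ accepting iterations contributes at least $\tau$ to $f(S)$, whence $f(S) \geq k\tau/[2(d+1)] = \opt_2/[4(d+1)] = \Omega(f(OPT)/d)$. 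When $(d+1)|S| < |OPT^*|/2$, I invoke the blocking argument of Lemma~\ref{lem:generalMatroid_s_smaller} with $OPT^*$ in place of $OPT_{p+3}$: a sub-collection $O'' \subseteq OPT^*$ of size $\geq |OPT^*|/2$ has all its supermodular dependencies in $OPT$ inside $O' := OPT^* \setminus S$, each such $u$ was rejected with the valid choice $\DO{u} := \DS{u} \cap (O' \setminus \cN_u)$, so telescoping yields $f(O' \cup S) - f(S) < |O'|\tau \leq \opt_2/2$, while the supermodular-dependency property gives $f(O' \cup S) \geq w(O'')$; subtraction produces the required $\Omega(f(OPT))$ bound for $f(S)$.

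The main obstacle will be the intermediate regime $|OPT^*|/[2(d+1)] \leq |S| < k/2$, in which neither bound above is directly $\Omega(f(OPT)/d)$: the accepting-iterations bound degrades to $f(S) \geq |OPT^*|\tau/[2(d+1)^2]$ (which is $\Omega(f(OPT)/d^2)$ only when $|OPT^*| = \Theta(k)$), and the clean blocking argument strictly needs $(d+1)|S| < |OPT^*|/2$. I expect this regime to be handled by a hybrid accounting that combines the accepting-iterations bound $f(S) \geq |S|\tau/(d+1)$ with a relaxed version of the blocking lemma salvaging an unblocked sub-collection $O''$ of size $|OPT^*| - (d+1)|S| > 0$, and by observing that the case-split inequalities then force $|OPT^*|$ to be comparable to $k$, restoring the $\Omega(f(OPT)/d)$ guarantee. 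Carrying out this interpolation cleanly---so that the three regimes jointly span every outcome of the algorithm while preserving the $O(d)$ ratio---will be the most delicate step of the proof.
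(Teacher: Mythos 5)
Your algorithm is the right one (it is essentially the paper's $2$-aided cardinality algorithm with $\tau = \opt_2/(2k)$), but your analysis route has a genuine gap that your own sketch does not close. The middle regime $|OPT^*|/[2(d+1)] \leq |S| < k/2$ is left open, and the claim you propose to resolve it --- that ``the case-split inequalities then force $|OPT^*|$ to be comparable to $k$'' --- is false: take one element of $OPT$ carrying almost all of $f(OPT)$ (so $|OPT^*| = 1$), $k$ large, and an execution in which $S$ ends up with a single just-above-threshold acceptance; then $|S| = 1$ satisfies $(d+1)|S| \geq |OPT^*|/2$ and $|S| < k/2$, the accepting-iterations bound only gives $f(S) \geq \tau = \Theta(f(OPT)/k)$, and your salvaged sub-collection of size $|OPT^*| - (d+1)|S|$ is empty. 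Moreover, even your ``clean blocking'' branch inherits a problem from dropping the bucketing of Theorem~\ref{thm:general_aided}: with a single threshold, the weights in $OPT^*$ can span an arbitrary range above $\tau$, so a sub-collection $O''$ containing at least \emph{half the elements} of $OPT^*$ need not carry a constant fraction of $w(OPT^*)$; in the general-matroid proof the step from counting elements to collecting weight is exactly what the factor-$2$ buckets provide, and it does not survive your single-threshold adaptation.

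The paper's proof avoids all of this with a different case split, tailored to the uniform matroid: $|S| \leq \max\{0, k-d-1\}$ versus $|S| \geq \max\{1, k-d\}$. In the first case there is no matroid blocking at all --- for \emph{every} $u \in OPT \setminus S$ the candidate set $[\DS{u} \cap (OPT \cup S)] \setminus \cN_u + u$ still fits under the cardinality bound, so $u$ was rejected purely by the threshold, i.e.\ $f(u \mid (OPT \setminus \cN_u) \cup S) < \tau$. Summing these telescoping marginals over all of $OPT$ (no restriction to a heavy subset, no $O'$ or $O''$) gives $f(OPT) \leq f(OPT \cup S) < f(S) + k\tau \leq f(S) + f(OPT)/2$, hence $f(S) \geq f(OPT)/2$; the loss from $S$'s elements depressing the marginals of $OPT$ is absorbed by comparing against $f(OPT \cup S)$ rather than against the weights $w(u)$. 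In the second case, $|S| \geq k - d$ forces at least roughly $k/[2(d+1)]$ accepting iterations, each adding at least $\tau = \opt_2/(2k)$, so $f(S) \geq f(OPT)/[8(d+1)]$. To repair your write-up you should replace your three-regime scheme (and the set $OPT^*$ entirely) by this dichotomy at $|S| \approx k - d$, which is where the uniform structure actually pays off.
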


The algorithm we use to prove Proposition~\ref{p:uniform_aided_2} is Algorithm~\ref{alg:cardinality_aided_2}.

\begin{algorithm}[h!t]
\caption{\textsf{$2$-Aided Cardinality}} \label{alg:cardinality_aided_2}
Let $\tau \gets \frac{\opt_2}{2k}$.\\
Let $S \gets \varnothing$.\\
\For{every arriving element $u$}
{
	\If{there exits a set $\DO{u} \subseteq \DS{u} \setminus \cN_u$ such that $f(u \mid \DO{u} \cup S) \geq \tau$ and\\
	\hspace{2.5in} $|S| + |\DO{u}| + 1 \leq k$}
	{
		Add $\DO{u} + u$ to $S$.
	}
}
\Return{$S$}.
\end{algorithm}

Let $S_u$ be the set $S$ before the element $u$ is processed. When $S$ appears below without a subscript it denotes the output of the algorithm.

\begin{lemma}
If $|S| \leq \max\{0, k - d - 1\}$, then $f(S) \geq f(OPT)/2$.
\end{lemma}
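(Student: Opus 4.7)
The plan is to show that, because $|S|$ is small enough to leave room for any single element together with all its supermodular dependencies, every element of $OPT \setminus S$ must have been rejected by the threshold $\tau$; then, by carefully choosing an order in which to add $OPT \setminus S$ on top of $S$, we will be able to bound each telescoping marginal by $\tau$ and sum.

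Enumerate $OPT \setminus S = \{v_1,\dots,v_m\}$ in \emph{reverse} arrival order, so that $\{v_1,\dots,v_{i-1}\}$ consists of elements that arrive strictly after $v_i$. For each $i$, split $S$ into $A_i = S_{v_i}$ (the elements of $S$ present when $v_i$ is processed) and $B_i = S \setminus A_i$ (the elements of $S$ arriving after $v_i$), and consider the candidate set
$$\DO{v_i} \;=\; (B_i \cup \{v_1,\dots,v_{i-1}\}) \cap \DS{v_i}.$$
By construction $\DO{v_i} \subseteq \DS{v_i}\setminus\cN_{v_i}$, and $|\DO{v_i}| \le |\DS{v_i}| \le d$, so the cardinality check $|S_{v_i}| + |\DO{v_i}| + 1 \le |S| + d + 1 \le k$ holds whenever $|S| \le k-d-1$ (the small-$k$ corner case $|S|=0 \le k-d-1$ is handled by the sharper bound $|\DO{v_i}| \le i-1 \le k-1$). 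Since $v_i \notin S$, the algorithm must have rejected $v_i$ despite this feasible choice of dependencies, so $f(v_i \mid \DO{v_i}\cup S_{v_i}) < \tau$.

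The crucial comparison is between $\DO{v_i}\cup S_{v_i}$ and the set $S \cup \{v_1,\dots,v_{i-1}\}$ that appears in the telescoping expansion of $f(OPT\cup S)$. The set difference is exactly $(B_i \cup \{v_1,\dots,v_{i-1}\}) \setminus \DS{v_i}$, i.e.\ only \emph{non-supermodular-dependencies} of $v_i$. Because adding a non-dependency to the base set of a marginal can only (weakly) decrease it (this is the defining property of $\DS{v_i}$, applied one element at a time), we get
$$f(v_i \mid S \cup \{v_1,\dots,v_{i-1}\}) \;\le\; f(v_i \mid \DO{v_i}\cup S_{v_i}) \;<\; \tau.$$

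Summing the telescoping identity
$f(OPT\cup S) - f(S) = \sum_{i=1}^m f(v_i \mid S\cup\{v_1,\dots,v_{i-1}\})$
then yields $f(OPT\cup S) - f(S) < m\tau \le k\tau = \opt_2/2 \le f(OPT)/2$, and since $f(OPT) \le f(OPT\cup S)$ by monotonicity, rearranging gives $f(S) \ge f(OPT)/2$. I expect the main obstacle to be step~(3) above: making sure the reverse arrival order is exactly what makes $\DO{v_i}$ lie in $\DS{v_i}\setminus\cN_{v_i}$, and then recognizing that the discrepancy between $\DO{v_i}\cup S_{v_i}$ and $S\cup\{v_1,\dots,v_{i-1}\}$ consists purely of non-dependencies so that the supermodular-degree property applies in the right direction.
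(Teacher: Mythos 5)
Your proof is correct and is essentially the paper's own argument: both telescope $f(OPT\cup S)-f(S)$ over $OPT\setminus S$ in reverse arrival order, exhibit a feasible dependency set that the threshold must have rejected, and use the definition of $\DS{\cdot}$ to dominate each telescoping marginal by the rejected-query marginal. The paper's choice $\DO{u}=[\DS{u}\cap(OPT\cup S)]\setminus\cN_u$ is a slightly larger feasible set than your $\DO{v_i}$, but this difference is inessential.
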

\begin{proof}
Assume, towards a contradiction, that $|S| \leq \max\{0, k - d - 1\}$ and still $f(S) < f(OPT)/2$. Since $|S| \leq \max\{0, k - d - 1\}$, for every element $u \in OPT \setminus S$ Algorithm~\ref{alg:cardinality_aided_2} could add the set $[\DS{u} \cap (OPT \cup S)] \setminus \cN_u + u$ to $S$. From the fact that the algorithm did not add this set (or any other set containing $u$) to $S$, we learn that:
\[
	f(u \mid (OPT \setminus \cN_u) \cup S)
	\leq
	f(u \mid [\DS{u} \cap (OPT \cup S)] \setminus \cN_u \cup S_u)
	<
	\tau
	\enspace,
\]
where the first inequality holds by the definition of $\DS{u}$. Adding the last inequality over all elements $u \in OPT \setminus S$ gives:
\[
	f(OPT)
	\leq
	f(OPT \cup S)
	=
	f(S) + \sum_{u \in OPT} f(u \mid (OPT \setminus \cN_u) \cup S)
	<
	f(S) + k \tau
	\enspace.
\]
Plugging the assumption that $f(S) < f(OPT)/2$ and the definition of $\tau$ into the last inequality gives an immediate contradiction.
\end{proof}

\begin{lemma}
If $|S| \geq \max\{1, k - d\}$, then $f(S) \geq f(OPT) / [8(d + 1)]$.
\end{lemma}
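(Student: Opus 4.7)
The plan is to quantify the progress made per successful iteration of Algorithm~\ref{alg:cardinality_aided_2}. Whenever the if-condition fires, the algorithm appends the set $\DO{u}+u$, which has cardinality at most $d+1$, and by monotonicity the threshold test $f(u \mid \DO{u} \cup S) \geq \tau$ forces the batch marginal $f(\DO{u}+u \mid S)$ to be at least $\tau$ as well. So if the algorithm performed $r$ successful insertions in total, then $|S| \leq (d+1)r$ and $f(S) \geq r\tau$, which telescopes into the master inequality
\[
	f(S) \;\geq\; \left\lceil \frac{|S|}{d+1} \right\rceil \cdot \tau.
\]

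Next I would substitute $\tau = \opt_2/(2k)$ and use $\opt_2 \geq f(OPT)/2$ to rewrite this as $f(S) \geq \lceil |S|/(d+1)\rceil \cdot f(OPT)/(4k)$. The hypothesis $|S| \geq \max\{1,k-d\}$ then enters through a short split on the relative size of $k$ and $d$. In the regime $k \leq 2d+1$ the hypothesis only guarantees $|S| \geq 1$, but this alone gives $f(S) \geq \tau \geq f(OPT)/(4k) \geq f(OPT)/(8d+4) \geq f(OPT)/[8(d+1)]$, using $4k \leq 8d+4 < 8(d+1)$. In the regime $k \geq 2d+1$ the hypothesis yields $|S| \geq k-d \geq k/2$, so $\lceil |S|/(d+1)\rceil \geq k/[2(d+1)]$, and plugging this into the master inequality produces $f(S) \geq k/[2(d+1)] \cdot f(OPT)/(4k) = f(OPT)/[8(d+1)]$.

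I do not expect a serious obstacle here: the argument is a short counting step relating batch size, threshold, and final cardinality, followed by a two-case comparison of $k$ with $2d+1$. The only subtle point worth verifying is that every successful iteration raises $f(S)$ by at least $\tau$ (and not merely raises the marginal of $u$ over $\DO{u} \cup S$ by $\tau$); this is immediate from monotonicity since adding $u$ cannot decrease $f$, so $f(\DO{u}+u \mid S) \geq f(u \mid \DO{u} \cup S) \geq \tau$. Once this is noted, the rest is a direct computation that handles the boundary $k \leq d+1$ (where $\max\{1,k-d\}=1$) within the first regime without any extra work.
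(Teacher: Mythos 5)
Your proposal is correct and follows essentially the same path as the paper: establishing $f(S) \geq \lceil |S|/(d+1) \rceil \cdot \tau$ via the per-batch monotonicity observation $f(\DO{u}+u \mid S_u) \geq f(u \mid \DO{u} \cup S_u) \geq \tau$, then substituting the hypothesis and the definition of $\tau$. The only cosmetic difference is that you make the final arithmetic explicit with a case split on $k$ versus $2d+1$, whereas the paper compresses the same comparison into the single inequality $\lceil \max\{1,k-d\}/(d+1)\rceil \geq k/[2(d+1)]$.
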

\begin{proof}
Note that each time that Algorithm~\ref{alg:cardinality_aided_2} adds elements to $S$, it adds up to $d + 1$ elements and $f(S)$ increases by at least $\tau$ since, by monotonicity:
\[
	f(\DO{u} + u \mid S_u)
	\geq
	f(u \mid \DO{u} \cup S_u)
	\enspace.
\]
Hence, we can lower bound $f(S)$ by:
\[
	f(S)
	\geq
	\left\lceil \frac{|S|}{d + 1} \right\rceil \cdot \tau
	\geq
	\left\lceil \frac{\max\{1, k - d\}}{d + 1} \right\rceil \cdot \frac{\opt_2}{2k}
	\geq
	\frac{k}{2(d + 1)} \cdot \frac{f(OPT)}{4k}
	=
	\frac{f(OPT)}{8(d + 1)}
	\enspace.
	\qedhere
\]
\end{proof}

Proposition~\ref{p:uniform_aided_2} follows immediately from the last two lemmata. Theorem~\ref{thm:aided_uniform_general_alpha} generalizes Proposition~\ref{p:uniform_aided_2} to general $\alpha$-aided algorithms. The algorithm we use to prove Theorem~\ref{thm:aided_uniform_general_alpha} is Algorithm~\ref{alg:alpha_to_2}.

\begin{algorithm}[h!t]
\caption{\textsf{$\alpha$-Aided Cardinality}} \label{alg:alpha_to_2}
Let $p$ be a uniformly random integer from the set $\{0, 1, \dotsc, \lceil \log_2 \alpha \rceil\}$.\\
Apply Algorithm~\ref{alg:cardinality_aided_2} with $\opt_2 = 2^p \cdot \opt_\alpha$.\\
\end{algorithm}

\begin{proof}[Proof of Theorem~\ref{thm:aided_uniform_general_alpha}]
The largest value Algorithm~\ref{alg:alpha_to_2} can assign to $\opt_2$ is:
\[
	2^{\lceil \log \alpha \rceil} \cdot \opt_\alpha
	\geq
	\alpha \cdot (f(OPT)/\alpha)
	=
	f(OPT)
	\enspace.
\]
On the other hand, the smallest value Algorithm~\ref{alg:alpha_to_2} can assign to $\opt_2$ is: $2^0 \cdot \opt_\alpha \leq	f(OPT)$. Hence, for some $p$ Algorithm~\ref{alg:alpha_to_2} is guaranteed to produce a value $\opt_2$ obeying $f(OPT)/2 \leq \opt_2 \leq f(OPT)$, in which case Algorithm~\ref{alg:cardinality_aided_2} is $O(d)$-competitive by Proposition~\ref{p:uniform_aided_2}. Since every value of $p$ occurs with a probability of at least $1 / (\log_2 \alpha + 2)$, we get that the competitive ratio of Algorithm~\ref{alg:alpha_to_2} is at most $O(d \log \alpha)$.
\end{proof}

\section{Full proof for {\boldmath $m^* \leq f(OPT)/(256(d + 1)^2)$}} \label{app:FullProofSmallCaseNewEstimation}
In this section we analyze Algorithm~\ref{alg:multiple_elements_approximation} in the case of a small $m^*$. We begin with a concentration result proved in Section~\ref{sec:process}. The analysis of Algorithm~\ref{alg:multiple_elements_approximation} appears in Section~\ref{ssc:proof_small_m}.

\subsection{Concentration result} \label{sec:process}
In this section we study a stochastic process consisting of rounds. In each round $i \geq 1$, a positive value $X_i \in (0, B]$ (for some parameter $B > 0$) arrives, and is flagged ``accepted'' with a probability $p \in [0, 1]$, independently, and ``rejected'' otherwise. The value $X_i$ itself might depend on the way previous values have been flagged, but not on the way $X_i$ itself is flagged. More formally, let $A$ be the set of indexes corresponding to accepted values, then $X_i$ is a function of the set $A \cap \{1, 2, \dotsc, i - 1\}$. The process terminates after $T \geq 1$ rounds, where $T$ itself might depend on the way values have been flagged. However, it is guaranteed that $T$ is upper bounded by a finite integer $\bar{T}$ and:
\[
	\sum_{i = 1}^T X_i
	\geq
	L
\]
for some parameter $L \geq 0$.

Let $\Sigma_A = \sum_{i \in A} X_i$ be the random sum of the accepted values. Our objective is to show a concentration bound for $\Sigma_A$. Let us first prove such a bound for the case when we make an additional assumption.

\begin{assumption} \label{asm:powers_of_2}
Each value $X_i$ is equal to $B/2^j$ for some value $j \geq 0$.
\end{assumption}

Using the above assumption, we can now define some additional notation. Let $\delta$ be the smallest number such that some value $X_i$ has a positive probability to take the value $\delta$. Observe that $\delta$ is well defined since the above process has only finitely many possible outcomes. By Assumption~\ref{asm:powers_of_2}, every value $X_i$ is a multiple of $\delta$.

It is helpful to think of the values $X_i$ as intervals placed one after the other on the axis of real numbers, starting from $0$. In other words, for every value $X_i$ we have an interval starting at $\sum_{j = 1}^{i - 1} X_j$ and ending at $\sum_{j = 1}^i X_j$. Taking this point of view, every interval $X_i$ can be partitioned into $X_i/\delta$ ranges of size $\delta$. Let us associate a random variable with each one of these ranges. More formally, for every $i \geq 1$, let $Y_i$ be a random variable taking the value $1$ when the range $(\delta(i - 1), \delta i)$ is contained within an accepted interval $X_j$, and the value $0$ in all other cases.

\begin{lemma} \label{lem:sum_Y}
Under Assumption~\ref{asm:powers_of_2}, $\Sigma_A = \delta \cdot \sum_{i = 1}^{\bar{T} \cdot B/\delta} Y_i$.
\end{lemma}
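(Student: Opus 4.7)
The plan is to exploit Assumption~\ref{asm:powers_of_2} to verify that the intervals $X_j$ fit exactly into the $\delta$-grid used to define the indicators $Y_i$, so that accepted intervals decompose cleanly into unit-indicator ranges.

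First I would observe that by Assumption~\ref{asm:powers_of_2}, every $X_j$ is a power-of-two fraction of $B$, and in particular a nonnegative multiple of $\delta$ (since $\delta$ is the smallest such value and all others are larger powers of two, hence multiples of $\delta$). Consequently the cumulative partial sums $\sum_{j=1}^{i-1} X_j$ and $\sum_{j=1}^{i} X_j$ are also multiples of $\delta$, so each interval associated with $X_j$ is precisely a union of $X_j/\delta$ disjoint ranges of the form $(\delta(i-1), \delta i)$, and these ranges are aligned with the grid used in the definition of the $Y_i$'s.

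Next I would analyze the indicators. If $j \in A$ (the interval is accepted), then by construction each of the $X_j/\delta$ ranges contained in it has its associated $Y_i$ equal to $1$. If $j \notin A$, then none of the ranges inside it are contained in any accepted interval (the intervals are pairwise disjoint), so all corresponding $Y_i$'s are $0$. Finally, any index $i$ whose range $(\delta(i-1), \delta i)$ lies past the endpoint $\sum_{j=1}^{T} X_j$ trivially gives $Y_i = 0$, since there is no accepted interval containing it.

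Putting these together, the nonzero contributions to $\sum_i Y_i$ come exactly from the ranges lying inside accepted intervals, yielding
\[
\sum_{i = 1}^{\bar{T} \cdot B/\delta} Y_i \;=\; \sum_{j \in A} \frac{X_j}{\delta} \;=\; \frac{\Sigma_A}{\delta},
\]
which rearranges to the claimed identity. The only point requiring care is to justify that the upper summation limit $\bar{T} \cdot B/\delta$ is large enough to cover every accepted range; this follows because $T \leq \bar{T}$ and $X_i \leq B$, so the entire concatenation of intervals fits within $[0, \bar{T} B]$. I do not expect any real obstacle here, since the statement is essentially a bookkeeping identity once the grid alignment granted by Assumption~\ref{asm:powers_of_2} is in place.
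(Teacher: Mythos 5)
Your proof is correct and follows essentially the same route as the paper's: both decompose each accepted interval into exactly $X_j/\delta$ grid-aligned ranges (using that every $X_j$, and hence every partial sum, is a multiple of $\delta$), note that ranges outside accepted intervals contribute zero, and check that the summation limit $\bar{T}\cdot B/\delta$ covers all relevant indices. The only difference is that you make the grid-alignment of the interval endpoints explicit, which the paper leaves implicit in its setup.
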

\begin{proof}
Fix a realization of the above process. Recall that $Y_i$ is zero whenever the range $(\delta(i-1), \delta i)$ is not contained within an accepted interval. On the other hand, consider an arbitrary accepted interval $X_i$. The interval $X_i$ contains $X_i / \delta$ ranges of size $\delta$. Moreover, all these ranges end at the point $\sum_{j = 1}^T X_j \leq \bar{T} \cdot B$ or earlier, and thus, their variables appear in the sum on the right hand side of the equality we want to prove. Hence, in conclusion, the contribution of $X_i$ to that sum is exactly $X_i / \delta$. The observation now follows since the intervals $\{X_i\}_{i = 1}^T$ are disjoint, and thus, so are their contributions to the sum.
\end{proof}

Let $I$ be the minimal integer such that $\delta I \geq L$.

\begin{observation} \label{obs:lower_bound}
Under Assumption~\ref{asm:powers_of_2}, $\Sigma_A \geq \delta \cdot \sum_{i = 1}^I Y_i$.
\end{observation}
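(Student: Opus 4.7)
The plan is to derive Observation~\ref{obs:lower_bound} as an immediate consequence of Lemma~\ref{lem:sum_Y}. Lemma~\ref{lem:sum_Y} already writes $\Sigma_A$ as $\delta$ times a sum of $\bar{T} \cdot B/\delta$ many indicator variables $Y_i \in \{0,1\}$. Since each $Y_i$ is non-negative, restricting the sum to the first $I$ terms can only decrease it, provided that $I \leq \bar{T} \cdot B/\delta$, so the only thing to verify is this inequality.

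To verify $I \leq \bar{T} \cdot B/\delta$, I would first note that Assumption~\ref{asm:powers_of_2} makes $B/\delta$ a power of two (by definition, $\delta = B/2^j$ for some $j \geq 0$), so $\bar{T} \cdot B/\delta$ is itself a positive integer. Next, in every realization of the process we have
\[
    L \;\leq\; \sum_{i=1}^{T} X_i \;\leq\; T \cdot B \;\leq\; \bar{T} \cdot B,
\]
where the first inequality is the guarantee on $L$, the second uses $X_i \leq B$, and the third uses $T \leq \bar{T}$. Hence the integer $\bar{T} \cdot B/\delta$ satisfies $\delta \cdot (\bar{T} \cdot B/\delta) = \bar{T} \cdot B \geq L$, and the minimality of $I$ yields $I \leq \bar{T} \cdot B/\delta$.

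Putting the two pieces together, Lemma~\ref{lem:sum_Y} together with non-negativity of the $Y_i$'s gives
\[
    \Sigma_A \;=\; \delta \cdot \sum_{i=1}^{\bar{T} \cdot B/\delta} Y_i \;\geq\; \delta \cdot \sum_{i=1}^{I} Y_i,
\]
which is exactly the statement of the observation. No real obstacle arises; the only potentially subtle point is confirming that the upper limit $I$ is indeed at most $\bar{T} \cdot B/\delta$, and this is handled by the trivial bound $\sum_{i=1}^T X_i \leq \bar{T} \cdot B$ together with integrality of $\bar{T} \cdot B/\delta$ under Assumption~\ref{asm:powers_of_2}.
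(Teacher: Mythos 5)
Your proof is correct and follows essentially the same route as the paper's: both invoke Lemma~\ref{lem:sum_Y}, establish $\bar{T} \cdot B \geq L$ from the bounds $X_i \leq B$, $T \leq \bar{T}$, and $\sum_i X_i \geq L$, and then use the integrality of $\bar{T} \cdot B / \delta$ together with the minimality of $I$ to conclude $I \leq \bar{T} \cdot B / \delta$, so that the non-negativity of the $Y_i$'s finishes the argument. You are slightly more explicit than the paper in noting that $B/\delta$ is a power of two under Assumption~\ref{asm:powers_of_2}, which is the precise reason the integrality claim holds, but the substance is identical.
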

\begin{proof}
Notice that $\bar{T} \cdot B$ is an upper bound on the sum $\sum_{j = 1}^T X_j$. On the other hand, $L$ is a lower bound on this sum, and thus, we get: $\bar{T} \cdot B \geq L$. Hence, $\delta(\bar{T} \cdot B / \delta) \geq L$. Since the term $\bar{T} \cdot B / \delta$ is an integer, the minimality of $I$ implies $I \leq \bar{T} \cdot B / \delta$. Using Lemma~\ref{lem:sum_Y} and the fact that the variables $Y_i$ are non-negative, we get:
\[
	\Sigma_A
	=
	\delta \cdot \sum_{i = 1}^{\bar{T} \cdot B/\delta} Y_i
	\geq
	\delta \cdot \sum_{i = 1}^I Y_i
	\enspace.
	\qedhere
\]
\end{proof}

Observation~\ref{obs:lower_bound} shows that it is enough for our purpose to prove a concentration bound for $\sum_{i = 1}^I Y_i$. The following observation gives another useful property of $I$.

\begin{observation} \label{obs:always_in_interval}
Under Assumption~\ref{asm:powers_of_2}, for every $1 \leq i \leq I$, the range $(\delta(i - 1), \delta i)$ is always contained within some interval $X_j$.
\end{observation}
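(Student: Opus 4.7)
The plan is to show that all interval endpoints are grid points at multiples of $\delta$, so a length-$\delta$ range lying on the grid cannot straddle an endpoint — it either lies entirely inside one interval or entirely outside all of them. Then I would verify that for $i \leq I$ the range $(\delta(i-1),\delta i)$ lies inside the total span of intervals, forcing the first alternative.

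\textbf{Step 1: every $X_j$ is a multiple of $\delta$.} By Assumption~\ref{asm:powers_of_2}, each $X_j$ equals $B/2^{e_j}$ for some non-negative integer $e_j$, and the same is true of $\delta$; write $\delta = B/2^{e^*}$, where $e^* = \max_{j} e_j$ over values with positive probability (well-defined since the process has finitely many outcomes). Then $X_j/\delta = 2^{e^*-e_j}$ is a positive integer, so every $X_j$ is an integer multiple of $\delta$. Consequently, every interval endpoint $\sum_{j=1}^{i} X_j$ is a multiple of $\delta$, as is the total length $\sum_{j=1}^T X_j$.

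\textbf{Step 2: each length-$\delta$ grid range is either inside one interval or outside all of them.} The range $(\delta(i-1),\delta i)$ has both endpoints on the grid $\delta \mathbb{Z}$, and all interval boundaries lie on this same grid by Step~1. Hence no interval boundary can lie strictly inside $(\delta(i-1),\delta i)$, so this range cannot overlap two different intervals or partially overlap an interval: it is contained in a single $X_j$ or is disjoint from every $X_j$.

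\textbf{Step 3: for $i \leq I$, the range lies inside the total span.} Since $\sum_{j=1}^T X_j \geq L$ and this sum is a multiple of $\delta$ by Step~1, we have $\sum_{j=1}^T X_j \geq \delta \lceil L/\delta \rceil = \delta I$ (using the definition of $I$ as the smallest integer with $\delta I \geq L$). Therefore, for every $1 \leq i \leq I$,
\[
(\delta(i-1),\delta i) \;\subseteq\; (0,\delta I) \;\subseteq\; \Bigl(0,\sum_{j=1}^{T} X_j\Bigr),
\]
which is precisely the disjoint union of the intervals $X_1,\dots,X_T$. Combined with Step~2, this forces $(\delta(i-1),\delta i)$ to be contained in a single interval $X_j$, which is the claim.

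I do not expect a serious obstacle here; the only subtle point is realizing that $\sum_{j=1}^T X_j \geq L$ combined with the divisibility from Assumption~\ref{asm:powers_of_2} gives $\sum_{j=1}^T X_j \geq \delta I$ for free, rather than just $\delta I \geq L$, which is what lets us cover the full range up to index $I$.
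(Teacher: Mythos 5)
Your proof is correct and uses essentially the same ingredients as the paper's: that Assumption~\ref{asm:powers_of_2} forces every $X_j$ (hence every interval endpoint and the total span) to be a multiple of $\delta$, combined with $\sum_{j=1}^T X_j \geq L$ and the minimality of $I$. The paper phrases this as a short proof by contradiction while you argue directly via $\sum_{j=1}^T X_j \geq \delta I$, but this is only a cosmetic difference.
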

\begin{proof}
Assume towards a contradiction that there is some realization of the process under which the range $(\delta(i - 1), \delta i)$ is not contained within some interval $X_j$. This implies:
\[
	L
	\leq
	\sum_{j = 1}^T X_j
	\leq
	\delta (i - 1)
	\leq
	\delta (I - 1)
	\enspace,
\]
contradicting the definition of $I$.
\end{proof}

Let us now study the distribution of the variables $\{Y_i\}_{i = 1}^I$. 

\begin{lemma} \label{lem:probability_Y}
Under Assumption~\ref{asm:powers_of_2}, $\Pr[Y_i = 1] = p$ for every $1 \leq i \leq I$. Hence, by linearity of expectation:
\[
	\bE\left[\sum_{i = 1}^I Y_i\right]
	=
	\sum_{i = 1}^I \bE\left[Y_i\right]
	=
	pI
	\enspace.
\]
\end{lemma}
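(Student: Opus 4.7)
The plan is to prove the lemma by conditioning on which interval contains the range $(\delta(i-1),\delta i)$. Fix $1\le i\le I$. By Observation~\ref{obs:always_in_interval}, on every realization of the process there is a (unique) random index $J_i$ such that the range $(\delta(i-1),\delta i)$ lies inside the interval $X_{J_i}$; moreover, by the definition of $Y_i$, we have
\[
Y_i \;=\; \mathbf{1}[J_i \in A].
\]
So it suffices to show $\Pr[J_i \in A] = p$.

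The next step is to argue that the event $\{J_i=j\}$ is measurable with respect to the $\sigma$-algebra generated by $X_1,\ldots,X_j$ together with the accept/reject flags of rounds $1,\ldots,j-1$, and is in particular independent of the flag of round $j$. Indeed, $\{J_i=j\}$ is exactly the event
\[
\Bigl\{\sum_{\ell=1}^{j-1} X_\ell \;\le\; \delta(i-1)\Bigr\} \;\cap\; \Bigl\{\sum_{\ell=1}^{j} X_\ell \;\ge\; \delta i\Bigr\},
\]
and by the description of the process each $X_\ell$ is a deterministic function of the accept/reject flags of rounds $1,\ldots,\ell-1$. Hence $\{J_i=j\}$ is determined entirely by the flags of rounds $1,\ldots,j-1$. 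Since the flag of round $j$ is independent of those earlier flags and equals ``accepted'' with probability $p$, we get
\[
\Pr[j \in A \mid J_i = j] \;=\; p.
\]

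Summing over $j$ yields
\[
\Pr[Y_i=1] \;=\; \sum_{j\ge 1} \Pr[J_i = j]\cdot \Pr[j\in A \mid J_i=j] \;=\; p\sum_{j\ge 1} \Pr[J_i=j] \;=\; p,
\]
where the last equality uses that $J_i$ is well-defined on the whole sample space. Linearity of expectation then gives $\bE[\sum_{i=1}^{I} Y_i]=pI$.

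The main obstacle I anticipate is a careful handling of the measurability/independence: one must make sure that $\{J_i = j\}$ really does not secretly depend on the flag of round $j$. The key fact that makes the argument go through is the explicit assumption, stated in the process description, that the value $X_j$ depends only on $A\cap\{1,\ldots,j-1\}$ and not on the flag of round $j$ itself; without this, conditioning on $\{J_i=j\}$ could bias the flag of round $j$ and destroy the clean answer $p$.
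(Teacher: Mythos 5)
Your proposal is correct and follows essentially the same route as the paper: you condition on the unique interval $X_j$ containing the range $(\delta(i-1),\delta i)$ (your event $\{J_i=j\}$ is exactly the paper's event $\cE_j$), argue that this event is determined by the flags of rounds $1,\dotsc,j-1$ while the flag of round $j$ is independent and accepted with probability $p$, and conclude via the law of total probability together with Observation~\ref{obs:always_in_interval}.
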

\begin{proof}
For every $j \geq 1$, let $\cE_j$ be the event that $j \leq T$ and $(\delta(i - 1), \delta i)$ is included in the interval $X_j$. Observe that $\cE_j$ depends only the acceptance of intervals $X_{j'}$ for $j' < j$. Moreover, given that $X_j$ exists it is accepted with probability $p$, independently of the acceptance of previous intervals. Hence, we get:
\[
	\Pr[Y_i = 1 \mid \cE_j]
	=
	p
	\enspace.
\]
The observation now follows by the law of total probability since Observation~\ref{obs:always_in_interval} guarantees that $(\delta(i - 1), \delta i)$ is included in some interval, and thus, the event $\cE_j$ happens for exactly a single value of $j$.
\end{proof}

For every $1 \leq h \leq B / \delta$, let us define $V_h = \{1 \leq i \leq I \mid i \equiv h \pmod{B / \delta}\}$. Observe that the sets $\{V_h\}_{h = 1}^{B / \delta}$ form a disjoint partition of the indexes from $1$ to $I$.

\begin{observation} \label{obs:different_intervals}
Under Assumption~\ref{asm:powers_of_2}, for every $1 \leq h \leq B / \delta$ and two different indexes $i, i' \in V_h$, the ranges $(\delta(i - 1), \delta i)$ and $(\delta(i' - 1), \delta i')$ cannot be contained in one interval $X_j$.
\end{observation}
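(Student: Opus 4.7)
The plan is that this observation reduces to a clean arithmetic argument combining two facts already in hand: every interval $X_j$ has length at most $B$ (since $X_i \in (0,B]$), and any two distinct indexes $i, i' \in V_h$ differ by at least $B/\delta$ (by definition of $V_h$, they differ by a positive multiple of $B/\delta$). Once these are on the table, the conclusion follows because the union of the two size-$\delta$ ranges is too wide to fit inside any interval of length at most $B$.

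More concretely, I would assume without loss of generality that $i < i'$, so that $i' - i \geq B/\delta$. The leftmost point of $(\delta(i-1), \delta i)$ is $\delta(i-1)$ and the rightmost point of $(\delta(i'-1), \delta i')$ is $\delta i'$. Their separation equals
\[
    \delta i' - \delta(i-1) = \delta(i' - i + 1) \geq \delta\bigl(B/\delta + 1\bigr) = B + \delta \, > \, B.
\]
Any interval $X_j$ containing both ranges would in particular contain both of these endpoints and therefore would need length greater than $B$, contradicting $X_j \leq B$ which is a consequence of Assumption~\ref{asm:powers_of_2} and the a priori bound $X_i \in (0, B]$.

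I do not expect a genuine obstacle here — the only care needed is bookkeeping on the endpoints to confirm that the separation is strictly larger than $B$ (and not merely $\geq B$), which is what ensures that $X_j$'s length bound $\leq B$ actually forbids containment. Everything else is just arithmetic. This observation is what will let the later concentration argument treat the indicators $\{Y_i\}_{i \in V_h}$ as being driven by acceptances of pairwise distinct intervals $X_j$, and hence as mutually independent Bernoullis (given the relevant history), enabling a standard Chernoff-type bound on $\sum_{i=1}^I Y_i$ and ultimately on $\Sigma_A$ via Observation~\ref{obs:lower_bound}.
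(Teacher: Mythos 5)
Your proof is correct and follows essentially the same route as the paper: assume $i < i'$, use $i' - i \geq B/\delta$ from the definition of $V_h$ to get $\delta i' - \delta(i-1) \geq B + \delta > B$, and conclude that no interval of length at most $B$ can contain both ranges. (Your closing remark about a Chernoff-type bound is a slight mischaracterization of the downstream use---the paper bounds the variance and applies Chebyshev---but that does not affect this observation.)
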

\begin{proof}
Assume without loss of generality that $i < i'$. The definition of $V_h$ guarantees that $i + B / \delta \leq i'$. Hence,
\[
	\delta i' - \delta(i - 1)
	=
	\delta(i' - i) + \delta
	\geq
	B + \delta
	\enspace.
\]
Hence, any interval $X_j$ containing both ranges $(\delta(i - 1), \delta i)$ and $(\delta(i' - 1), \delta i')$ must be of length at least $B + \delta$, which contradicts the definition of the process.
\end{proof}

\begin{lemma} \label{lem:independence}
Under Assumption~\ref{asm:powers_of_2}, for every $1 \leq h \leq B / \delta$, the variables of $\{Y_i \mid i \in V_h\}$ are independent.
\end{lemma}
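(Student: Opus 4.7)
The plan is to reinterpret the process via an iid $\mathrm{Bernoulli}(p)$ coin sequence $\xi_1, \xi_2, \ldots$ and recognize each $Y_{i_k}$ as one of these coins evaluated at a \emph{predictable} stopping time. Concretely, I would let $\xi_j$ be the acceptance coin used in round $j$ (so that $X_j$ is accepted iff $\xi_j = 1$), and observe that both the revealed value $X_j$ and the indicator of $\{j \leq T\}$ are deterministic functions of $\xi_1, \ldots, \xi_{j-1}$, since the process decides what to reveal next and whether to continue based solely on which previous intervals were flagged accepted.

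Enumerate $V_h$ as $i_1 < i_2 < \cdots < i_m$ and, for each $k$, let $J_k$ be the unique index $j$ such that the range $(\delta(i_k - 1), \delta i_k)$ is contained in $X_j$; existence comes from Observation~\ref{obs:always_in_interval} and uniqueness from pairwise disjointness of the intervals. Then $Y_{i_k} = \xi_{J_k}$. From the spatial ordering of the ranges together with Observation~\ref{obs:different_intervals} we have $J_1 < J_2 < \cdots < J_m$. The crucial point is that $\{J_k = t\}$ belongs to $\cF_{t-1} := \sigma(\xi_1, \ldots, \xi_{t-1})$, because its two defining conditions (``$t \leq T$'' and ``the $k$-th range is covered by $X_t$'') each depend only on $\xi_1, \ldots, \xi_{t-1}$; equivalently, each $J_k$ is a predictable stopping time with respect to this filtration.

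The heart of the argument is the general fact that if $\tau_1 < \tau_2 < \cdots < \tau_m$ are predictable stopping times with respect to the filtration of an iid $\mathrm{Bernoulli}(p)$ sequence, then $\xi_{\tau_1}, \ldots, \xi_{\tau_m}$ are iid $\mathrm{Bernoulli}(p)$. I would prove this by peeling off the last coin: for any $a_1, \ldots, a_m \in \{0, 1\}$,
\[
\Pr[\xi_{\tau_1} = a_1, \ldots, \xi_{\tau_m} = a_m] \;=\; \sum_t \Pr\bigl[\xi_{\tau_l} = a_l \text{ for } l < m,\; \tau_m = t,\; \xi_t = a_m\bigr].
\]
For each $t$, the event $\{\tau_m = t\} \cap \{\xi_{\tau_l} = a_l : l < m\}$ lies in $\cF_{t-1}$ (since $\tau_l < \tau_m = t$ forces $\xi_{\tau_l}$ to be among $\xi_1, \ldots, \xi_{t-1}$), while $\xi_t$ is independent of $\cF_{t-1}$. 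Hence each summand factors as $p^{a_m}(1-p)^{1-a_m} \cdot \Pr[\xi_{\tau_l} = a_l : l < m,\; \tau_m = t]$; summing over $t$ gives $p^{a_m}(1-p)^{1-a_m} \cdot \Pr[\xi_{\tau_l} = a_l : l < m]$, and iterating yields the product $\prod_l p^{a_l}(1-p)^{1-a_l}$.

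The delicate point is that $J_k$ may genuinely depend on earlier acceptances $\xi_{J_{k'}}$ with $k' < k$ (one can easily cook up examples where $J_2$ shifts according to the value of $\xi_{J_1}$), so it is \emph{not} legitimate to condition on $J_1, \ldots, J_m$ jointly and factor the $\xi_{J_k}$'s out in one shot. The telescoping above sidesteps this obstacle by freezing only the latest stopping time at each step, at which point its coin is fresh relative to the frozen past. Applying the fact with $\tau_k = J_k$ and combining with Lemma~\ref{lem:probability_Y} yields the claimed mutual independence.
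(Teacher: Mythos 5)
Your argument is correct and is essentially the paper's own proof recast in stopping-time language: your events $\{J_m = t\}$ are exactly the paper's events $\cE_t$ (that the interval $X_t$ exists and covers the range of the latest variable in $V_h$), your use of Observation~\ref{obs:different_intervals} to place the earlier coins inside $\cF_{t-1}$ mirrors the paper's observation that the conditioned-on variables only reveal acceptances of intervals preceding $X_t$, and your peeling/summation over $t$ is the paper's law-of-total-probability step over the $\cE_\ell$. So this is the same argument, merely formalized via an explicit i.i.d.\ coin filtration and a general optional-skipping lemma instead of conditioning on an assignment $\cA$ to the earlier $Y_j$'s.
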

\begin{proof}
For every $i \in V_h$, let $V_h^{<i}$ denote the intersection $V_h \cap \{1, 2, \dotsc, i - 1\}$. To prove the lemma it is enough to show that for every $i \in V_h$ the variable $Y_i$ takes the value $1$ with probability $p$ conditioned on any assignment to the variables of $\{Y_j \mid j \in V_h^{<i}\}$. Let $\cA$ denote an arbitrary such assignment having a non-zero probability. For every $1 \leq \ell \leq \bar{T}$, let $\cE_\ell$ be the event that the interval $X_\ell$ exists and contains the range $(\delta(i - 1), \delta i)$.

Assume $\cE_\ell$ happens. By Observation~\ref{obs:different_intervals} the variables of $\{Y_j \mid j \in V_h^{<i}\}$ correspond to ranges contained in intervals before $X_\ell$. Thus, the values of these variables only imply information about the acceptance of these intervals. Since the acceptance of $X_\ell$ is independent of the acceptance of previous intervals, we get that every $1 \leq \ell \leq \bar{T}$ obeying $\Pr[\cE_\ell \mid \cA] > 0$ must also obey:
\[
	\Pr[Y_i = 1 \mid \cE_\ell, \cA] = p \enspace.
\]

Clearly the events $\{\cE_\ell\}_{\ell = 1}^{\bar{T}}$ are disjoint. By Observation~\ref{obs:always_in_interval} we also know that one of them must happen. Hence, we get:
\[
	\Pr[Y_i \mid \cA]
	=
	\sum_{\substack{1 \leq \ell \leq \bar{T} \\ \Pr[\cE_\ell \mid \cA] > 0}} \Pr[\cE_\ell \mid \cA] \cdot \Pr[Y_i = 1 \mid \cE_\ell, \cA]
	=
	p \cdot \sum_{\ell = 1}^{\bar{T}} \Pr[\cE_\ell \mid \cA]
	=
	p
	\enspace.
	\qedhere
\]
\end{proof}

\begin{corollary} \label{cor:h_var_bound}
Under Assumption~\ref{asm:powers_of_2}, for every $1 \leq h \leq B / \delta$, $\Var\left[\sum_{i \in V_h} Y_i\right] \leq p(L/B + 2)$.
\end{corollary}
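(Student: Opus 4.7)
My plan is to combine the two structural lemmas on the variables $\{Y_i : i \in V_h\}$ with a simple counting bound on $|V_h|$. By Lemma~\ref{lem:probability_Y}, each $Y_i$ is a Bernoulli random variable with parameter $p$, and by Lemma~\ref{lem:independence}, the variables within $V_h$ are mutually independent. Hence the variance of their sum factors as $\Var\bigl[\sum_{i \in V_h} Y_i\bigr] = |V_h| \cdot p(1-p) \leq p \cdot |V_h|$, and the corollary reduces to establishing $|V_h| \leq L/B + 2$.

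To bound $|V_h|$, I would first control $I$. By the minimality in the definition of $I$, we have $\delta(I-1) < L$, so $I < L/\delta + 1$. Since $V_h$ consists of the indices in $\{1,\dots,I\}$ lying in a single residue class modulo $B/\delta$, consecutive elements of $V_h$ differ by exactly $B/\delta$, and therefore $|V_h| \leq \lfloor (I-h)/(B/\delta) \rfloor + 1 \leq I\delta/B + 1$. Combining with the bound on $I$ gives $|V_h| \leq (L/\delta + 1)\delta/B + 1 = L/B + \delta/B + 1$.

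Finally, Assumption~\ref{asm:powers_of_2} guarantees $\delta = B/2^j$ for some integer $j \geq 0$, so $\delta \leq B$ and thus $\delta/B \leq 1$. Plugging this into the previous estimate yields $|V_h| \leq L/B + 2$, and multiplying by $p$ completes the bound. The calculation is essentially routine; the only mild subtlety is keeping the integer ceilings tight enough to land at the additive ``$+2$'' rather than a larger constant, which is why I would be careful to use $I \leq L/\delta + 1$ (from minimality) together with $\delta \leq B$ (from Assumption~\ref{asm:powers_of_2}) separately.
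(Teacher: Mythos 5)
Your proposal is correct and follows essentially the same route as the paper: bound the variance of the sum by $p\,|V_h|$ via Lemma~\ref{lem:probability_Y} and Lemma~\ref{lem:independence}, then bound $|V_h|$ by $L/B + 2$ using the minimality of $I$ and the fact that $\delta \leq B$. The only cosmetic difference is that you count the residue class directly as $\lfloor (I-h)/(B/\delta) \rfloor + 1$ whereas the paper uses $\lceil I/(B/\delta) \rceil$, which yields the same estimate.
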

\begin{proof}
By Lemma~\ref{lem:probability_Y}, for every $i \in V_h$, $\Var[Y_i] = p(1 - p) \leq p$. Thus, by Lemma~\ref{lem:independence},
\[
	\Var\left[\sum_{i \in V_h} Y_i\right]
	=
	\sum_{i \in V_h} \Var\left[Y_i\right]
	\leq
	\sum_{i \in V_h} p
	=
	p \cdot |V_h|
	\enspace.
\]
The definition of $V_h$ guarantees that its size is at most:
\[
	|V_h|
	\leq
	\left\lceil \frac{I}{B / \delta} \right\rceil
	\leq
	\left\lceil \frac{L / \delta + 1}{B / \delta} \right\rceil
	\leq
	\frac{L + \delta}{B} + 1
	\leq
	\frac{L}{B} + 2
	\enspace.
	\qedhere
\]
\end{proof}

To bound the variance of the sum $\sum_{i = 1}^I Y_i$, we need the following simple technical lemma.

\begin{lemma} \label{lem:variance_of_sum}
For a set of random variables $Z_1, Z_2, \dotsc Z_\ell$, each having a finite variance,
\[
	\Var\left[\sum_{i = 1}^\ell Z_i\right]
	\leq
	\left(\sum_{i = 1}^\ell \sqrt{\Var[Z_i]}\right)^2
	\enspace.
\]
\end{lemma}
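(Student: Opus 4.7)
The plan is to expand the variance of the sum as a double sum of covariances and then bound each covariance using the Cauchy--Schwarz inequality. More concretely, I would begin from the standard identity
\[
    \Var\left[\sum_{i=1}^\ell Z_i\right] = \sum_{i=1}^\ell \sum_{j=1}^\ell \Cov(Z_i, Z_j),
\]
which is valid since each $Z_i$ has finite variance (and hence all pairwise covariances are finite and well-defined).

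Next, I would invoke the Cauchy--Schwarz inequality for covariances, which gives $|\Cov(Z_i, Z_j)| \leq \sqrt{\Var[Z_i] \cdot \Var[Z_j]} = \sqrt{\Var[Z_i]} \cdot \sqrt{\Var[Z_j]}$ for every pair $i, j$. Plugging this bound into the double sum yields
\[
    \Var\left[\sum_{i=1}^\ell Z_i\right] \leq \sum_{i=1}^\ell \sum_{j=1}^\ell \sqrt{\Var[Z_i]} \cdot \sqrt{\Var[Z_j]} = \left(\sum_{i=1}^\ell \sqrt{\Var[Z_i]}\right)^{\!2},
\]
where the last equality simply factors the double sum as the square of a single sum. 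This completes the proof.

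There is no real obstacle here: the lemma is a routine consequence of the covariance expansion and Cauchy--Schwarz. The only mild subtlety is justifying that the covariance terms are well-defined and finite, which follows immediately from the assumption that each $Z_i$ has finite variance (so by Cauchy--Schwarz each pair has finite covariance). No independence or other structural assumption on the $Z_i$ is required, which is exactly why the bound is weaker than the variance-of-sum identity available in the independent case; this weakness is acceptable for the intended application, where Corollary~\ref{cor:h_var_bound} already controls each $\Var[\sum_{i \in V_h} Y_i]$ via independence within a single class $V_h$, and this lemma will be used to combine these $B/\delta$ class-sums into a bound on $\Var[\sum_{i=1}^I Y_i]$.
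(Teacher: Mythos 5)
Your proof is correct and follows essentially the same route as the paper: expand $\Var\bigl[\sum_i Z_i\bigr]$ as the double sum of covariances and bound each term by $\sqrt{\Var[Z_i]\cdot\Var[Z_j]}$ via Cauchy--Schwarz, then factor the double sum as a square. Nothing is missing.
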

\begin{proof}
\[
	\Var\left[\sum_{i = 1}^\ell Z_i\right]
	=
	\sum_{i = 1}^\ell \sum_{j = 1}^\ell \Cov[Z_i, Z_j]
	\leq
	\sum_{i = 1}^\ell \sum_{j = 1}^\ell \sqrt{\Var[Z_i] \cdot \Var[Z_j]}
	=
	\left(\sum_{i = 1}^\ell \sqrt{\Var[Z_i]}\right)^2
	\enspace.
	\qedhere
\]
\end{proof}

\begin{corollary} \label{cor:total_variance}
Under Assumption~\ref{asm:powers_of_2}, $\Var\left[\sum_{i = 1}^I Y_i\right] \leq pB\delta^{-2}(L + 2B)$.
\end{corollary}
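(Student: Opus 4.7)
The plan is straightforward: combine Corollary~\ref{cor:h_var_bound} with Lemma~\ref{lem:variance_of_sum} applied to the partition $\{V_h\}_{h=1}^{B/\delta}$. Since these sets form a disjoint partition of $\{1, 2, \dotsc, I\}$, we can write
\[
	\sum_{i=1}^I Y_i = \sum_{h=1}^{B/\delta} Z_h, \quad \text{where } Z_h = \sum_{i \in V_h} Y_i.
\]
Each $Z_h$ has its variance controlled by Corollary~\ref{cor:h_var_bound}, namely $\Var[Z_h] \leq p(L/B + 2)$.

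Next I would invoke Lemma~\ref{lem:variance_of_sum} on the $B/\delta$ variables $Z_1, \dotsc, Z_{B/\delta}$ to obtain
\[
	\Var\left[\sum_{i=1}^I Y_i\right] = \Var\left[\sum_{h=1}^{B/\delta} Z_h\right] \leq \left(\sum_{h=1}^{B/\delta} \sqrt{\Var[Z_h]}\right)^2 \leq \left(\frac{B}{\delta} \cdot \sqrt{p(L/B + 2)}\right)^2.
\]

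Finally, I would simplify the right-hand side: the square equals $(B/\delta)^2 \cdot p(L/B + 2) = pB^2\delta^{-2}(L/B + 2) = pB\delta^{-2}(L + 2B)$, which is precisely the claimed bound.

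There is no real obstacle here; the work was already done in establishing the per-bucket variance bound (Corollary~\ref{cor:h_var_bound}) and the Cauchy--Schwarz style inequality (Lemma~\ref{lem:variance_of_sum}). The only thing worth noting is that we deliberately pay a factor of $B/\delta$ in the variance (through the squared sum of standard deviations) rather than a smaller factor, because the variables $\{Z_h\}_{h=1}^{B/\delta}$ are not independent across different buckets---independence only holds within a single bucket, as established in Lemma~\ref{lem:independence}.
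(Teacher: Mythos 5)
Your proposal is correct and follows exactly the paper's own argument: decompose $\sum_{i=1}^I Y_i$ over the partition $\{V_h\}_{h=1}^{B/\delta}$, bound each bucket's variance via Corollary~\ref{cor:h_var_bound}, and combine with Lemma~\ref{lem:variance_of_sum}, with the same final arithmetic. No differences worth noting.
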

\begin{proof}
Observe that:
\[
	\sum_{i = 1}^I Y_i
	=
	\sum_{h = 1}^{B / \delta} \sum_{i \in V_h} Y_i
	\enspace.
\]
Hence, by Corollary~\ref{cor:h_var_bound} and Lemma~\ref{lem:variance_of_sum}:
\[
	\Var\left[\sum_{i = 1}^I Y_i\right]
	\leq
	\left(\frac{B}{\delta} \cdot \sqrt{p\left(\frac{L}{B} + 2 \right)}\right)^2
	=
	\frac{p B}{\delta^2}\left(L + 2B\right)
	\enspace.
	\qedhere
\]
\end{proof}

We are now ready to prove the promised concentration bound for $\Sigma_A$.

\begin{lemma} \label{lem:bound_process_with_assumption}
Under Assumption~\ref{asm:powers_of_2}, for every $t > 0$, $\Pr[\Sigma_A < pL - t] \leq pB(L + 2B)/t^2$.
\end{lemma}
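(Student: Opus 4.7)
The plan is to combine the lower bound, expectation, and variance bounds we have already established for $\sum_{i=1}^I Y_i$ and then apply Chebyshev's inequality.

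First I would apply Chebyshev's inequality to the random variable $\sum_{i=1}^I Y_i$. By Lemma~\ref{lem:probability_Y} its expectation is $pI$, and by Corollary~\ref{cor:total_variance} its variance is at most $pB\delta^{-2}(L+2B)$. Hence for any $s > 0$,
\[
\Pr\!\left[\sum_{i=1}^I Y_i < pI - s\right] \leq \frac{pB(L+2B)}{\delta^2 s^2}.
\]

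Next I would translate this into the claimed bound on $\Sigma_A$. By Observation~\ref{obs:lower_bound}, $\Sigma_A \geq \delta \sum_{i=1}^I Y_i$, and by the definition of $I$ we have $\delta I \geq L$, so $\delta p I \geq pL$. Setting $s = t/\delta$ and multiplying the inner inequality through by $\delta$, the event $\sum_{i=1}^I Y_i < pI - s$ becomes $\delta \sum_{i=1}^I Y_i < \delta p I - t \geq pL - t$ (the last step using $\delta p I \geq pL$, so the event $\delta \sum Y_i < pL - t$ is contained in the event $\delta \sum Y_i < \delta p I - t$). Combining with $\Sigma_A \geq \delta \sum Y_i$, we get
\[
\Pr\!\left[\Sigma_A < pL - t\right] \leq \Pr\!\left[\delta \sum_{i=1}^I Y_i < \delta p I - t\right] \leq \frac{pB(L+2B)}{\delta^2 \cdot (t/\delta)^2} = \frac{pB(L+2B)}{t^2},
\]
which is exactly the claimed bound.

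There is no real obstacle here, since all the hard work — the variance calculation via the partition into the independent sets $V_h$ in Corollary~\ref{cor:total_variance} and the translation from $\Sigma_A$ to $\delta \sum Y_i$ in Observation~\ref{obs:lower_bound} — has already been done. The only care needed is in the algebraic bookkeeping when substituting $s = t/\delta$ and using $\delta I \geq L$ to replace $\delta p I$ by $pL$ on the correct side of the inequality.
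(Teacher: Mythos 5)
Your proof is correct and follows essentially the same route as the paper's: use Observation~\ref{obs:lower_bound} to pass from $\Sigma_A$ to $\delta\sum_{i=1}^I Y_i$, use $\delta I \geq L$ to replace $pL$ by $\delta p I$, and apply Chebyshev with the expectation from Lemma~\ref{lem:probability_Y} and the variance bound from Corollary~\ref{cor:total_variance}. The only cosmetic difference is that you apply Chebyshev to $\sum Y_i$ with $s = t/\delta$, whereas the paper applies it to $\delta\sum Y_i$ directly; the $\delta$'s cancel identically either way.
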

\begin{proof}
By Lemma~\ref{obs:lower_bound},
\begin{align*}
	\Pr[\Sigma_A < pL - t]
	\leq{} &
	\Pr\left[\delta \cdot \sum_{i = 1}^I Y_i < pL - t\right]\\
	\leq{} &
	\Pr\left[\delta \cdot \sum_{i = 1}^I Y_i < \delta \cdot pI - t\right]
	\leq
	\Pr\left[\left|\delta \cdot \sum_{i = 1}^I Y_i - \delta \cdot pI\right| > t\right]
	\enspace.
\end{align*}
Since the expected value of $\delta \cdot \sum_{i = 1}^I Y_i$ is $\delta \cdot pI$ by Lemma~\ref{lem:probability_Y}, we get by Chebyshev's inequality:
\[
	\Pr\left[\left|\delta \cdot \sum_{i = 1}^I Y_i - \delta \cdot pI\right| > t\right]
	\leq
	\frac{\Var\left[\delta \cdot \sum_{i = 1}^I Y_i\right]}{t^2}
	=
	\frac{\delta^2 \cdot \Var\left[\sum_{i = 1}^I Y_i\right]}{t^2}
	\leq
	\frac{pB(L + 2B)}{t^2}
	\enspace,
\]
where the last inequality holds by Corollary~\ref{cor:total_variance}.
\end{proof}

Finally, we would like to get a version of Lemma~\ref{lem:bound_process_with_assumption} that holds without Assumption~\ref{asm:powers_of_2}.
\begin{corollary} \label{cor:bound_process}
For every $t > 0$, $\Pr[\Sigma_A < pL/2 - t] \leq pB(L + 2B)/(4t^2)$
\end{corollary}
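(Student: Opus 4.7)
The approach is to reduce Corollary~\ref{cor:bound_process} to Lemma~\ref{lem:bound_process_with_assumption} by rounding each $X_i$ \emph{up} to the nearest power-of-two fraction of $B$. Concretely, I would define $X_i'' = B \cdot 2^{-\lfloor \log_2(B/X_i) \rfloor}$ for every $i$; this is the smallest number of the form $B/2^j$ with $j \geq 0$ that is at least $X_i$. By construction $X_i \leq X_i'' \leq 2 X_i$ and $X_i'' \in (0, B]$, so the rounded values satisfy Assumption~\ref{asm:powers_of_2}.

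I would next verify that the modified process is a legitimate instance of the framework of Section~\ref{sec:process}. The value $X_i''$ is a deterministic function of $X_i$, and $X_i$ in turn is a function of the past acceptance pattern $A \cap \{1, \ldots, i-1\}$; hence $X_i''$ too depends only on this pattern. Coupling the two processes to share the same acceptance coins, the accept/reject pattern, the termination time $T$, and the uniform upper bound $\bar T$ are all preserved; moreover, setting $\Sigma_A'' = \sum_{i \in A} X_i''$, we have $\sum_{i=1}^{T} X_i'' \geq \sum_{i=1}^{T} X_i \geq L$, so $L$ remains a valid lower bound for the rounded process.

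Finally, since $X_i \leq X_i'' \leq 2 X_i$ for every $i$, we obtain $\Sigma_A \geq \Sigma_A''/2$, and hence the event $\{\Sigma_A < pL/2 - t\}$ is contained in $\{\Sigma_A'' < pL - 2t\}$. Applying Lemma~\ref{lem:bound_process_with_assumption} to the rounded process with deviation parameter $s = 2t$ then yields
\[
\Pr\!\left[\Sigma_A < pL/2 - t\right]
\;\leq\;
\Pr\!\left[\Sigma_A'' < pL - 2t\right]
\;\leq\;
\frac{pB(L + 2B)}{(2t)^2}
\;=\;
\frac{pB(L + 2B)}{4 t^2},
\]
which is exactly the claimed inequality.

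The only subtle point, and the reason the corollary is stated in this precise form, is the choice of rounding \emph{direction}: rounding up preserves the lower bound $\sum X_i'' \geq L$ without loss and shrinks $\Sigma_A$ by at most a factor of two, so the deviation threshold fed to the lemma only doubles; the $1/s^2$ factor in Lemma~\ref{lem:bound_process_with_assumption} then supplies the extra $1/4$. A naive rounding-\emph{down} argument would instead halve the $L$ parameter passed to the lemma and produce a strictly weaker bound, so rounding up is essential.
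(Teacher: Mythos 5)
Your proposal is correct and matches the paper's own proof essentially verbatim: the paper likewise rounds each $X_i$ up to $B/2^{\lfloor \log_2(B/X_i)\rfloor}$, notes the rounded values form a legal process with the same $p$, $B$, $L$ satisfying Assumption~\ref{asm:powers_of_2}, uses $\Sigma'_A \leq 2\Sigma_A$, and applies Lemma~\ref{lem:bound_process_with_assumption} with deviation $2t$. No discrepancies to report.
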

\begin{proof}
For every value $X_i$, let us define a value $X'_i$ as follows:
\[
	X'_i = B / 2^{\lfloor \log_2(B / X_i) \rfloor}
	\enspace.
\]
Intuitively, $X'_i$ is the smallest value allowed by Assumption~\ref{asm:powers_of_2} that is at least as large as $X_i$. We say that $X'_i$ is accepted if and only if $X_i$ is. One can verify that the following holds:
\begin{compactitem}
	\item The values $X'_1, X'_2, \dotsc, X'_T$ define a legal process with the same parameters $p$, $B$ and $L$ as the original process, and this process obeys Assumption~\ref{asm:powers_of_2}. Let $\Sigma'_A$ be the sum of the accepted values in this process.
	\item For every value $X_i$, $X'_i \leq 2X_i$, hence, $\Sigma'_A \leq 2 \cdot \Sigma_A$.
\end{compactitem}
Thus, by Lemma~\ref{lem:bound_process_with_assumption}:
\[
	\Pr[\Sigma_A < pL/2 - t]
	\leq
	\Pr[\Sigma'_A < pL - 2t]
	\leq
	\frac{pB(L + 2B)}{(2t)^2}
	=
	\frac{pB(L + 2B)}{4t^2}
	\enspace.
	\qedhere
\]
\end{proof}

\subsection{Proof for $m^* \leq f(OPT)/(256(d + 1)^2)$} \label{ssc:proof_small_m}

In this section we prove that Algorithm~\ref{alg:multiple_elements_approximation} is $O(\beta)$-competitive when $m^* \leq f(OPT)/(256(d + 1)^2)$. Recall that Algorithm~\ref{alg:multiple_elements_approximation} consists of two parts. Let us say that Algorithm~\ref{alg:multiple_elements_approximation} ``applies the second option'' when it executes the second part (the one that involves the aided algorithm). Notice that it is enough to show that the algorithm is $O(\beta)$-competitive when it applies the second option, since this option is applied with probability $1/2$.

We need some additional notation. First, we denote by $\ell + 1$ the number of iterations performed by the loop on Line~\ref{line:loop} of the algorithm (\ie, the $\ell + 1$ iteration is the iteration at which the algorithm decides to leave the loop, and does not change $A$). Additionally, for every $1 \leq i \leq \ell$, let $A_i$ and $W_i$ denote the set $A$ and the value $W$, respectively, immediately after the $i$-th iteration of this loop. For consistency, we also denote by $A_0$ and $W_0$ these set and value before the first iteration. Finally, for every $1 \leq i \leq \ell$, let $u_i$ denote the element $u$ chosen at iteration $i$ of the loop. We begin the analysis of the small $m^*$ case by proving an upper bound on $W_\ell$ (the final value of $W$).

\begin{observation} \label{obs:independence}
When Algorithm~\ref{alg:multiple_elements_approximation} applies the second option, $A_i \in \cI$ for every $0 \leq i \leq \ell$.
\end{observation}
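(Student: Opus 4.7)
The plan is a straightforward induction on $i$, since independence of $A_i$ is essentially enforced by the loop condition itself. For the base case $i=0$, the set $A_0 = \varnothing$ is independent because $\cI$ is non-empty and hereditary (property (ii) of matroids applied to any set in $\cI$).

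For the inductive step, suppose $A_{i-1} \in \cI$ for some $1 \le i \le \ell$. By the definition of the $i$-th iteration (which exists because $i \le \ell$), the loop on Line~\ref{line:loop} selects a pair $(u_i, \cD_{u_i})$ with $u_i \in T \setminus A_{i-1}$ and $\cD_{u_i} \subseteq \DS{u_i}$ for which the loop's entry condition
\[
A_{i-1} \cup \cD_{u_i} + u_i \in \cI
\]
explicitly holds. The update rule then sets $A_i = A_{i-1} \cup \cD_{u_i} + u_i$, so $A_i \in \cI$ by construction.

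There is essentially no obstacle here: the observation is really just a restatement of the loop's guard, and the role of stating it is to make explicit the invariant that the subsequent analysis will rely on (for example, when bounding $f(A_\ell)$ and relating $A_\ell$ to $OPT$ via the matroid augmentation property). I would simply write out the two lines of the induction argument and move on.
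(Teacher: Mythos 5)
Your induction is correct and is essentially the paper's own argument, which simply notes that the loop's guard condition $A \cup \cD_u + u \in \cI$ ensures independence is preserved at every update. Spelling out the base case and inductive step is a fine (slightly more formal) presentation of the same observation.
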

\begin{proof}
The observation holds since Algorithm~\ref{alg:multiple_elements_approximation} chooses at every iteration an element $u$ and a set $\cD_u$ whose addition to $A$ does not violate independence.
\end{proof}

\begin{lemma} \label{lem:W_upper_bound}
When Algorithm~\ref{alg:multiple_elements_approximation} applies the second option, $W_\ell \leq f(OPT)$.
\end{lemma}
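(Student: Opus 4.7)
The plan is to bound each increment $W_i - W_{i-1}$ by the corresponding increment $f(A_i) - f(A_{i-1})$ and then telescope, using Observation~\ref{obs:independence} to compare $f(A_\ell)$ with $f(OPT)$.

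More concretely, for every $1 \leq i \leq \ell$, let $u_i$ and $\cD_{u_i}$ denote the element and subset chosen at iteration $i$ of the loop on Line~\ref{line:loop}. Then by construction $W_i - W_{i-1} = f(u_i \mid A_{i-1} \cup \cD_{u_i})$ and $A_i = A_{i-1} \cup \cD_{u_i} + u_i$. I would then write
\[
    f(A_i) - f(A_{i-1})
    = f(\cD_{u_i} + u_i \mid A_{i-1})
    = f(u_i \mid A_{i-1} \cup \cD_{u_i}) + f(\cD_{u_i} \mid A_{i-1}),
\]
and since $f$ is monotone the second summand is non-negative, giving $W_i - W_{i-1} \leq f(A_i) - f(A_{i-1})$.

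Summing this inequality over $i = 1, \ldots, \ell$ and telescoping yields $W_\ell \leq f(A_\ell) - f(A_0) = f(A_\ell)$, where the last equality uses $A_0 = \varnothing$, $W_0 = 0$, and our standing normalization $f(\varnothing) = 0$ (justified by Reduction~\ref{r:normalized}). Finally, by Observation~\ref{obs:independence} we have $A_\ell \in \cI$, so $f(A_\ell) \leq f(OPT)$ by the definition of $OPT$ as a maximizer of $f$ over independent sets, and we conclude $W_\ell \leq f(OPT)$.

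There is no real obstacle here: the only subtlety is recognizing that the marginal stored in $W$ at iteration $i$ is at most the full marginal $f(\cD_{u_i} + u_i \mid A_{i-1})$ actually gained by $A$, which is immediate from monotonicity. Once that is in hand the result is a one-line telescoping argument combined with the matroid independence of $A_\ell$.
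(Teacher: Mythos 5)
Your proof is correct and essentially matches the paper's argument: the paper proves $W_i \leq f(A_i)$ by induction, where the induction step is exactly your per-iteration inequality $W_i - W_{i-1} = f(u_i \mid A_{i-1} \cup \cD_{u_i}) \leq f(A_i) - f(A_{i-1})$ via monotonicity, and then invokes independence of $A_\ell$; your telescoping formulation is just a rephrasing of that induction.
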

\begin{proof}
We prove by induction on $i$ the claim that the inequality $W_i \leq f(A_i)$ holds for every $0 \leq i \leq \ell$. Notice that the observation follows from this claim since, by Observation~\ref{obs:independence}, $A_\ell$ is independent, and thus, $f(A_\ell) \leq f(OPT)$.

For $i = 0$ the claim is trivial since $W_0 = 0 = f(\varnothing) = f(A_0)$. For $i > 0$, assume the claim holds for $i - 1$, and let us prove it for $i$.
\begin{align*}
	W_i
	={} &
	W_{i - 1} + f(u_i \mid A_{i - 1} \cup \cD_{u_i})
	\leq
	f(A_{i - 1}) + f(u_i \mid A_{i - 1} \cup \cD_{u_i})\\
	\leq{} &
	f(A_{i - 1} \cup \cD_{u_i}) + f(u_i \mid A_{i - 1} \cup \cD_{u_i})
	=
	f(A_i)
	\enspace,
\end{align*}
where the first inequality holds by the induction hypothesis, and the second inequality follows from the monotonicity of $f$.
\end{proof}

Our next objective is to prove a lower bound on $W_\ell$ that holds with a constant probability. For that purpose, let us consider an offline algorithm which calculates a value $W$ having the same distribution as the value $W$ calculated by Algorithm~\ref{alg:multiple_elements_approximation}.

\begin{algorithm}[H]
\DontPrintSemicolon
\caption{\textsf{Offline W Calculation}}
Let $A \gets \varnothing$, $W \gets 0$ and $T \gets \cN$.\\
\While{there exist $u \in T \setminus A$ and $\cD_u \subseteq \DS{u}$ s.t. $A \cup \cD_u + u \in \cI$}
{
	Find such a pair maximizing $f(u \mid A \cup \cD_u)$.\\	
	
	\With{probability $(d + 2)^{-1}$}
	{
		Increase $W \gets W + f(u \mid A \cup \cD_u)$.\\
		Update $A \gets A \cup \cD_u + u$.
	}
	\lOtherwise{Update $T \gets T - u$.}
	
}
\end{algorithm}

\begin{observation} \label{obs:same_distribution}
The distribution of the value $W$ calculated by Algorithm~\ref{alg:offline_W} is identical to the distribution of $W_\ell$ as calculated by Algorithm~\ref{alg:multiple_elements_approximation} when it applies the second option.
\end{observation}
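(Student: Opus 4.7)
The plan is to prove the observation by coupling the two algorithms' random choices so that their values of $W$ coincide pathwise. Introduce i.i.d.\ Bernoulli$\bigl(1/(d+2)\bigr)$ variables $\{\xi_u\}_{u\in\cN}$ once and for all. Use $\xi_u$ as the coin in Algorithm~\ref{alg:offline_W} on the (unique) iteration at which $u$ is first selected as the maximizing element: this is well-defined because once $u$ is selected it is either added to $A$ or removed from $T$ and is never considered again. In the second option of Algorithm~\ref{alg:multiple_elements_approximation}, declare the learning set to be $T_{\mathrm{on}} := \{u : \xi_u = 1\}$. A direct computation shows this matches the original distribution: if $X\sim B(n,1/(d+2))$ is independent of a uniformly random permutation $\pi$, then $\Pr[\{\pi(1),\dots,\pi(X)\}=S] = p^{|S|}(1-p)^{n-|S|}$ for every $S\subseteq\cN$, which is exactly the probability that $T_{\mathrm{on}}=S$.

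With a consistent deterministic tie-breaking rule over pairs $(u,\cD_u)$, I would then prove by induction on $j$ that after the $j$-th accepted iteration of Algorithm~\ref{alg:offline_W} (the $j$-th iteration at which $A$ and $W$ are updated) and after the $j$-th online iteration, both algorithms share the same $(A,W)$, and the pair selected at the next accepted offline iteration equals the pair selected at the $(j{+}1)$-th online iteration. The inductive step rests on the observation that between two consecutive accepted offline iterations the algorithm only removes elements from $T$, and each removed element has $\xi=0$, so no element of $T_{\mathrm{on}}$ is ever dropped. Hence, when offline finally accepts, the winning pair must be the greedy maximum over $\{u : \xi_u = 1\}\setminus A = T_{\mathrm{on}}\setminus A$; this is exactly the set over which the online algorithm maximizes in its corresponding iteration, so the two selected pairs agree.

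The point that requires most care is synchronizing termination: I need to show that the two algorithms perform the same number $\ell$ of accepted/online iterations. When online halts, every $u\in T_{\mathrm{on}}\setminus A$ fails to admit a valid $\cD_u$; any subsequent accepted offline iteration would necessarily pick some $u\in T_{\mathrm{on}}\setminus A$ with a valid $\cD_u$, contradicting this. Offline may still iterate further, but each remaining iteration selects an element with $\xi_u=0$ and merely removes it from $T$, without ever modifying $W$ or $A$. Combining the inductive coupling with this termination argument yields $W^{\mathrm{off}}=W^{\mathrm{on}}_\ell$ pointwise under the coupling, which in particular implies equality in distribution.
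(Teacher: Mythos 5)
Your proof is correct and takes essentially the same approach as the paper: both couple the two algorithms by fixing the per-element Bernoulli coins up front (your $\xi_u$ play the role of the paper's pre-determined set $T_{\mathrm{pre}}$) and then argue that, under this coupling, the two procedures accept exactly the same sequence of pairs and hence produce the same $W$ pathwise. You are somewhat more explicit than the paper in verifying the marginal $\Pr[T=S]=p^{|S|}(1-p)^{n-|S|}$ and in carrying out the induction and termination-synchronization step, but the underlying idea is identical.
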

\begin{proof}
The loop starting on Line~\ref{line:loop} of Algorithm~\ref{alg:multiple_elements_approximation} looks in each iteration for a pair of an element $u$ and a set $\cD_u$ maximizing $f(u \mid A \cup \cD_u)$ and obeying some other conditions, including the requirement that $u$ belongs to a set $T$ containing every element with probability $(d + 2)^{-1}$, independently.

On the other hand, Algorithm~\ref{alg:offline_W} has a loop that looks for a pair of an element $u$ and a set $\cD_u$ maximizing $f(u \mid A \cup \cD_u)$ and obeying the same conditions, except for requiring $u$ to belong to $T$. Once such a pair is found, the algorithm makes a random decision whether to keep $u$ in $T$, and then uses the pair to increase the solution $A$ if and only if it decides to keep $u$ in $T$.

Notice that the only difference between these two procedures is the point when the algorithm decides whether each element $u$ should belong to $T$. Algorithm~\ref{alg:multiple_elements_approximation} makes the decisions for all elements at the beginning, while Algorithm~\ref{alg:offline_W} makes the decisions only when necessary. However, regardless of when the membership of elements in $T$ is decided, the set of pairs used to increase the solution $A$ is the same given that the same random decisions are made by both algorithms.
\end{proof}

To analyze the distribution of the value $W$ calculated by Algorithm~\ref{alg:offline_W} we need some additional notation. Let $\hat{\ell} + 1$ be the number of iterations performed by the loop of Algorithm~\ref{alg:offline_W} (\ie, $\hat{\ell}$ is the number of times elements are added to the set $A$), and for every $1 \leq i \leq \hat{\ell}$ let $\hat{A}_i$, $\hat{T}_i$ and $\hat{W}_i$ denote the sets $A$ and $T$ and the value $W$, respectively, immediately after the $i$-th iteration of this loop. For consistency, we also denote by $\hat{A}_0$, $\hat{T}_0$ and $\hat{W}_0$ these sets and value before the first iteration. Additionally, for every $1 \leq i \leq \hat{\ell}$, let $\hat{u}_i$ denote the element $u$ chosen at iteration $i$ of the loop. Finally, for every $0 \leq i \leq \hat{\ell}$, let us denote by $OPT_i$ the maximum value independent set that can be obtained from $\hat{A}_i$ by adding only $T$ elements. Formally,
\[
	OPT_i = \operatorname*{\arg \max}_{S \in \cI \mid \hat{A}_i \subseteq S \subseteq \hat{A}_i \cup \hat{T}_i} f(S) \enspace.
\]
Observe that $OPT_i$ is well defined since the set $\hat{A}_i$ is independent for every $0 \leq i \leq \hat{\ell}$.

Next, observe that Algorithm~\ref{alg:offline_W} can be viewed as a process of the kind described in Section~\ref{sec:process}. More precisely, each iteration $i$ of Algorithm~\ref{alg:offline_W} corresponds to one iteration of the process, the value of this iteration is $f(\hat{u}_i \mid \hat{A}_{i - 1} \cup \cD_{\hat{u}_i})$ and this value is accepted if and only if $\hat{u}_i$ is kept in $T$ (and $f(\hat{u}_i \mid \hat{A}_{i - 1} \cup \cD_{\hat{u}_i})$ is added to $W$). Note that, as required by the process definition, the value $f(\hat{u}_i \mid \hat{A}_{i - 1} \cup \cD_{\hat{u}_i})$ depends only on the acceptances of previous values, and the acceptances of the value $f(\hat{u}_i \mid \hat{A}_{i - 1} \cup \cD_{\hat{u}_i})$ is independent of anything else. Taking this point of view, $W_{\hat{\ell}}$ is exactly the sum of the accepted values, and thus, can be analyzed using Corollary~\ref{cor:bound_process}. To use the last corollary, we need to determine the parameters of the process:
\begin{compactitem}
	\item By definition, $m^*$ upper bounds $f(\hat{u}_i \mid (\hat{A}_{i - 1} \cup \cD_{\hat{u}_i}) \cap \DS{u}) \geq f(\hat{u}_i \mid \hat{A}_{i - 1} \cup \cD_{\hat{u}_i})$. Hence, one can choose $B = f(OPT) / [256(d + 1)^2] \geq m^*$ for the process given by Algorithm~\ref{alg:offline_W}.
	\item Every value is accepted with probability $(d + 2)^{-1}$, thus, this is the value of $p$.
\end{compactitem}

We are left to determine a possible value for the parameter $L$. For that purpose we need a few claims.
\begin{observation} \label{obs:equality}
$OPT_{\hat{\ell}} = \hat{A}_{\hat{\ell}}$.
\end{observation}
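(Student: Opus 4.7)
The plan is to prove $OPT_{\hat{\ell}} = \hat{A}_{\hat{\ell}}$ by contradiction, exploiting the termination condition of Algorithm~\ref{alg:offline_W} together with matroid heredity. Since $\hat{A}_{\hat{\ell}} \subseteq OPT_{\hat{\ell}}$ holds by the very definition of $OPT_{\hat{\ell}}$, it suffices to rule out strict containment.

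Suppose, toward a contradiction, that $OPT_{\hat{\ell}} \supsetneq \hat{A}_{\hat{\ell}}$ and fix any $u \in OPT_{\hat{\ell}} \setminus \hat{A}_{\hat{\ell}}$. From the defining inclusion $OPT_{\hat{\ell}} \subseteq \hat{A}_{\hat{\ell}} \cup \hat{T}_{\hat{\ell}}$ we get $u \in \hat{T}_{\hat{\ell}} \setminus \hat{A}_{\hat{\ell}}$. Moreover, since $OPT_{\hat{\ell}} \in \cI$ and $\hat{A}_{\hat{\ell}} + u \subseteq OPT_{\hat{\ell}}$, the hereditary property of matroids yields $\hat{A}_{\hat{\ell}} + u \in \cI$.

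Now consider the pair $(u, \cD_u)$ with $\cD_u = \varnothing$. It satisfies all the requirements appearing in the ``while'' condition of Algorithm~\ref{alg:offline_W}: we have $u \in \hat{T}_{\hat{\ell}} \setminus \hat{A}_{\hat{\ell}}$, trivially $\varnothing \subseteq \DS{u}$, and $\hat{A}_{\hat{\ell}} \cup \varnothing + u = \hat{A}_{\hat{\ell}} + u \in \cI$. But this contradicts the fact that the loop terminated after $\hat{\ell} + 1$ iterations, i.e., that no such pair exists with respect to $\hat{A}_{\hat{\ell}}$ and $\hat{T}_{\hat{\ell}}$. Hence no element of $OPT_{\hat{\ell}}$ lies outside $\hat{A}_{\hat{\ell}}$, and $OPT_{\hat{\ell}} = \hat{A}_{\hat{\ell}}$.

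There is essentially no obstacle here; the only point that deserves care is confirming that an element of $OPT_{\hat{\ell}} \setminus \hat{A}_{\hat{\ell}}$ must indeed belong to $\hat{T}_{\hat{\ell}}$, which is immediate from the domain of the optimization defining $OPT_{\hat{\ell}}$, and that the empty dependency set is an allowed choice in the loop condition, which is also immediate.
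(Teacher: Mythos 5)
Your proof is correct and matches the paper's argument: both exploit that the loop terminated, so no element of $\hat{T}_{\hat{\ell}} \setminus \hat{A}_{\hat{\ell}}$ can be added to $\hat{A}_{\hat{\ell}}$ while preserving independence (choosing $\cD_u = \varnothing$), which forces $OPT_{\hat{\ell}}$ to coincide with $\hat{A}_{\hat{\ell}}$. Your version is just slightly more formal, explicitly invoking heredity and spelling out the contradiction.
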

\begin{proof}
By definition $OPT_{\hat{\ell}}$ contains the elements of $\hat{A}_{\hat{\ell}}$ and (possibly) additional elements of $\hat{T}_{\hat{\ell}}$ that do not violate independence when added to $\hat{A}_{\hat{\ell}}$. However, the fact that Algorithm~\ref{alg:offline_W} stopped during the $\hat{\ell} + 1$ iteration implies that no elements of $\hat{T}_{\hat{\ell}}$ can be added to $\hat{A}_{\hat{\ell}}$ without violating independence. Therefore, $OPT_{\hat{\ell}}$ cannot contain any elements beside the elements of $\hat{A}_{\hat{\ell}}$.
\end{proof}

For every $0 \leq i \leq \hat{\ell}$, let $L_i$ be the sum of the first $i$ values of the process corresponding to Algorithm~\ref{alg:offline_W}. Formally,
\[
	L_i = \sum_{j = 1}^i f(\hat{u}_j \mid \hat{A}_{j - 1} \cup \cD_{\hat{u}_j})
	\enspace.
\]

\begin{lemma} \label{lem:alg_gain}
For every $0 \leq i \leq \hat{\ell}$, $f(\hat{A}_i) \leq (d + 1) \cdot L_i$.
\end{lemma}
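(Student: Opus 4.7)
The plan is to prove the inequality by a direct decomposition argument rather than by induction on $i$, which as stated fails because the subcase-(b) bounds described below charge against portions of $L$ that a naive inductive hypothesis would already absorb. Concretely, I would first expand
\[
  f(\hat{A}_i) \;=\; \sum_{k \leq i,\ k \text{ accept}} f(\cD_{\hat{u}_k} + \hat{u}_k \mid \hat{A}_{k-1}),
\]
and inside each accept iteration $k$, telescope further along the newly added elements $N_k = (\cD_{\hat{u}_k} + \hat{u}_k) \setminus \hat{A}_{k-1}$, ordered $w_1^{(k)}, \ldots, w_{|N_k|}^{(k)}$ with $\hat{u}_k$ placed last. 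Since $|N_k| \leq |\cD_{\hat{u}_k}|+1 \leq d+1$, this yields at most $d+1$ marginal terms per accept iteration.

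Next, I would charge each such marginal to a specific increment $L_j - L_{j-1}$. The term for $\hat{u}_k$ is exactly $L_k - L_{k-1}$ by definition. For a dependency $w = w_q^{(k)} \in \cD_{\hat{u}_k}$ I would distinguish two subcases. In subcase~(a), $w \in \hat{T}_{k-1}$; then $(w,\ \{w_1^{(k)}, \ldots, w_{q-1}^{(k)}\} \cap \DS{w})$ is a valid candidate at iteration $k$ (feasibility follows by hereditary from $\hat{A}_{k-1} \cup \cD_{\hat{u}_k} + \hat{u}_k \in \cI$), so the greedy maximality of the algorithm combined with the supermodular-dependency fact that adding non-dependencies of $w$ to a base set only weakly decreases $w$'s marginal bounds the term by $L_k - L_{k-1}$. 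In subcase~(b), $w \notin \hat{T}_{k-1}$, which forces $w$ to equal $\hat{u}_j$ for a unique earlier reject iteration $j = j_w < k$; then the candidate pair
\[
  \bigl(w,\ (\{w_1^{(k)}, \ldots, w_{q-1}^{(k)}\} \cup (\hat{A}_{k-1} \setminus \hat{A}_{j-1})) \cap \DS{w}\bigr)
\]
is valid at iteration $j$ by the same hereditary argument, and invoking the same two tools bounds the term by $L_j - L_{j-1}$.

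Finally, I would close with a global accounting. Splitting $L_i$ into its accept and reject portions, the subcase-(a) and main terms contribute at most $d+1$ terms per accept iteration $k$, each bounded by $L_k - L_{k-1}$, summing to at most $d+1$ times the accept portion of $L_i$. For subcase~(b), each reject iteration $j \leq i$ serves as $j_w$ for at most one element $w$, since $w$ must equal $\hat{u}_j$ and each element is picked as the algorithm's $\hat{u}$ at most once; so these terms sum to at most the reject portion of $L_i$. Combining yields $f(\hat{A}_i) \leq (d+1) L_i$. The main obstacle is precisely the subcase-(b) bound: the candidate dependency set at iteration $j_w$ must be engineered so that (i)~matroid feasibility holds at iteration $j_w$ and (ii)~its resulting base set differs from the ``current'' base $\hat{A}_{k-1} \cup \{w_1^{(k)}, \ldots, w_{q-1}^{(k)}\}$ only by elements outside $\DS{w}$, letting the supermodular-dependency definition yield the marginal inequality in the correct direction.
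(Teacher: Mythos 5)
Your proposal is correct, and the charging it performs is essentially the same as the paper's: telescope the gain of each accept iteration over the (at most $d+1$) newly added elements, charge the main term and subcase-(a) terms to the greedy optimum $L_k - L_{k-1}$ of the current iteration using feasibility by heredity from $\hat{A}_k \in \cI$, the greedy maximality, and the supermodular-dependency inequality, and charge each subcase-(b) term to the greedy optimum $L_{j_w} - L_{j_w-1}$ of the unique earlier reject iteration at which that element was discarded. One correction to your framing: the paper does prove this by induction successfully --- your objection applies only to the naive inductive hypothesis $f(\hat{A}_{i-1}) \leq (d+1)\,L_{i-1}$, whereas the paper strengthens the hypothesis to $f(\hat{A}_{i-1}) \leq (d+1)\sum_{\hat{u}_j \in \hat{A}_{i-1}} f(\hat{u}_j \mid \hat{A}_{j-1} \cup \cD_{\hat{u}_j})$, whose right-hand side omits exactly those reject iterations $j$ with $\hat{u}_j \notin \hat{A}_{i-1}$ that your subcase-(b) needs to charge, so the double-counting you anticipated never arises; your non-inductive global accounting is a valid alternative packaging of the same decomposition.
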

\begin{proof}
We prove by induction on $i$ that for every $0 \leq i \leq \hat{\ell}$:
\begin{equation} \label{eq:induction}
	f(\hat{A}_i)
	\leq
	(d + 1) \cdot \sum_{\hat{u}_j \in \hat{A}_i} f(\hat{u}_j \mid \hat{A}_{j - 1} \cup \cD_{\hat{u}_j})
	\enspace.
\end{equation}
Observe that the lemma follows from the last claim since $\hat{u}_j$ can belong to the set $\hat{A}_i$ only when $j \leq i$. For $i = 0$, Equation~\eqref{eq:induction} is trivial since $f(\hat{A}_0) = f(\varnothing) = 0$. Next, assume Equation~\eqref{eq:induction} holds for $i - 1 \geq 0$, and let us prove it for $i$. If $\hat{u}_i$ is removed from $T$ then this is true since in this case $\hat{A}_i = \hat{A}_{i - 1}$. Thus, we can safely assume in the rest of the proof that $\hat{u}_i$ remains in $T$.

For every $0 \leq i \leq \hat{\ell}$, let $N_i = \{\hat{u}_j \not \in A_i \mid 1 \leq j \leq i\}$. Order the elements of $\hat{A}_i \setminus \hat{A}_{i - 1}$ in an arbitrary order, and let $v_j$ denote the $j$-th element in this order. Consider some $1 \leq j \leq |\hat{A}_i \setminus \hat{A}_{i - 1}|$, if $v_j \not \in N_{i - 1}$ then the pair of the element $v_j$ and the set $\{v_1, v_2, \dotsc, v_{i - 1}\} \cap \DS{v_j}$ form a possible pair that Algorithm~\ref{alg:offline_W} could select on iteration $i$ since $v_j \in \cN \setminus (N_{i - 1} \cup \hat{A}_{i - 1}) = \hat{T}_{i - 1} \setminus \hat{A}_{i - 1}$. Hence,
\begin{align*}
	f(\hat{u}_i \mid \hat{A}_{i - 1} \cup \cD_{\hat{u}_i})
	\geq{} &
	f(v_j \mid \hat{A}_{i - 1} \cup (\{v_1, v_2, \dotsc, v_{i - 1}\} \cap \DS{v_j}))\\
	\geq{} &
	f(v_j \mid \hat{A}_{i - 1} \cup \{v_1, v_2, \dotsc, v_{i - 1}\})
	\enspace.
\end{align*}
On the other hand, if $v_j \in N_{i - 1}$, then there must be some $1 \leq h < i$ such that $v_j = \hat{u}_h$, and thus, $v_j \in \hat{T}_h$. By a similar argument to the one used above we get:
\begin{align*}
	f(\hat{u}_h \mid \hat{A}_{h - 1} \cup \cD_{\hat{u}_h})
	\geq{} &
	f(v_j \mid \hat{A}_{h - 1} \cup ((\hat{A}_{i - 1} \cup \{v_1, v_2, \dotsc, v_{i - 1}\}) \cap \DS{v_j}))\\
	\geq{} &
	f(v_j \mid \hat{A}_{h - 1} \cup (\hat{A}_{i - 1} \cup \{v_1, v_2, \dotsc, v_{i - 1}\}))\\
	={} &
	f(v_j \mid \hat{A}_{i - 1} \cup \{v_1, v_2, \dotsc, v_{i - 1}\})
	\enspace.
\end{align*}

Adding the inequalities we got for every $1 \leq j \leq |\hat{A}_i \setminus \hat{A}_{i - 1}|$ gives:
\begin{align*}
	f(\hat{A}_i) -& f(\hat{A}_{i - 1})
	=
	\sum_{j = 1}^{|\hat{A}_i \setminus \hat{A}_{i - 1}|} f(v_j \mid \hat{A}_{i - 1} \cup \{v_1, v_2, \dotsc, v_{i - 1}\})\\
	={} &
	\sum_{v_j \in \hat{A}_i \setminus (\hat{A}_{i - 1} \cup N_{i - 1})} \mspace{-36mu} f(v_j \mid \hat{A}_{i - 1} \cup \{v_1, v_2, \dotsc, v_{i - 1}\}) + \sum_{v_j \in \hat{A}_i \cap N_{i - 1}} \mspace{-18mu} f(v_j \mid \hat{A}_{i - 1} \cup \{v_1, v_2, \dotsc, v_{i - 1}\})\\
	\leq{} &
	|\hat{A}_i \setminus (\hat{A}_{i - 1} \cup N_{i - 1})| \cdot f(\hat{u}_i \mid \hat{A}_{i - 1} \cup \cD_{\hat{u}_i}) + \sum_{u_h \in \hat{A}_i \cap N_{i - 1}} \mspace{-18mu} f(\hat{u}_h \mid \hat{A}_{h - 1} \cup \cD_{\hat{u}_h})
	\enspace.
\end{align*}
Observe that $|\hat{A}_i \setminus (\hat{A}_{i - 1} \cup N_{i - 1})| \leq d + 1$ and $\hat{A}_i \cap N_{i - 1} \subseteq \hat{A}_i \setminus \hat{A}_{i - 1} - \hat{u}_i$ . Combining both observations with the last inequality yields:
\begin{align*}
	f(\hat{A}_i)
	\leq{} &
	f(\hat{A}_{i - 1}) + (d + 1) \cdot f(\hat{u}_i \mid \hat{A}_{i - 1} \cup \cD_{\hat{u}_j}) + \sum_{u_h \in \hat{A}_i \setminus \hat{A}_{i - 1} - \hat{u}_i} \mspace{-27mu} f(\hat{u}_h \mid \hat{A}_{h - 1} \cup \cD_{\hat{u}_h}) \\
	\leq{} &
	f(\hat{A}_{i - 1}) + (d + 1) \cdot \sum_{u_h \in \hat{A}_i \setminus \hat{A}_{i - 1}} \mspace{-18mu} f(\hat{u}_h \mid \hat{A}_{h - 1} \cup \cD_{\hat{u}_h})
	\leq
	(d + 1) \cdot \sum_{u_h \in \hat{A}_i} f(\hat{u}_h \mid \hat{A}_{h - 1} \cup \cD_{\hat{u}_h})
	\enspace,
\end{align*}
where the last inequality holds by the induction hypothesis.
\end{proof}

\begin{lemma} \label{lem:opt_loss}
For every $0 \leq i \leq \hat{\ell}$, $f(OPT_i) + (d + 1) \cdot L_i \geq f(OPT)$.
\end{lemma}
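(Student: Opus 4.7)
The plan is induction on $i$. The base case $i=0$ is immediate: $\hat{A}_0 = \varnothing$ and $\hat{T}_0 = \cN$ force $OPT_0 = OPT$ and $L_0 = 0$. For the inductive step it suffices to establish the per-step drop
\[
  f(OPT_i) \geq f(OPT_{i-1}) - (d+1)(L_i - L_{i-1})
  \enspace,
\]
since adding $(d+1)L_{i-1}$ and invoking the hypothesis then closes the induction. Two cases arise according to the random choice made at iteration $i$ of Algorithm~\ref{alg:offline_W}.

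\textbf{Case A (elements committed to $\hat{A}$):} here $\hat{A}_i = \hat{A}_{i-1} \cup \cD_{\hat{u}_i} + \hat{u}_i$ and $\hat{T}_i = \hat{T}_{i-1}$. I would construct a feasible candidate $Z$ for $OPT_i$ by extending $\hat{A}_i$, within $\hat{A}_i \cup OPT_{i-1}$, to an inclusion-maximal independent set. Matroid augmentation forces $|Z| \geq |OPT_{i-1}|$ (otherwise some $x \in OPT_{i-1} \setminus Z$ yields $Z + x \in \cI$ inside $\hat{A}_i \cup OPT_{i-1}$, contradicting maximality), and since $Z \setminus \hat{A}_i \subseteq OPT_{i-1}$, a short counting argument yields $|R| \leq |\hat{A}_i \setminus OPT_{i-1}| \leq |\hat{A}_i \setminus \hat{A}_{i-1}| \leq d+1$ for $R := OPT_{i-1} \setminus Z$, using $\hat{A}_{i-1} \subseteq OPT_{i-1}$. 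The inclusion $Z \subseteq \hat{A}_i \cup OPT_{i-1} \subseteq \hat{A}_i \cup \hat{T}_i$ makes $Z$ a valid competitor, so monotonicity gives $f(OPT_i) \geq f(Z) \geq f(OPT_{i-1} \setminus R)$.

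What remains in Case A is to bound $f(R \mid OPT_{i-1} \setminus R)$ by $(d+1)(L_i - L_{i-1})$. Order $R = \{r_1,\ldots,r_m\}$ arbitrarily and telescope; this reduces matters to showing $f(r_j \mid OPT_{i-1} \setminus R_{\geq j}) \leq L_i - L_{i-1}$ for every $j$, where $R_{\geq j} := \{r_j,\ldots,r_m\}$. Set $\tilde{\cD}_j := ((OPT_{i-1} \setminus R_{\geq j}) \setminus \hat{A}_{i-1}) \cap \DS{r_j}$; the pair $(r_j,\tilde{\cD}_j)$ is admissible at the greedy step of iteration $i$ (indeed $r_j \in OPT_{i-1} \setminus \hat{A}_{i-1} \subseteq \hat{T}_{i-1}$, $\tilde{\cD}_j \subseteq \DS{r_j}$, and $\hat{A}_{i-1} \cup \tilde{\cD}_j + r_j \subseteq OPT_{i-1} \in \cI$), so the greedy choice yields $f(r_j \mid \hat{A}_{i-1} \cup \tilde{\cD}_j) \leq L_i - L_{i-1}$. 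Moreover, the set $OPT_{i-1} \setminus R_{\geq j}$ differs from $\hat{A}_{i-1} \cup \tilde{\cD}_j$ only by elements lying outside $\DS{r_j}$, so the defining property of the supermodular dependency set (adding non-dependencies to the conditioning set can only weakly decrease the marginal) gives $f(r_j \mid OPT_{i-1} \setminus R_{\geq j}) \leq f(r_j \mid \hat{A}_{i-1} \cup \tilde{\cD}_j)$, which is exactly what is needed.

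\textbf{Case B (only $\hat{u}_i$ removed from $\hat{T}$):} here $\hat{A}_i = \hat{A}_{i-1}$ and $\hat{T}_i = \hat{T}_{i-1} - \hat{u}_i$. If $\hat{u}_i \notin OPT_{i-1}$ then $OPT_{i-1}$ itself remains feasible and $f(OPT_i) \geq f(OPT_{i-1})$; otherwise $OPT_{i-1} - \hat{u}_i$ is feasible for $OPT_i$ and the Case~A argument, specialized to $R = \{\hat{u}_i\}$, delivers $f(OPT_{i-1}) - f(OPT_i) \leq L_i - L_{i-1} \leq (d+1)(L_i - L_{i-1})$. The principal obstacle throughout is the dependency-cleanup step of Case~A: the greedy conditioning set $\hat{A}_{i-1} \cup \tilde{\cD}_j$ and the target conditioning set $OPT_{i-1} \setminus R_{\geq j}$ are a priori incomparable, so $\tilde{\cD}_j$ must be tailored precisely so that their symmetric difference lies entirely outside $\DS{r_j}$ before the supermodular dependency definition can bridge them.
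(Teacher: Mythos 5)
Your proof is correct and follows essentially the same route as the paper's: induction on $i$, a case split on the coin flip at iteration $i$, a candidate for $OPT_i$ obtained from $OPT_{i-1}$ (plus the newly committed elements) by deleting at most $d+1$ elements of $OPT_{i-1}\setminus \hat{A}_i$, and a telescoping bound on the resulting loss in which each marginal is first reduced to a dependency-restricted conditioning set via the definition of $\DS{\cdot}$ and then dominated by the greedy value $L_i - L_{i-1}$. The only difference is cosmetic: where the paper invokes ``standard matroid properties'' for the existence of the small deletion set $\Delta$, you spell this out via a maximal independent extension $Z$ of $\hat{A}_i$ inside $\hat{A}_i\cup OPT_{i-1}$ and a counting argument, which is a valid instantiation of the same step.
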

\begin{proof}
We prove the lemma by induction on $i$. For $i = 0$ the lemma is trivial since $f(OPT_0) = f(OPT)$. Next, assume that the lemma holds for $i - 1 \geq 0$, and let us prove it for $i$. There are two cases to consider. First, let us consider the case where $\hat{u}_i$ is removed from $T$ and $\hat{A}_i = \hat{A}_{i - 1}$. In this case one potential candidate for $OPT_i$ is $OPT_{i - 1} - \hat{u}_i$. If $\hat{u}_i \not \in OPT_{i - 1}$, then we get $f(OPT_{i - 1} - \hat{u}_i) = f(OPT_{i - 1})$. On the other hand, if $\hat{u}_i \in OPT_{i - 1}$ then the pair of the element $\hat{u}_i$ and the set $(OPT_{i - 1} \setminus \hat{A}_{i - 1}) \cap \DS{\hat{u}_i}$ is a possible pair that Algorithm~\ref{alg:offline_W} could select on iteration $i$. Hence,
\begin{align*}
	f(OPT_{i - 1} - \hat{u}_i)
	={} &
	f(OPT_{i - 1}) - f(\hat{u}_i \mid \hat{A}_{i - 1} \cup (OPT_{i - 1} \setminus \hat{A}_{i - 1} - \hat{u}_i))\\
	\geq{} &
	f(OPT_{i - 1}) - f(\hat{u}_i \mid \hat{A}_{i - 1} \cup ((OPT_{i - 1} \setminus \hat{A}_{i - 1}) \cap \DS{\hat{u}_i}))\\
	\geq{} &
	f(OPT_{i - 1}) - f(\hat{u}_i \mid \hat{A}_{i - 1} \cup \cD_{\hat{u}_i})
	\enspace.
\end{align*}
Therefore, regardless of the membership of $\hat{u}_i$ in $OPT_{i -1}$, we can lower bound by $f(OPT_{i - 1} - \hat{u}_i)$ by $f(OPT_{i - 1}) - f(\hat{u}_i \mid \hat{A}_{i - 1} \cup \cD_{\hat{u}_i})$. Using the induction hypothesis, we now get:
\begin{align*}
	f(OPT_i) + (d + 1) \cdot L_i
	\geq{} &
	f(OPT_{i - 1} - \hat{u}_i) + (d + 1) \cdot L_{i - 1} + (d + 1) \cdot f(\hat{u}_i \mid \hat{A}_{i - 1} \cup \cD_{\hat{u}_i})\\
	\geq{} &
	f(OPT_{i - 1}) + (d + 1) \cdot L_{i - 1}
	\geq
	f(OPT)
	\enspace.
\end{align*}

Next, let us consider the case where $\hat{u}_i$ is kept in $T$ and $\hat{A}_i = \hat{A}_{i - 1} \cup \cD_{\hat{u}_i} + \hat{u}_i$. In this case, by standard matroid properties, one can obtain a candidate for $OPT_i$ by starting with $OPT_{i - 1} \cup \cD_{\hat{u}_i} + \hat{u}_i$ and removing from it a subset $\Delta \subseteq OPT_{i - 1} \setminus \hat{A}_i$ of up to $d + 1$ elements. Let us denote by $OPT'_i$ this candidate, \ie, $OPT'_i = (OPT_{i - 1} \cup \cD_{\hat{u}_i} + \hat{u}_i) \setminus \Delta$.

Order the elements of $\Delta$ in an arbitrary order, and let $v_j$ denote the $j$-th element in this order. Observe that by monotonicity:
\begin{align*}
	f(OP&T_{i - 1}) - f(OPT'_i)
	\leq
	f(OPT_{i - 1}) - f(OPT_{i - 1} \setminus \Delta)
	=
	\sum_{j = 1}^{|\Delta|} f(v_j \mid OPT_{i - 1} \setminus \{v_1, v_2, \dotsc, v_j\})\\
	\leq{} &
	\sum_{j = 1}^{|\Delta|} f(v_j \mid A_{i - 1} \cup ((OPT_{i - 1} \setminus \{v_1, v_2, \dotsc, v_j\}) \cap \DS{v_j}))
	\leq
	(d + 1) \cdot f(\hat{u}_i \mid \hat{A}_{i - 1} \cup \DS{\hat{u}_i})
	\enspace.
\end{align*}
where the last inequality holds since $v_j \in OPT_{i - 1} \setminus \hat{A}_i \subseteq \hat{T}_{i - 1}$ for every $1 \leq j \leq |\Delta|$, and thus, the pair of the element $v_j$ and the set $[(OPT_{i - 1} \setminus \{v_1, v_2, \dotsc, v_j\}) \setminus \hat{A}_{i - 1}] \cap \DS{v_j} $ is a possible pair that Algorithm~\ref{alg:offline_W} could select on iteration $i$. Using the induction hypothesis, we now get:
\begin{align*}
	f(OPT_i) + (d + 1) \cdot L_i
	\geq{} &
	f(OPT'_i) + (d + 1) \cdot L_{i - 1} + (d + 1) \cdot f(\hat{u}_i \mid \hat{A}_{i - 1} \cup \cD_{\hat{u}_i})\\
	\geq{} &
	f(OPT_{i - 1}) + (d + 1) \cdot L_{i - 1}
	\geq
	f(OPT)
	\enspace.
	\qedhere
\end{align*}
\end{proof}

\begin{corollary}
The parameter $L$ of the process corresponding to Algorithm~\ref{alg:offline_W} can be chosen to be $f(OPT) / [2(d + 1)]$.
\end{corollary}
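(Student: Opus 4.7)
The plan is to combine the two preceding lemmata at the terminal index $i = \hat{\ell}$ together with Observation~\ref{obs:equality}. Recall that the parameter $L$ for the process is required to be a deterministic lower bound on $\sum_{i=1}^T X_i$, which in our setting is exactly $L_{\hat{\ell}}$. So the whole task reduces to showing the pointwise inequality $L_{\hat{\ell}} \geq f(OPT) / [2(d+1)]$ over all executions of Algorithm~\ref{alg:offline_W}.

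First, I would invoke Lemma~\ref{lem:opt_loss} with $i = \hat{\ell}$ to obtain
\[
    f(OPT_{\hat{\ell}}) + (d+1) \cdot L_{\hat{\ell}} \geq f(OPT).
\]
Then I would use Observation~\ref{obs:equality} to replace $OPT_{\hat{\ell}}$ by $\hat{A}_{\hat{\ell}}$, which is the key step that lets me close the two-sided bound: this observation says that once the loop has no further legal pair, no elements of $\hat{T}_{\hat{\ell}}$ can be added without violating independence, so the best feasible extension of $\hat{A}_{\hat{\ell}}$ inside $\hat{T}_{\hat{\ell}}$ is $\hat{A}_{\hat{\ell}}$ itself.

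Next, I would apply Lemma~\ref{lem:alg_gain} with $i = \hat{\ell}$ to bound $f(\hat{A}_{\hat{\ell}}) \leq (d+1) \cdot L_{\hat{\ell}}$, and plug this into the previous inequality to get
\[
    2(d+1) \cdot L_{\hat{\ell}} \;\geq\; f(\hat{A}_{\hat{\ell}}) + (d+1) \cdot L_{\hat{\ell}} \;\geq\; f(OPT),
\]
which rearranges to $L_{\hat{\ell}} \geq f(OPT) / [2(d+1)]$, as required. Since this holds for every realization (the two lemmata are deterministic statements), $L = f(OPT)/[2(d+1)]$ is a valid choice of the process parameter.

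There is no real obstacle here; the substantive work has already been done in Lemmata~\ref{lem:alg_gain} and~\ref{lem:opt_loss} and in Observation~\ref{obs:equality}. The corollary is essentially just the arithmetic of combining a lower bound $f(OPT) - (d+1) L_{\hat{\ell}}$ on $f(\hat{A}_{\hat{\ell}})$ with the matching upper bound $(d+1) L_{\hat{\ell}}$ on the same quantity, which pins down $L_{\hat{\ell}}$.
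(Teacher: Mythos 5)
Your proof is correct and follows essentially the same route as the paper: combine Lemma~\ref{lem:alg_gain} and Lemma~\ref{lem:opt_loss} at $i = \hat{\ell}$, using Observation~\ref{obs:equality} to identify $OPT_{\hat{\ell}}$ with $\hat{A}_{\hat{\ell}}$, yielding $2(d+1) \cdot L_{\hat{\ell}} \geq f(OPT)$ in every realization. The only difference is cosmetic (you substitute via the observation before summing rather than after), so there is nothing to add.
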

\begin{proof}
Observe that any value that always lower bounds $L_{\hat{\ell}}$ can be used as a value for the parameter $L$. Combining Lemmata~\ref{lem:alg_gain} and~\ref{lem:opt_loss} gives:
\[
	2(d + 1) \cdot L_{\hat{\ell}}
	\geq
	f(\hat{A}_{\hat{\ell}}) + [f(OPT) - f(OPT_{\hat{\ell}})]
	=
	f(OPT)
	\enspace,
\]
where the equality holds by Observation~\ref{obs:equality}.
\end{proof}

Now that we have all the parameters of the process corresponding to Algorithm~\ref{alg:offline_W}, we can use Corollary~\ref{cor:bound_process} to give a guarantee on $W_\ell$.

\begin{lemma} \label{lem:W_lower_bound}
When Algorithm~\ref{alg:multiple_elements_approximation} applies the second option and $m^* \leq f(OPT)/[256(d + 1)^2]$, then, with probability at least $7/8$, $W_\ell \geq \frac{f(OPT)}{8(d + 2)^2}$.
\end{lemma}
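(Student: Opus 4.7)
The plan is to invoke the concentration result of Corollary~\ref{cor:bound_process} on the stochastic process associated with Algorithm~\ref{alg:offline_W}. By Observation~\ref{obs:same_distribution}, the value $W$ computed by Algorithm~\ref{alg:offline_W} has the same distribution as $W_\ell$ in Algorithm~\ref{alg:multiple_elements_approximation}, so it suffices to analyse the former. As already discussed before the statement of the lemma, Algorithm~\ref{alg:offline_W} naturally fits the template of Section~\ref{sec:process}: each iteration produces a value $f(\hat u_i \mid \hat A_{i-1}\cup\cD_{\hat u_i})$ which is accepted independently with probability $p=(d+2)^{-1}$, and $W$ equals the sum $\Sigma_A$ of accepted values.

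The parameters of the process have been identified just above the statement: we may take $p=(d+2)^{-1}$, the upper bound $B=f(OPT)/[256(d+1)^2]$ (using the assumption $m^*\leq f(OPT)/[256(d+1)^2]$ together with the fact that $f(\hat u_i\mid \hat A_{i-1}\cup\cD_{\hat u_i})\leq m^*$ by the definition of $m^*$), and the lower bound $L=f(OPT)/[2(d+1)]$ on the total value of the process, as guaranteed by the corollary derived from Lemmata~\ref{lem:alg_gain} and~\ref{lem:opt_loss}.

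I will apply Corollary~\ref{cor:bound_process} with the deviation $t=pL/4$. This choice yields the lower bound $\Sigma_A\geq pL/2-t=pL/4$ on the ``good'' event. Plugging in the values of $p$ and $L$ gives
\[
	\frac{pL}{4}
	=\frac{f(OPT)}{8(d+1)(d+2)}
	\geq \frac{f(OPT)}{8(d+2)^2},
\]
which is exactly the bound required by the lemma.

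It remains to verify that the failure probability $pB(L+2B)/(4t^2)$ is at most $1/8$. Since $B/L=1/[128(d+1)]\leq 1/2$, we have $L+2B\leq 2L$, and substituting $t=pL/4$ reduces the failure probability to at most $4B(L+2B)/(pL^2)\leq 8B/(pL)$. Plugging in the parameters,
\[
	\frac{8B}{pL}
	=\frac{8 \cdot f(OPT)/[256(d+1)^2]}{(d+2)^{-1}\cdot f(OPT)/[2(d+1)]}
	=\frac{d+2}{16(d+1)}
	\leq \frac{1}{8},
\]
where the last inequality holds because $(d+2)/(d+1)\leq 2$ for every $d\geq 0$. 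Hence $\Sigma_A\geq pL/4\geq f(OPT)/[8(d+2)^2]$ with probability at least $7/8$, completing the proof. The only nontrivial bookkeeping is the algebraic check that our chosen $t$ simultaneously (i) yields a lower bound matching the target $f(OPT)/[8(d+2)^2]$ and (ii) keeps the Chebyshev tail under $1/8$; the constants in $B$ and $L$ have been arranged precisely to make this work.
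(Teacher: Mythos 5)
Your proposal is correct and follows essentially the same route as the paper: identify $W_\ell$ with the accepted sum of the process for Algorithm~\ref{alg:offline_W} via Observation~\ref{obs:same_distribution}, apply Corollary~\ref{cor:bound_process} with $p=(d+2)^{-1}$, $B=f(OPT)/[256(d+1)^2]$, $L=f(OPT)/[2(d+1)]$ and deviation $t=pL/4$, and check that $pL/4\geq f(OPT)/[8(d+2)^2]$ while the tail bound evaluates to $(d+2)/[16(d+1)]\leq 1/8$. Your simplification $L+2B\leq 2L$ is just a slightly tidier version of the same algebra the paper carries out explicitly.
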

\begin{proof}
Recall that $\hat{W}_{\hat{\ell}}$ is the sum of the accepted values in the process corresponding to Algorithm~\ref{alg:offline_W}. Hence, by Corollary~\ref{cor:bound_process},
\begin{align*}
	\Pr\left[\hat{W}_{\hat{\ell}} < \frac{f(OPT)}{8(d + 2)^2}\right]
	\leq{} &
	\Pr\left[\hat{W}_{\hat{\ell}} < \frac{pL}{2} - \frac{pL}{4}\right]
	\leq
	\frac{pB(L + 2B)}{4(pL/4)^2}
	=
	\frac{4B(L + 2B)}{pL^2}\\
	={} &
	\frac{4 \cdot \frac{f(OPT)}{256(d + 1)^2} \cdot \left(\frac{f(OPT)}{2(d + 1)} + \frac{2 \cdot f(OPT)}{256(d + 1)^2}\right)}{\frac{1}{d + 2} \cdot \left(\frac{f(OPT)}{2(d + 1)}\right)^2}
	\leq
	\frac{4 \cdot \frac{1}{256(d + 1)^2} \cdot \frac{1}{d + 1}}{\frac{1}{d + 2} \cdot \frac{1}{4(d + 1)^2}}
	=
	\frac{(d + 2)}{16(d + 1)}
	\leq
	\frac{1}{8}
	\enspace.
\end{align*}
The lemma now follows since $W_\ell $ and $\hat{W}_{\hat{\ell}}$ have the same distribution by Observation~\ref{obs:same_distribution}.
\end{proof}

To complete the analysis of Algorithm~\ref{alg:multiple_elements_approximation} we also need the following notation and lemma.
Given a set $S \subseteq \cN$, let $S(p)$ be a random set containing every element $u \in S$, independently, with probability $p$.
\begin{lemma} \label{lem:sampling}
For every set $S \subseteq \cN$, $\bE[f(S(p))] \geq p^{d + 1} \cdot f(S)$.
\end{lemma}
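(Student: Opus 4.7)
The plan is to fix an arbitrary order $u_1,\dotsc,u_m$ of the elements of $S$, write $f(S)$ as a telescoping sum of marginals along this order, and match each marginal term against an expected contribution in $f(S(p))$ that fires with probability at least $p^{d+1}$. Let $S_i = \{u_1,\dotsc,u_i\}$ and $D_i = \DS{u_i} \cap S_{i-1}$; note $|D_i| \leq d$.

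The key structural observation is that for every $T$ with $D_i \subseteq T \subseteq S_{i-1}$, we have $f(u_i \mid T) \geq f(u_i \mid S_{i-1})$. This is because every element $v \in S_{i-1} \setminus D_i$ lies outside $\DS{u_i}$, and thus, by the very definition of the supermodular dependency set, $f(u_i \mid X + v) \leq f(u_i \mid X)$ for every $X \subseteq \cN$; peeling off the elements of $S_{i-1} \setminus T$ from $S_{i-1}$ one at a time therefore only weakly increases the marginal of $u_i$.

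Next, define the event $B_i$ that $u_i \in S(p)$ and $D_i \subseteq S(p)$. Since the inclusions are independent and $|D_i| \leq d$, $\Pr[B_i] = p^{|D_i|+1} \geq p^{d+1}$. Using the telescoping identity $f(S(p)) = f(\varnothing) + \sum_{i=1}^m \mathbf{1}[u_i \in S(p)] \cdot f(u_i \mid S(p) \cap S_{i-1})$ together with the non-negativity of marginals (which follows from monotonicity), I can drop all summands except those for which $B_i$ occurs, obtaining
\[
\bE[f(S(p))] \geq f(\varnothing) + \sum_{i=1}^m \bE\bigl[\mathbf{1}[B_i] \cdot f(u_i \mid S(p) \cap S_{i-1})\bigr].
\]
On the event $B_i$ the conditioning set $S(p) \cap S_{i-1}$ contains $D_i$ and is contained in $S_{i-1}$, so the structural observation gives $f(u_i \mid S(p) \cap S_{i-1}) \geq f(u_i \mid S_{i-1})$ pointwise on $B_i$. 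Therefore each term is at least $p^{d+1} \cdot f(u_i \mid S_{i-1})$, and summing telescopes to
\[
\bE[f(S(p))] \geq f(\varnothing) + p^{d+1} (f(S) - f(\varnothing)) \geq p^{d+1} f(S),
\]
where the last inequality uses $f(\varnothing) \geq 0$ (and is an equality under the normalization assumption of Reduction~\ref{r:normalized}).

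The main obstacle is the direction of the inequality in the structural observation: one must be careful that removing non-dependencies \emph{increases} the marginal (not decreases it), which is exactly what the definition of $\DS{u_i}$ guarantees. Everything else is a clean telescoping argument combined with independence of the inclusions defining $S(p)$.
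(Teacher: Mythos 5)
Your proof is correct and follows essentially the same route as the paper's: an arbitrary ordering of $S$, a telescoping decomposition of $f(S(p))$ into marginals, and an indicator event (element plus its sampled dependencies) of probability at least $p^{d+1}$, with the key inequality $f(u_i \mid S(p) \cap S_{i-1}) \geq f(u_i \mid S_{i-1})$ justified exactly as in the paper by the definition of $\DS{u_i}$. The only differences are cosmetic: you condition only on the dependencies preceding $u_i$ in the order (slightly tighter than the paper's event, which includes all of $S \cap \DS{u_i}$), and you handle $f(\varnothing) \geq 0$ explicitly rather than invoking the normalization assumption.
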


\begin{proof}
For every element $u \in S$, let $X_u$ be an indicator for the event that $S \cap \DS{u} + u \in S(p)$. Clearly, $\Pr[X_u = 1] \geq p^{d + 1}$. Also, let $u_1, u_2, \dotsc, u_{|S|}$ denote an arbitrary order of the elements of $S$. Then,
\begin{align*}
	\mathbb{E}[f(S(p))]
	={} &
	\sum_{i = 1}^{|S|} \bE[f(\{u_i\} \cap S(p) \mid S(p) \cap \{u_1, u_2, \dotsc, u_{i - 1}\})]\\
	\geq{} &
	\sum_{i = 1}^{|S|} \bE[X_i \cdot f(u_i \mid \{u_1, u_2, \dotsc, u_{i - 1}\})]\\
	\geq{} &
	p^{d + 1} \cdot \sum_{i = 1}^{|S|} f(u_i \mid \{u_1, u_2, \dotsc, u_{i - 1}\})
	=
	p^{d + 1} \cdot f(S)
	\enspace,
\end{align*}
where the first inequality follows from the definition of $\DS{u}$.
\end{proof}

\begin{lemma} \label{lem:no_big_multiple}
If $m^* \leq f(OPT)/[256(d + 1)^2]$, then Algorithm~\ref{alg:multiple_elements_approximation} is $O(\beta)$-competitive.
\end{lemma}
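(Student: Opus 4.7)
The plan is to show that, conditional on Algorithm~\ref{alg:multiple_elements_approximation} executing its second option, the estimate $W/10$ fed to the aided algorithm $ALG$ is, with constant probability, a valid $80(d+2)^2$-estimation of the optimum $f(OPT_R)$ of the restricted instance $ALG$ actually runs on (ground set $\cN \setminus T$, matroid restricted accordingly); the $\beta$-competitiveness of $ALG$ then yields the claimed $O(\beta)$ ratio. Throughout I assume the algorithm selects the second option and that $m^* \leq f(OPT)/[256(d+1)^2]$, and at the end I absorb the probability $1/2$ of this selection.

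The first step is to relate $f(OPT_R)$ to $f(OPT)$. Since $OPT \setminus T$ is independent and contained in $\cN \setminus T$, one has $f(OPT_R) \geq f(OPT \setminus T)$. The set $T$ contains each element of $\cN$ independently with probability $1/(d+2)$ (first $X$ elements of a uniformly random order with $X \sim B(n,1/(d+2))$), so $OPT \setminus T$ is distributed exactly as $OPT(p)$ in the notation of Lemma~\ref{lem:sampling} with $p = (d+1)/(d+2)$. Lemma~\ref{lem:sampling} then gives $\bE[f(OPT \setminus T)] \geq ((d+1)/(d+2))^{d+1} \cdot f(OPT) \geq f(OPT)/e$. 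Applying Markov's inequality to the non-negative random variable $f(OPT)-f(OPT \setminus T)$, whose expectation is at most $(1-1/e)f(OPT)$, the event $E_2 = \{f(OPT_R) \geq f(OPT)/10\}$ satisfies $\Pr[E_2] \geq 1 - (10/9)(1-1/e) > 1/4$.

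The second step is to combine $E_2$ with the event $E_1 = \{W_\ell \geq f(OPT)/[8(d+2)^2]\}$, which by Lemma~\ref{lem:W_lower_bound} satisfies $\Pr[E_1] \geq 7/8$. A union bound yields $\Pr[E_1 \cap E_2] \geq 1/8$. On this joint event, $W/10 \geq f(OPT)/[80(d+2)^2] \geq f(OPT_R)/[80(d+2)^2]$ (using $f(OPT_R) \leq f(OPT)$), while Lemma~\ref{lem:W_upper_bound} combined with $E_2$ gives $W/10 \leq f(OPT)/10 \leq f(OPT_R)$. Hence $W/10$ is a valid $80(d+2)^2$-estimate of $f(OPT_R)$, so conditional on $E_1 \cap E_2$ the output of $ALG$ has expected value at least $f(OPT_R)/\beta \geq f(OPT)/(10\beta)$ over its internal randomness and the (uniform) order of $\cN \setminus T$. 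Multiplying by $\Pr[E_1 \cap E_2] \geq 1/8$ and by the probability $1/2$ of selecting the second option yields an unconditional expected output of $\Omega(f(OPT)/\beta)$, establishing $O(\beta)$-competitiveness.

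The main obstacle to watch is the upper bound $W/10 \leq f(OPT_R)$ for validity of the estimate: Lemma~\ref{lem:W_upper_bound} only guarantees $W \leq f(OPT)$, and a priori $f(OPT_R)$ could be much smaller than $f(OPT)$ (since $T$ is large in expectation). Bridging this gap is exactly what the Lemma~\ref{lem:sampling} plus Markov argument does, producing the constant-factor lower bound $f(OPT_R) \geq f(OPT)/10$ that is needed with constant probability; all other constants absorb comfortably into the final $O(\beta)$ bound.
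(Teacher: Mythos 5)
Your proposal is correct and follows essentially the same route as the paper's proof: sample-based lower bound on the restricted optimum via Lemma~\ref{lem:sampling} (with the reverse-Markov step, which the paper does via boundedness of $f(OPT(R))$ by $f(OPT)$), the bounds on $W_\ell$ from Lemmata~\ref{lem:W_upper_bound} and~\ref{lem:W_lower_bound}, a union bound, and then invoking the aided algorithm's guarantee on the instance restricted to $\cN \setminus T$. The only differences are immaterial constants (e.g., your $1/8$ versus the paper's $1/24$ for the joint event).
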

\begin{proof}
Throughout this proof we assume that Algorithm~\ref{alg:multiple_elements_approximation} applies the second option. This event happens with probability $1/2$, thus, it is enough to prove that Algorithm~\ref{alg:multiple_elements_approximation} is $O(\beta)$-competitive given this event.

Let $R = \cN \setminus T_{\ell}$ be the set of elements that are not observed by Algorithm~\ref{alg:multiple_elements_approximation}. By Lemma~\ref{lem:sampling}:
\begin{align*}
	\bE[f(OPT \cap R)]
	={} &
	\bE\left[f\left(OPT\left(1 - \frac{1}{d + 2}\right)\right)\right]\\
	\geq{} &
	\left(1 - \frac{1}{d + 2}\right)^{d + 1} \cdot f(OPT)
	\geq
	e^{-1} \cdot f(OPT)
	\enspace.
\end{align*}
Hence,
\[
	\bE[f(OPT(R))]
	\geq
	\bE[f(OPT \cap R)]
	\geq
	\frac{f(OPT)}{e}
	\geq
	\frac{f(OPT)}{4}
	\enspace,
\]
where $OPT(R)$ is the independent subset of $R$ maximizing $f$. Since $f(OPT(R))$ is always upper bounded by $f(OPT)$, this implies the following claim:
\[
	\Pr\left[f(OPT(R)) \geq \frac{f(OPT)}{10}\right]
	\geq
	\frac{1}{6}
	\enspace.
\]

On the other hand, by Lemma~\ref{lem:W_lower_bound}, $W_\ell \geq f(OPT) / [8(d + 2)^2]$ with probability at least $7/8$. Hence, by the union bound we have:
\[
	W_\ell \geq \frac{f(OPT)}{8(d + 2)^2}
	\qquad
	\text{and}
	\qquad
	f(OPT(R)) \geq \frac{f(OPT)}{10}
\]
with probability at least $1/24$. To complete the proof it is enough to show that $ALG$ is $O(\beta)$-competitive when the last two inequalities hold. This follows from the definition of $ALG$ when $\opt_{80(d + 2)^2}$ is a valid approximation for $f(OPT(R))$, \ie, when we have $f(OPT(R)) / [80(d + 2)^2] \leq \opt_{80(d + 2)^2} \leq f(OPT(R))$. Thus, in the rest of the proof we prove these inequalities:
\[
	\opt_{80(d + 2)^2}
	=
	\frac{W_\ell}{10}
	\leq
	\frac{f(OPT)}{10}
	\leq
	f(OPT(R))
	\enspace,
\]
where the first inequality follows from Lemma~\ref{lem:W_upper_bound}. On the other hand,
\[
	f(OPT(R))
	\leq
	f(OPT)
	\leq
	8(d + 2)^2W_\ell
	=
	80(d + 2) \cdot \opt_{80(d + 2)^2}
	\enspace.
	\qedhere
\]
\end{proof}

Note that Theorem~\ref{thm:multiple_elements_approximation} follows immediately from Lemmata~\ref{lem:big_multiple} and~\ref{lem:no_big_multiple}.

\end{document}